\newcommand{\kibitz}[2]{\ifnum\Comments=1{\color{#1}{#2}}\fi}
\newcommand{\hmadd}[1]{{\color{black}{#1}}}
\newcommand{\kibitzTODO}[2]{\ifnum\TODO=1{\color{#1}{#2}}\fi}
\newcommand{\todo}[1]{\kibitzTODO{auburn}{[TODO: #1]}}
\colorlet{darkblue}{blue!40!black}
\definecolor{auburn}{rgb}{0.43, 0.21, 0.1}
\definecolor{orange}{rgb}{1, 0.5, 0}
\definecolor{lightblue}{rgb}{0.1176, 0.5647, 1}
\theoremstyle{plain}
\theoremstyle{definition}
\newtheorem{definition}{Definition} 
\newtheorem{example}{Example}
\newtheorem{lemma}{Lemma}
\newtheorem{proposition}{Proposition}
\newcommand{\loc}{\calL}								%
\newcommand{\numLoc}{\ell}						%
\newcommand{\rate}{p}									%
\newcommand{\dist}{\delta}							%
\newcommand{\minDist}{\underline{\dist}}	%
\newcommand{\numRiders}{\mu}				%
\newcommand{\numDrivers}{\lambda}		%
\newcommand{\patience}{P}								%
\newcommand{\cost}{c}									%
\newcommand{\mechCost}{c_p}	%
\newcommand{\gb}{w}									%
\newcommand{\maxJ}{{i^\ast}}						%
\newcommand{\numSkip}{n}
\newcommand{\waitGap}{\tau}
\newcommand{\queue}{q}
\newcommand{\Queue}{Q}
\newcommand{\maxQL}{\bar{\Queue}}
\newcommand{\equQL}{\Queue^\ast}
\newcommand{\sigmast}{\sigma^\ast}
\newcommand{\alphast}{\alpha^\ast}
\newcommand{\betast}{\beta^\ast}
\newcommand{\gammast}{\gamma^\ast}
\newcommand{\util}{u}
\newcommand{\Util}{U}
\newcommand{\equtil}{\util^\ast}
\newcommand{\eqUtil}{\Util^\ast}
\newcommand{\pist}{\pi^\ast}
\newcommand{\pihat}{\hat{\pi}}
\newcommand{\completion}{z}
\newcommand{\history}{h}
\newcommand{\mech}{\calM}
\newcommand{\rev}{R}
\newcommand{\tp}{T}
\newcommand{\ubar}[1]{\underaccent{\bar}{#1}}
\newcommand{\binLB}{\ubar{b}}
\newcommand{\binUB}{\bar{b}}
\newcommand{\numBins}{m}
\newcommand{\maxInBin}{\bar{j}}
\newcommand{\minInBin}{\underline{j}}
\newcommand{\supPar}[1]{^{(#1)}}
\newcommand{\1}{^{(1)}}
\newcommand{\2}{^{(2)}}
\newcommand{\supk}{^{(k)}}
\newcommand{\supkmo}{^{(k-1)}}
\newcommand{\supkpo}{^{(k+1)}}
\newcommand{\supkprime}{^{(k')}}
\newcommand{\supK}{^{(\numBins)}}
\newcommand{\supKmo}{^{(\numBins-1)}}
\newcommand{\fb}{_{\mathrm{FB}}}
\newcommand{\strict}{_{\mathrm{strict}}}
\newcommand{\rand}{_{\mathrm{rand}}}
\newcommand{\direct}{_{\mathrm{direct}}}
\newcommand{\tpRand}{s}
\newcommand{\meangb}{\bar{\gb}}
\newcommand{\bestJ}{{j^\ast}}
\newcommand{\subok}{_{1:k}}
\newcommand{\subokmo}{_{1:k-1}}
\newcommand{\subkk}{_{k:k}}
\newcommand{\goft}{g}
\newcommand{\dd}{\mathrm{d}}
\newcommand{\qtilde}{\tilde{\queue}}
\newcommand{\mutilde}{\tilde{\mu}}
\newcommand{\etatilde}{\tilde{\eta}}
\newcommand{\calM}{\mathcal{M}}			
\newcommand{\calL}{\mathcal{L}}
\newcommand{\setZ}{\mathbb{Z}}
\newcommand{\txtif}{~\mathrm{if}~}
\newcommand{\txtst}{~\mathrm{s.t.}~}
\newcommand{\txtand}{~\mathrm{and}~}
\renewcommand{\th}{^{\mathrm{th}}}
\newcommand{\pwfun}[1]{\left\lbrace \begin{array}{ll} #1 \end{array} \right.}
\newcommand{\one}[1]{\mathds{1} \{ #1\}}
\newcommand{\E}[1]{\mathbb{E}\left[ #1 \right]}
\newcommand{\var}[1]{\mathrm{Var}\left( #1 \right)}
\newcommand{\ii}{_{i,i+1}}
\newcommand{\ot}{_{1,2}}
\title{Randomized FIFO Mechanisms%
\thanks{The authors would like to thank Nick Arnosti, Achal Bassambsoo, Omar Besbes, Yeon-Koo Che, Peter Cohen, Jose Correa, John Dickerson, Amos Fiat, Daniel Freund, Sergey Gitlin, Srikanth Jagabathula, Yash Kanoria, Cinar Kilcioglu, Thodoris Lykouris, Jake Marcinek, Eoin O'Mahony, David Parkes, Scott Rodilitz, Garrett van Ryzin, Lior Seeman, James Shummer, Nicolas Stier, Carmen Wang, Adam Wierman, Zhixi Wan, and participants at Uber Marketplace Matching Science Deep Dive, INFORMS 2020, Simons Institute %
Matching-Based Market Design Reunion, Columbia DRO Brown Bag seminar, Google Algorithms Workshop, the 6th Marketplace Innovation Workshop, Design of Online Platforms Workshop at EC'21, MSOM Service Management Sig 2021, Fall 2021 NBER Market Design Working Group Meeting, the Harvard EconCS seminar, and the Foster School of Business ISOM seminar for helpful comments and discussions. %
}
}
\author{Francisco Castro%
\thanks{UCLA Anderson School of Management. 110 Westwood Plaza, room B505, Los Angeles, CA 90095, USA. Email: francisco.castro@anderson.ucla.edu.}
\and
Hongyao Ma%
\thanks{Decision, Risk, and Operations Division, Columbia Business School. 423 Uris Hall, 3022 Broadway, New York, NY, 10027, USA. Email: hongyao.ma@columbia.edu.} 
\and Hamid Nazerzadeh%
\thanks{Uber Technologies, Inc. and USC Marshall School of Business. Bridge Memorial Hall, University of Southern California, Los Angeles, CA 90089  USA. Email: nazerzad@usc.edu. %
}
\and Chiwei Yan%
\thanks{Department of Industrial and Systems Engineering, University of Washington. 3900 E Stevens Way NE, Seattle, WA 98195, USA. Email: chiwei@uw.edu.}
}
\begin{document}

\maketitle

\begin{abstract}
We study the matching of jobs to workers in a queue, e.g. a ridesharing platform dispatching drivers to pick up riders at an airport. Under FIFO dispatching, the heterogeneity in trip earnings incentivizes drivers to cherry-pick, increasing riders' waiting time for a match and resulting in a loss of efficiency and reliability. We first present \emph{the direct FIFO mechanism}, which offers lower-earning trips to drivers further down the queue. The option to skip the rest of the line incentivizes drivers to accept all dispatches, but the mechanism would be considered \emph{unfair} since drivers closer to the head of the queue may have lower priority for trips to certain destinations. To avoid the use of unfair dispatch rules, we introduce a family of \emph{randomized FIFO mechanisms}, which send declined trips gradually down the queue in a randomized manner. %
We prove that a randomized FIFO mechanism achieves the first best throughput and the second best revenue in equilibrium. Extensive counterfactual simulations using data from the City of Chicago demonstrate substantial improvements of revenue and throughput, highlighting the effectiveness of using waiting times to align incentives and reduce the variability in driver earnings. %
\end{abstract}

\section{Introduction} \label{sec:intro}

Matching marketplaces play an instrumental role in economic exchanges and the allocation of public and private resources.  %
Over the past decade, the rise of online platforms connecting people with gig workers
has also radically changed many aspects of our daily lives. 
To improve efficiency and reduce waiting times, platforms often aim to match rider or grocery delivery trips with the closest available drivers. When requests are concentrated in space, however, matching by proximity has unintended consequences. 
As an example, %
Amazon drivers have been reportedly hanging their smartphones in trees near Amazon delivery stations and Whole Foods stores, in order to %
appear even closer and gain higher priority for job offers.\footnote{\url{https://www.bloomberg.com/news/articles/2020-09-01/amazon-drivers-are-hanging-smartphones-in-trees-to-get-more-work}, accessed 09/07/2020.}
A similar problem existed for Uber and Lyft at airports and event venues.%
\footnote{Airport trips account for 15\% of Uber's %
gross bookings. See Form S-1 of Uber's IPO filing: \url{https://www.sec.gov/Archives/edgar/data/1543151/000119312519103850/d647752ds1.htm}.}
Matching riders to the closest drivers incentivizes drivers to get as close to the terminal or venue as possible, leading to traffic congestion.%
\footnote{\url{https://www.vice.com/en/article/gvy357/the-new-system-uber-is-implementing-at-airports-has-some-drivers-worried}, accessed 02/23/2021.}%

Many ridesharing platforms now maintain \emph{virtual queues} at airports for drivers who are waiting in designated areas, and dispatch drivers from the queue in a first-in-first-out (FIFO) manner.%
\footnote{\url{https://help.lyft.com/hc/en-us/articles/115012922787-Receiving-Airport-FIFO-pickup-requests}, \url{https://www.uber.com/us/en/drive/dayton/airports/day/}, accessed 02/18/2021.}
This resolves the congestion issues and %
is also considered more fair by many since drivers who have waited the longest in the queue are now the first in line to receive trip offers.
At major U.S. airports, however, a driver at the head of the queue will receive the next trip offer in a few seconds under FIFO dispatching, if she declines an offer from the platform (see Figure~\ref{fig:ohare_how_trips_from_airport}). As we shall see, this lowered cost of cherry-picking substantially exacerbates existing problems on incentive alignment. %

\newcommand{\heatMapHeight}{2.2}
\newcommand{\subfigWidthTwo}{0.49}

\begin{figure}[t!]
\centering
\begin{subfigure}[t]{\subfigWidthTwo \textwidth}
	\centering    \includegraphics[height=\heatMapHeight in]{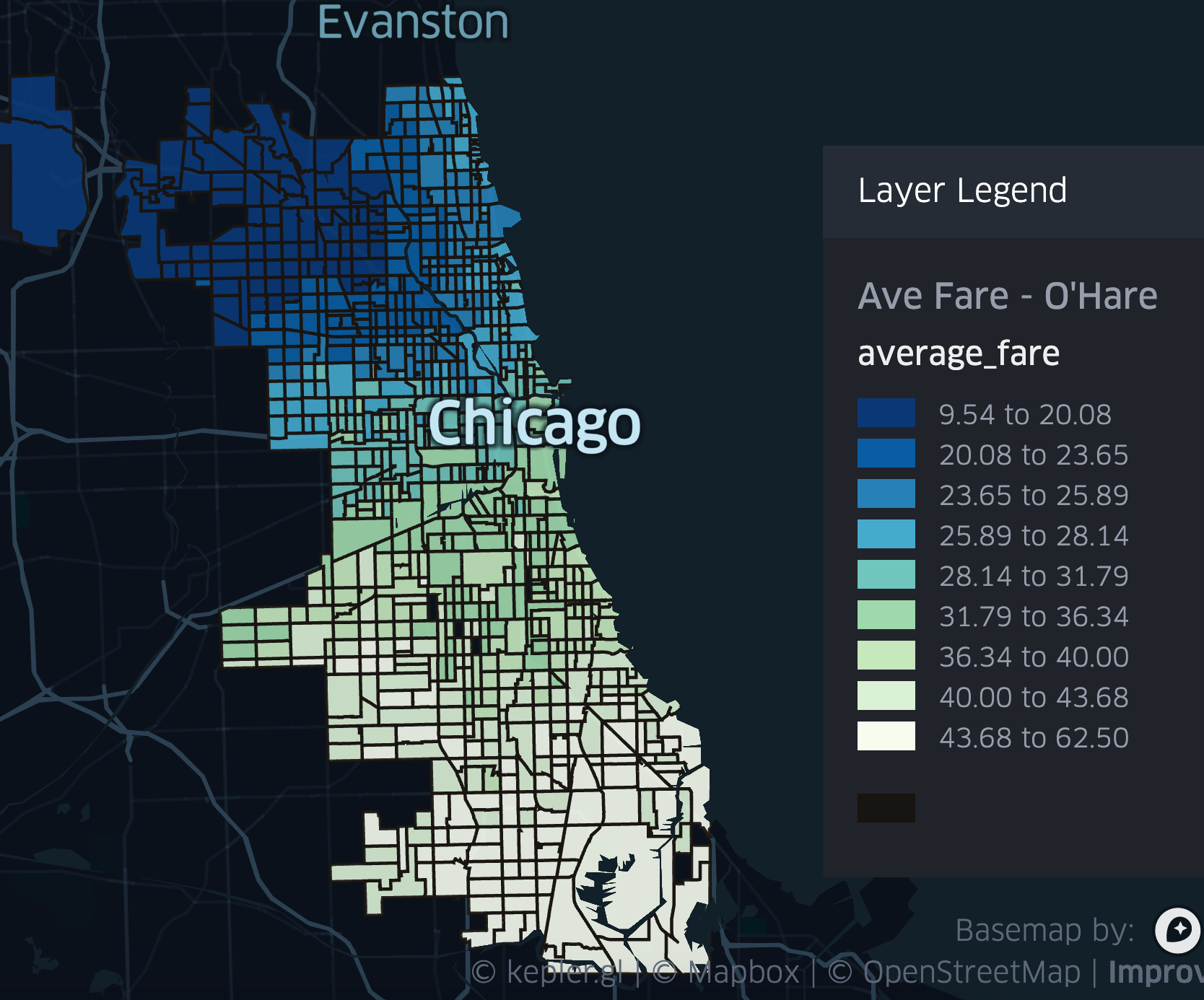}
    \caption{Trips from O'Hare. \label{fig:heatmap_ohare_average_fare}}
\end{subfigure}%
\begin{subfigure}[t]{\subfigWidthTwo \textwidth}
	\centering
    \includegraphics[height=\heatMapHeight in]{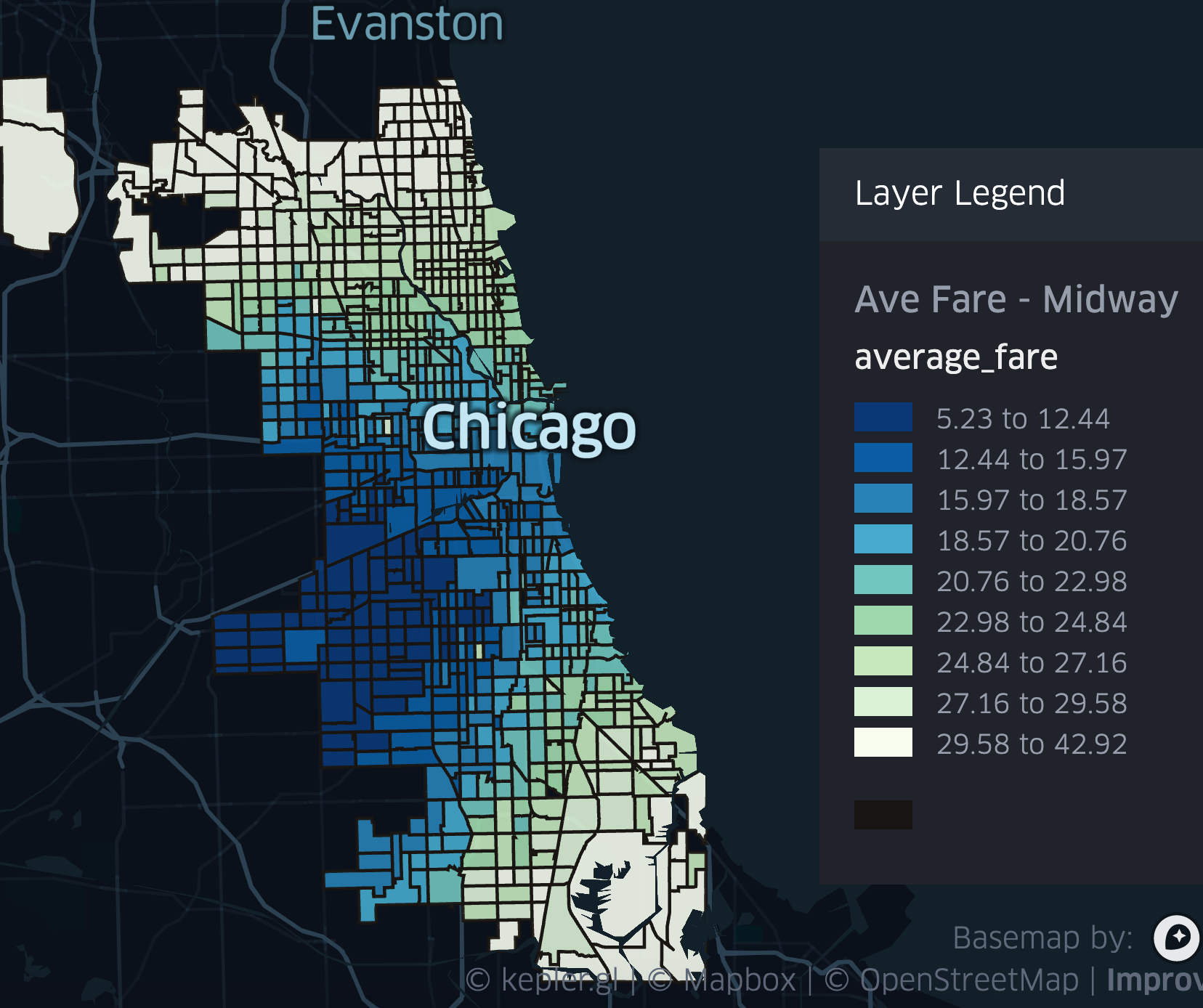}
    \caption{Trips from Midway. \label{fig:heatmap_midway_average_fare}}
\end{subfigure}%
\caption{Average trip fare by destination Census Tract in Chicago, for trips originating from the O'Hare International Airport and the Midway International Airport. %
See Section~\ref{sec:simulations} for more details. \label{fig:average_fare_from_chicago_airports}}
\end{figure}

Figure~\ref{fig:average_fare_from_chicago_airports} shows the average trip fare by destination Census Tract in Chicago, for trips originating from the O'Hare and Midway airports. A short trip from O'Hare to a nearby area pays an average of \$10-\$20, but a long trip can pay %
an average of \$60. During busy hours, instead of accepting an average trip, drivers who are close to the head of the queue are better off declining most trip offers and waiting for only the highest earning trips.
Riders, however, have finite patience, despite being willing to wait for some time for a match. When each driver decline takes an average of 10 seconds, 2 minutes had passed after a trip with low or moderate earnings (e.g. trips to downtown Chicago) was offered to and declined by the top 12 drivers in the queue.%
\footnote{\label{fn:offer_card_15_seconds}%
When offered a trip, Uber and Lyft drivers have 15 seconds to decide whether to accept.  \url{https://help.uber.com/driving-and-delivering/article/getting-a-trip-request?nodeId=e7228ac8-7c7f-4ad6-b120-086d39f2c94c}, \url{https://help.lyft.com/hc/en-/articles/115013080028-How-to-give-a-Lyft-ride}, accessed 02/24/2021.}
At this point, it is very likely that the rider cancels her trip request, not knowing when a driver will be assigned, if at all.

What we have seen is that in the presence of heterogeneous earnings and finite rider patience, trips with moderate or low earnings never reach drivers in the queue who are willing to accept them. This undercuts the platforms' mission of providing reliable transportation for riders, and leads to low revenue and trip throughput for the platform. Moreover, fulfilling only the small number of high earning trips is also a poor outcome for the drivers, since many drivers who just dropped off a rider at the airport will have to relocate back to the city with an empty car, and those who do join the queue would need to wait for a very long time for a ride. %

\medskip

Simple fixes by limiting dispatching transparency or drivers' flexibility are not desirable--- in recent years, ridesharing platforms are moving towards sharing trip destination and earnings estimation upfront, as well as providing drivers the options to accept or decline any trips without penalties.%
\footnote{\label{fn:prop22}%
The specific policies and their implementations vary across companies and geographical regions. As an example, see \url{https://www.uber.com/blog/california/keeping-you-in-the-drivers-seat-1/} (accessed 02/21/2021). This is in part due to regulatory requirements for categorizing drivers as independent contractors. For context, we cite an excerpt from California Proposition 22: ``The network company does not require the app-based driver to accept any specific rideshare service or delivery service request as a condition of maintaining access to the network company's online-enabled application or platform." %
}
Hiding information or imposing penalties are not fully effective either. For example, experienced drivers often call riders to ask about trip details when destinations are hidden %
before the pick-up~\citep{cook2018gender}. %
Forcing drivers to accept every dispatch %
improves reliability in the short run, but also imposes a lottery (with possible outcomes ranging from \$9 to over \$60) on drivers who might have waited for two hours in line. Such high variance in earnings discourages future engagement, and leads to drivers' churning from the platform in the long run. %
\todo{Insert the article on ``open access'' for all.}

Recognizing the inefficiencies under FIFO dispatching, alternative mechanisms have been studied extensively in the literature. In particular, last-in-first-out (LIFO) dispatching is shown to be optimal
in the presence of waiting costs, %
with or without heterogeneous rewards~\citep{hassin1985optimality,su2004patient}. Intuitively, participants' losing (instead of gaining) priority over time substantially reduces the incentive to ``wait for a better offer''. %
However, LIFO dispatching is perceived as %
``blatantly unfair'' by many~\citep{su2004patient,breinbjerg2016strategic}. Moreover, as discussed by \citet{hassin1985optimality} and \citet{su2004patient}, %
LIFO dispatching is easy to manipulate since participants may rejoin the queue at the end to (re)gain priority.
This renders LIFO unsuitable and ineffective for ridesharing platforms, as drivers have the option to go offline and online again to rejoin the end of the virtual queue at any time.

Ideally, platforms may properly price trips by destination %
and eliminate drivers' incentives to cherry-pick. In recent work, \citet{ma2019spatio} propose the spatio-temporal pricing mechanism, which is welfare optimal, incentive aligned, and guarantees that drivers at the same origin are indifferent towards trips to all destinations. 
This remains an idealized target for our current setting, but is hard to achieve in practice. %
Consider, again, the O'Hare example as shown in Figure~\ref{fig:heatmap_ohare_average_fare}. 
Tripling the fares of the short trips to match the earnings from the long trips is suboptimal. On the other hand, the platform is unable to decrease driver payouts below some pre-determined per-minute and per-mile rates, thus earnings from long trips cannot be effectively reduced either. %

\subsection{Our Results}

\todo{Right now this is the only place where we mentioned ``not changing prices''.}

We study the dispatch of trips to drivers who are waiting in a virtual queue, where some trips are necessarily more lucrative than the others due to operational constraints. Without the power to adjust trip prices, the mechanisms we design %
use drivers' waiting times in the queue to %
align incentives, improve reliability and efficiency, and reduce the variability in drivers' total payoffs.

\paragraph{The model.}

We study a continuous time, non-atomic model, with one origin (e.g. the airport) %
and stationary arrival rates of riders and drivers. Riders request trips to a %
number of destinations with heterogeneous earnings for drivers. %
Upon the arrival of each rider, or after a rider's trip request is declined, the platform offers the rider's trip to a driver in the queue. %
Riders are willing to wait for some time %
for a match, but have finite patience and will cancel their requests %
after a certain number of declines from drivers. Drivers' waiting in the queue (as opposed to driving elsewhere in the city, for example) is costly for both the drivers and the platform. Drivers are strategic, aiming to optimize their total payoff, i.e. the earnings from trips minus the waiting costs they incur. %

We study mechanisms that are fully \emph{transparent} and \emph{flexible}  (see Footnote~\ref{fn:prop22}). At any point in time, drivers know about the supply, demand, the length of the queue and their positions in their queue. When offered trip requests, drivers are provided trip destinations and earnings \emph{upfront} so that they can decide whether to accept based on this information. %
Moreover, drivers are not penalized for any actions they take, and have the flexibility to (i) decline any number of trip dispatches %
without losing their positions in the queue, (ii) rejoin the virtual queue at the tail at any point of time, and (iii) decide to not join the queue upon arrival, or leave the queue at any point of time to perhaps relocate back to the city without a rider.

\paragraph{Main results.}
To optimize trip throughput and the platform's net revenue (i.e. total earnings from completed trips, minus the opportunity costs the platform incurs due to drivers' waiting in the queue), %
the \emph{first best} outcome has no driver in the queue, and dispatches all drivers upon arrival to destinations in decreasing order of earnings. %
However, under the status quo \emph{strict FIFO dispatching} where trips are dispatched to each and every driver starting from the head of the queue, drivers close to the head of the queue are incentivized to %
cherry-pick and wait for higher-earning trips. We analyze the equilibrium outcome under strict FIFO and show that with finite rider patience, most trips except for the highest earning ones become unfulfilled. %
Drivers' excessive waiting in the queue further reduces drivers' total payoffs as well as the net revenue of the platform.

Recognizing that the moderate and low earning trips never reach drivers in the queue who are willing to accept them, we  first present %
the \emph{direct FIFO mechanism}, which offers lower-earning trips directly to drivers further down the queue. 
We prove that %
accepting all dispatches forms a subgame perfect equilibrium among drivers, and that the equilibrium outcome achieves the first best trip throughput, and \emph{the second best net revenue} (i.e. the highest steady state net revenue achievable by any flexible and transparent mechanism). 
The direct FIFO mechanism, however, would be considered \emph{unfair} in practice since a driver may have lower priority for trips to many destinations than drivers %
further down the queue, even when all drivers are non-strategic and accept every dispatch. %
Consider the Chicago Midway airport (Figure~\ref{fig:heatmap_midway_average_fare}) as an example. A driver close enough to the head of the queue will no longer receive any trip back to downtown Chicago, since direct FIFO skips drivers at the head of the queue when dispatching lower and moderate earning trips, and all high-earning trips the driver may receive will be heading to the suburbs.

To achieve optimal throughput and revenue without the use of an unfair dispatch rule, %
we introduce a family of \emph{randomized FIFO mechanisms}. 
A randomized FIFO mechanism is specified by a set of ``bins'' in the queue %
(e.g., the top 10 positions, the $10\th$ to $20\th$ positions, and so on). %
Each trip request is first offered to a driver in the first bin uniformly at random. %
After each decline, the mechanism then offers the trip to a random driver in the next bin.
By sending trips gradually down the queue in this randomized manner, the randomized FIFO mechanisms appropriately align incentives using waiting times, achieving the first best throughput and second best net revenue: %
the option to skip the rest of the line incentivizes drivers further down the queue to accept trips with lower earnings; randomizing each dispatch among a small group of drivers %
increases each individual driver's waiting time for the next dispatch, thereby allowing the mechanism to prioritize drivers closer to the head of the queue for trips to every destination without creating incentives for excessive cherry-picking.

Extensive counterfactual simulations using %
data from the City of Chicago suggest that in comparison to %
strict FIFO dispatching, the randomized FIFO mechanism achieves substantial improvements in revenue, throughput, and driver earnings. Moreover,  %
the variance in drivers' total payoffs is small, and diminishes rapidly as riders' patience increases--- %
with higher rider patience, the mechanism can more effectively match higher-earning trips with drivers who have incurred higher waiting costs in the queue. This demonstrates the desirable balance achieved by the randomized FIFO mechanisms between efficiency, reliability, fairness, and the variability in driver earnings, and highlights the effectiveness of using waiting times in queue to align incentives and to reduce earning inequity when the flexibility to set prices is limited due to operational constraints.

\subsection{Related Work} \label{sec:related_work}

\paragraph{Ridesharing platforms.} 

The literature on pricing and matching in ridesharing platforms is rapidly growing. \citet{castillo2017surge} and \citet{yan2020dynamic} establish the importance of dynamic pricing in maintaining the spatial density of open driver supply, which reduces waiting times and improves operational efficiency. 
In the presence of spatial imbalance and temporal variation of supply and demand, \citet{bimpikis2019spatial} and \citet{besbes2020surge} study revenue-optimal pricing; %
\citet{ma2019spatio} propose origin-destination based pricing that is appropriately smooth in space and time, achieving welfare optimality and incentive compatibility; \citet{garg2019driver} show that additive instead of multiplicative ``surge'' pricing is more incentive aligned for drivers when prices need to be origin-based only. 
Considering the online arrival of supply and demand and their distribution in space, \citet{kanoria2020blind}, \citet{qin2020ride} and \citet{ozkan2020dynamic} study dynamic matching policies that dispatch drivers from areas with relatively abundant supply, and \citet{ashlagi2019edge}, \citet{dickerson2018allocation} and \citet{aouad2020dynamic} focus on the online matching between riders and drivers and the pooling of shared rides. 
In this work, we focus on a single origin where the optimal destination-based pricing is infeasible due to operation constraints such that some trips are necessarily more lucrative than the others. This leads to the need of using drivers' waiting times to align incentives and to reduce the variability in driver earnings.

The operation of ridesharing platforms is also studied using queueing-theoretic models.
\citet{BanJohRiq2015Pricing} compare optimal dynamic and static pricing policies;  
\citet{banerjee2018state} propose state-dependent dispatching policies to minimize %
unfulfilled demand; \citet{afeche2018ride} study the impact of %
admission control on platform revenue and driver income; 
\citet{besbes2018spatial} show that in comparison to traditional service settings, higher capacity is needed when spatial density of available supply affects operational efficiency; 
\citet{castro2020matching} study %
practical dispatching policies when drivers have heterogeneous compatibility with trips.
These works use queueing-theoretic frameworks %
to analyze the availability of driver supply, but study settings where drivers are spread out in space, and do not consider cherry-picking by drivers. %
In contrast, we focus on %
the matching of trips to drivers who are waiting in a \emph{virtual queue}, addressing the problem of dispatching heterogeneous trips to drivers who have incurred different waiting costs in a way that is reliable, efficient and fair.

Various empirical studies analyze the Uber platform as a two-sided marketplace, focusing on the labor market of Uber drivers~\citep{hall2017labor}, the longer-term labor market equilibration~\citep{hall2016analysis}, the value of flexible work arrangements~\citep{chen2019value}, learning-by-doing and the gender earnings gap~\citep{cook2018gender}, and the surplus of consumers~\citep{cohen2016using}. In regard to the dynamic ``surge'' pricing, \citet{hall2015effects}, \citet{chen2015dynamic}, and \citet{lu2018surge} demonstrate its effectiveness in improving reliability and efficiency, increasing driver supply during high-demand times, as well as incentivizing drivers to relocate to higher demand areas. 
In contrast, we use data from ridesharing platforms (including Uber and Lyft, made public by the City of Chicago) to %
estimate %
the heterogeneity in driver earnings by trip destination. We also demonstrate via 
counterfactual simulations the inefficiencies of %
FIFO dispatching when drivers are strategic, as well as the substantial improvements achieved by our proposed mechanisms.

\paragraph{%
Queueing mechanisms.}

The allocation of resources or jobs to participants waiting in a queue has been studied extensively in the literature. %
\citet{naor1969regulation} first demonstrates the negative externalities from waiting: %
when agents make self-interested decisions on whether to join a FIFO queue, in equilibrium more agents line up in the queue in comparison to the socially optimal outcome. 
When monetary transfers are allowed, Naor shows that %
the optimal outcome can be achieved by levying an entrance toll, and a large body of subsequent work has studied how to align incentives and improve system efficiency in various settings (see \citet{hassin2016rational} for a comprehensive review). In many practical settings including ours, however, the use of monetary incentives is restricted due to %
regulatory or business constraints.

Without the use of monetary transfers, \citet{hassin1985optimality} shows that the last-in-first-out (LIFO) queueing discipline achieves the socially optimal outcome in equilibrium, since when the agent who has waited the longest in the queue decides whether to leave, she imposes no externality on any current or future agents. 
With homogeneous agents who prefer items of higher quality (i.e. when all patients prefer kidneys from younger and healthier donors in the context of kidney transplantation), \citet{su2004patient} demonstrate the excessive organ wastage resulting from patients' cherry-picking under FIFO, and proves that LIFO dispatching optimizes organ utilization. %
These works highlight the important role of the queueing discipline in shaping participants' strategic considerations. As %
is discussed in these papers, however, LIFO is practically infeasible since the dispatch rule (i) would be perceived as unfair, and (ii) can be easily manipulated by re-joining the queue. %
In this work, we propose practical mechanisms that %
allow drivers to decline dispatches and to re-join %
the queue at any point of time. %
Moreover, we model the fact that riders' finite patience limits the number of times a trip can be dispatched, %
and prove that no transparent and flexible mechanism can achieve a better outcome than ours even when assuming infinite rider patience.%

\hmadd{On the flip side, \citet{che2021optimal} establish the optimality of FIFO when the planner has full flexibility to %
(i) prevent participants from joining the queue and remove participants from the queue, and (ii) design the information %
provided to the participants. 
The objective is to optimize a weighted sum of %
the participants' utility and the service provider's profit. 
Intuitively, when the planner has the power to ensure that the queue is not too long, FIFO dispatching is the most effective since it provides the strongest incentive for participants to join %
and to stay in the queue. 
}

\citet{su2006recipient} and \citet{ashlagi2020optimal} study settings %
where %
an agent's value for an item depends on the type of the item %
and the private type of the agent.
\citet{su2006recipient} design disjoint queue mechanisms that optimize either efficiency or equity (i.e. the minimum utility across all agent types). %
Assuming that %
the value for a match is supermodular in %
the types of the agent and the item, \citet{ashlagi2020optimal} %
establishes that a monotone disjoint queue mechanism is welfare-optimal. In both settings, %
agents cannot decline the allocated items.%
\footnote{The same optimal outcome in \citet{ashlagi2020optimal} can also be achieved by a FIFO queue that allows agents to decline undesired items, %
assuming that the items are infinitely patient, and that the mechanism %
does not have to reveal full information on the offered items to the agents. %
}
Therefore, %
the mix of items dispatched to each queue effectively determines a %
lottery over items, and the %
waiting times in the different queues function as prices %
and incentivize an agent to %
choose the lottery intended for her type.  
In contrast, instead of %
eliciting private information, we focus on %
improving reliability without using penalties %
or hiding information. %
Our mechanism effectively dispatches every trip according to ``a sequence of lotteries over positions in the queue'', %
aligning incentives using (i) the option to skip the rest of the line and (ii) the additional cost of cherry-picking introduced by randomization. %

Existing work also compare FIFO and randomized allocation rules in various settings. Assuming an overloaded queue with fixed length, \citet{bloch2017dynamic} show that agents in the queue %
prefer FIFO, %
but randomizing offers among all agents in the queue reduces waste, thus improves turnover and benefits agents who are not yet in the queue. %
Also assuming an overloaded queue, %
\citet{leshno2019dynamic} focuses on  
inefficiencies arising from the ``mismatch'', i.e. agents accepting their less preferred item 
since the wait for the more preferred item is too long. 
In a buffer queue for agents who have declined a less preferred item, %
randomizing offers reduces the variability of the expected waiting time for the more preferred item and %
reduces %
mismatches compared to FIFO. 
When agents have heterogeneous preferences over affordable housing developments, \citet{arnosti2020design} %
prove that ``individual lotteries'' (one for each development) achieves the same %
outcome as a ``wait-list without choice'', both compelling agents to accept poor matches. %
More choices (via e.g., wait-list with choice) leads to better matching, but the authors also establish a trade-off between matching and targeting 
agents with worse outside options. 
In all three settings, the randomization is among all agents in the queue. %
In contrast, our proposed mechanisms randomize each dispatch among drivers from a small segment in the queue, %
which increases the costs of cherry-picking without introducing excessive variability in drivers' total payoffs. 

\hmadd{In this work, we focus on settings where participants have the flexibility to decline dispatches without losing their positions in the queue. \citet{schummer2021influencing} analyze the impact of limiting this ``deferral right'' for various settings, where participants are risk averse or discount the future. 
}

\section{Preliminaries} \label{sec:preliminaries}

\todo{HN: I think we should comment here about the generality of our model, and then focus on the specific application.}

We study a continuous time model, with one origin (e.g. an airport) where trips are dispatched to drivers who are waiting in a queue. $\loc = \{1, 2, \dots, \numLoc\}$ denotes the set of $\numLoc \in \setZ_{>0}$ discrete trip types (e.g. trips to different destinations). Rider demand and driver supply are non-atomic and are stationary over time. For each location $i \in \loc$, $\numRiders_i > 0$ denotes the arrival rate of riders requesting trips to location $i$ (i.e. the mass of riders arriving per unit of time). %
Upon arrival, riders' trip requests need to be dispatched to the drivers. All riders have a \emph{patience level} of $\patience \in \setZ_{> 0}$, meaning that a rider may be willing to wait for a while for a driver to accept her trip request, %
but she will cancel her request and leave after the $\patience \th$ time that her trip is declined by the drivers. %
Each driver can drive any rider to her destination, and riders do not have preferences over drivers.

Let $\numDrivers > 0$ be the arrival rate of drivers. %
Upon arrival, the driver may decide whether to join the queue. %
The \emph{net earnings} of a trip to each location $i \in \loc$ is $\gb_i$, %
meaning that a driver who completes a rider trip to location $i$ gets a payoff of $\gb_i$ from the trip, %
and the payoff of a driver who does not join the queue or leaves the queue without a rider is normalized to be zero. %
For each unit of time a driver spends waiting in the queue, the driver incurs an opportunity cost of $\cost > 0$, and the platform incurs an opportunity cost of $\mechCost \in [0, \cost]$.%
\footnote{The opportunity costs for drivers captures the value of their forgone outside options, which include, for example, the potential earnings a driver can make from driving elsewhere in the city for the same platform instead of waiting in the queue. %
Having driver supply tied-up in the queue is thereby potentially costly for the platform as well.}%
\footnote{Drivers who are waiting in the queue may not drive for the same platform at all times, and the market might be oversupplied already. %
As a result, the opportunity cost for the platform $\mechCost$ may be lower than that for the drivers.
}
Drivers are strategic, aim to optimize their earnings from trips minus their waiting costs, and do not have preferences over riders or destinations.

\medskip

An informal timeline of a dispatching mechanism is as follows (see Section~\ref{sec:direct_FIFO} for the formal definition). Upon the arrival of each rider, the %
mechanism may dispatch the rider's trip request to a driver in the queue. If the driver accepts the dispatch, she leaves the queue to pick up the rider. %
Otherwise, the trip may be dispatched again, until (i) some driver accepts the trip, or (ii) the rider cancels her request when her patience runs out (after the trip is declined for $\patience$ times), or (iii) the mechanism decides to not dispatch the trip again.%
\footnote{%
It takes some time for trip dispatches to be accepted or declined by  the drivers %
(see Footnote~\ref{fn:offer_card_15_seconds}). 
Drivers in the queue will be moving forward during the time a trip is repeatedly dispatched, but this does not affect our results since (i) the dispatch rules and driver strategies we study depend on the positions in the queue instead of the identities of individual drivers, and (ii) the optimality results we establish focus on the equilibrium outcome in steady state. 
}

We consider a setting where the platform has complete information about demand, supply, opportunity costs, and the earnings from trips to different destinations. We assume drivers have the same information, and that this is common knowledge amongst the drivers. %
We study mechanisms that are fully \emph{transparent} and \emph{flexible}. At any point in time, all drivers know the total length of the queue and their positions in the queue. When offered trip dispatches, drivers are provided trip destinations and earnings \emph{upfront}, so that they can decide whether to accept a dispatch based on this information. Moreover, drivers are not penalized for actions they take, and have the options to (i) decline dispatches they do not want to accept without losing their position in the queue, (ii) rejoin the virtual queue at the tail at any point of time, and (iii) decide to not join the queue upon arrival, or leave the queue at any point of time to perhaps relocate back to the city without a rider.

\medskip

A platform's \emph{trip throughput} is the mass of trips completed per unit of time by drivers in the queue.  A platform's \emph{net revenue} is the sum of the net earnings from trips made by drivers %
per unit of time, minus the opportunity cost the platform incurs due to drivers' waiting in the queue (this opportunity cost models the platform's loss of revenue elsewhere in the city, due to driver supply being tied-up in the queue). %
When drivers are non-strategic and %
accept all dispatches from the platform, %
we refer to the highest achievable %
trip throughput and net revenue %
as \emph{the first best}.

For simplicity of notation, we assume the destinations are ordered such that $\gb_1> \gb_2 > \dots > \gb_\numLoc \geq 0$.\footnote{Combining destinations with the same net earnings does not affect the equilibrium outcome of any mechanism we study. %
For drivers who are free to decline dispatches based on trip destinations, no trip with $\gb_i < 0$ will be accepted since completing one such trip is worse than declining the dispatch and leave the queue immediately without a rider. %
\todo{Proper discussion on pricing once it's written.}
}
With stationary and infinitesimal demand and supply, a platform does not need a non-zero driver queue. In steady state, a platform that aims to optimize its net revenue should keep no driver in the queue, but dispatch drivers upon their arrival to destinations in decreasing order of $\gb_i$ until either all drivers are dispatched or all riders are picked-up. %
Denote the lowest-earning trip that is (partially) completed as
\begin{align}
	\maxJ = \max \left\lbrace i \in \loc ~\middle|~ \lambda > \sum_{j = 1}^{i-1} \mu_j  \right\rbrace. \label{equ:maxJ}
\end{align}

\begin{proposition}[The first best] \label{prop:first_best}
The steady state first best outcome has zero drivers in the queue. Upon arrival, drivers are dispatched to pick up arriving riders in decreasing order of $\gb_i$. The remaining drivers (if any) are suggested to leave without joining the queue.
The \emph{first best trip throughput} is 
\begin{align}
	\tp\fb = \min\left\{\lambda, ~\sum_{i \in \loc} \mu_i \right\}, \label{equ:fb_tp}
\end{align}
and the \emph{first best net revenue} is  
\begin{align}
	\rev\fb = \sum_{i=1}^{\maxJ-1} \gb_i \mu_i  + \gb_\maxJ  \min\left\{ \lambda - \sum_{j=1}^{\maxJ-1} \mu_i, ~\mu_\maxJ  \right\}. \label{equ:fb_revenue}
\end{align}
\end{proposition}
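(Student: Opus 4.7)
The plan is to treat the first best as an optimization problem over steady state flows and show the proposed policy matches the optimum.

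First I would set up the decision variables. In any feasible steady state, let $x_i \geq 0$ denote the rate at which arriving drivers are dispatched to trips of type $i$, and let $y \geq 0$ denote the rate at which arriving drivers are told to leave without joining the queue. Since demand to each destination arrives at rate $\mu_i$, feasibility requires $x_i \leq \mu_i$. Since driver supply arrives at rate $\lambda$ and all dispatched/departing drivers must be accounted for, $\sum_{i \in \loc} x_i + y = \lambda$ in steady state when the queue is held at zero. The trip throughput is $\sum_i x_i$ and the instantaneous revenue contribution is $\sum_i \gb_i x_i$.

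Next I would argue that the queue length can be taken to be zero. Since drivers are non-strategic, the platform may direct each arriving driver either to a pending rider request or to leave immediately; any mass of drivers held in queue at rate $q > 0$ incurs an opportunity cost of $\mechCost q \geq 0$ per unit of time while producing no throughput until they are eventually dispatched. Because demand and supply are stationary and non-atomic, any trip that would eventually be served by a queued driver can equivalently be served by a freshly arriving driver, so holding drivers in queue is weakly dominated (strictly so when $\mechCost > 0$). This reduces the problem to choosing $\{x_i\}$ subject to $0 \leq x_i \leq \mu_i$ and $\sum_i x_i \leq \lambda$.

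For the throughput formula, $\sum_i x_i \leq \min\{\lambda, \sum_i \mu_i\}$ is immediate from the constraints, and the proposed greedy policy attains this bound, yielding~\eqref{equ:fb_tp}. For the revenue formula, I would solve the linear program $\max \sum_i \gb_i x_i$ subject to the capacity and supply constraints. Because $\gb_1 > \gb_2 > \dots > \gb_\numLoc \geq 0$, an exchange argument shows any optimum saturates $x_i = \mu_i$ for all $i$ with $\gb_i$ larger than some threshold and leaves $x_i = 0$ otherwise; indeed, if $x_i < \mu_i$ and $x_j > 0$ for some $i < j$, shifting a small mass from $j$ to $i$ strictly increases the objective without violating any constraint. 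The definition of $\maxJ$ in~\eqref{equ:maxJ} identifies the cutoff type precisely as the lowest-earning trip that is (partially) served, and substitution yields~\eqref{equ:fb_revenue}.

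The main obstacle is the reduction to zero queue length: one has to be careful that suggesting drivers to leave without joining is consistent with the flexible/transparent mechanism environment and that the platform does not lose future revenue by declining to queue drivers. The stationarity of arrivals and the non-atomic flow assumption together make this clean — any storage of supply is unnecessary because a statistically identical driver arrives arbitrarily soon — and since $\mechCost \geq 0$, storing supply is never beneficial. After that reduction, the remainder is a one-shot greedy LP whose optimality follows from the strict monotonicity of $\gb_i$ by a standard exchange argument.
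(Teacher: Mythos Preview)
Your argument is correct. The paper does not supply a formal proof of this proposition at all: it is stated immediately after a one-sentence justification (``With stationary and infinitesimal demand and supply, a platform does not need a non-zero driver queue\ldots dispatch drivers upon their arrival to destinations in decreasing order of $\gb_i$\ldots''), and no proof appears in the appendix. Your LP formulation, the zero-queue reduction via stationarity, and the exchange argument for the greedy ordering simply make that informal sentence rigorous, so you are following the same reasoning but filling in the details the paper omits. One small remark: your concern about compatibility with the ``flexible/transparent mechanism environment'' is unnecessary here, since the first best is defined under non-strategic drivers who accept all dispatches, so the platform faces no incentive constraints in this benchmark.
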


\subsection{Strict FIFO Dispatching} \label{sec:strict_FIFO}

The %
FIFO queue discipline is considered \emph{fair} by most, and is the default discipline in many everyday situations~\citep{larson1987Perspectives,breinbjerg2016strategic,platz2017curse}. %
We show that when drivers have the flexibility to decline undesired trips, offering each trip to every driver %
incentivizes excessive cherry-picking and leads to poor outcomes for the riders, drivers, and the platform. 
To avoid ambiguity, we refer to this mechanism as {\em strict FIFO dispatching.}

We start by analyzing the equilibrium outcome under strict FIFO dispatching. Consider a rider request for a trip to location $1$. Under strict FIFO, the trip will be accepted by the driver at the head of the queue, since a trip to location~$1$ has the highest earnings among all trips the driver may receive in the queue. Moreover, the (infinitesimal) driver at the head of the queue will be willing to accept only trips to location~$1$, since she is the first in line to receive all incoming trip dispatches, thus she does not have to wait any time for a trip dispatch to location~$1$.

Similar reasoning holds for drivers who are very close to the head of the queue, and a driver is willing to accept a trip to location~$2$ %
only if the additional waiting cost for a trip to location $1$ outweighs the earnings gap $\gb_1 - \gb_2$. 
Let $\waitGap\ot$ be the maximum additional time a driver is willing to wait for a trip to location $1$, in comparison to immediately taking a trip to location $2$. We know
\begin{align}
	\waitGap\ot \cost  = \gb_1 - \gb_2  \Rightarrow \waitGap\ot = (\gb_1 - \gb_2)/\cost. %
	\label{equ:wait_time_gap}
\end{align}

By Little's Law, the first position in the queue where the driver is willing to accept a location~$2$ trip is $\numSkip_2 \triangleq  \mu_1 \waitGap\ot = \mu_1 (\gb_1 - \gb_2)/\cost$, since the waiting time from this position to the head of the queue is $\waitGap\ot$ when all drivers ahead of this position only accept trips to location $1$. %
For a driver at position $\numSkip_2$, her \emph{continuation payoff} (i.e. net earnings from the trip the driver accepts minus the waiting costs the driver incurs from this point of time onward) is $\gb_2$, regardless of whether the driver accepted a trip to location~$2$, or if the driver continued to wait for a trip to location~$1$.

Similarly, in comparison to accepting a trip to location $i+1$, a driver is willing to wait an additional $\waitGap \ii = (\gb_i - \gb_{i+1})/\cost$ units of time for a trip to location $i$. We can compute the first positions in the queue where drivers are willing to accept trips to each location $i \in \loc$, assuming that riders have infinite patience and will not cancel their trip requests regardless of how many times their trips have been declined by the drivers (see Figure~\ref{fig:strict_FIFO_equ}).

\begin{restatable}[informal]{lemma}{LemStrictFIFOEqu} \label{lem:strict_FIFO_eq} 
Assume that riders have infinite patience. Under strict FIFO dispatching, it is an %
equilibrium for a driver to accept trip dispatches to each location $i \in \loc$ %
only if the driver is at position $\queue \geq \numSkip_i$ in the queue, where $\numSkip_1 \triangleq 0$ and %
\begin{align}
	\numSkip_i & \triangleq  \sum_{j = 1}^{i-1} \left( \frac{\gb_j - \gb_{j+1}}{ \cost}  \sum_{k=1}^j \mu_k \right), ~\forall i \geq 2.
	 \label{equ:first_accpet_positions}
\end{align}
\end{restatable}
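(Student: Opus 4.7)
The plan is to verify that the threshold profile $\{\numSkip_i\}_{i \in \loc}$ is an equilibrium under strict FIFO by (i) working out the flow dynamics induced by the candidate strategy, (ii) deriving each driver's continuation payoff as a function of her queue position, and (iii) checking that ``accept trip $i$ iff $q \geq \numSkip_i$'' is a best response. I would treat the non-atomic continuum as a deterministic flow in steady state.

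First I would pin down the service rate along each segment of the queue. Under the candidate strategy, every trip to location $j$ is declined at positions $1,\ldots,\numSkip_j-1$ and absorbed at $\numSkip_j$, so the total acceptance rate strictly ahead of any $q \in [\numSkip_j, \numSkip_{j+1})$ equals $\sum_{k=1}^{j} \numRiders_k$. By flow conservation this is also the rate at which a driver at such a position advances toward the head. Integrating gives a travel time from $\numSkip_{j+1}$ down to $\numSkip_j$ equal to $(\numSkip_{j+1}-\numSkip_j)/\sum_{k=1}^j \numRiders_k$, which by the definition of $\numSkip_{j+1}$ simplifies to $\waitGap_{j,j+1} = (\gb_j - \gb_{j+1})/\cost$.

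Next, I would use this to derive continuation payoffs. At any $q \in [\numSkip_j, \numSkip_{j+1}]$ the driver receives only offers to locations $\geq j+1$ until she reaches $\numSkip_j$, where she accepts a trip to location $j$ for net earnings $\gb_j$; hence $\Util(q) = \gb_j - \cost(q-\numSkip_j)/\sum_{k=1}^j \numRiders_k$, a continuous piecewise-linear function of $q$, with $\Util(\numSkip_j) = \gb_j$ for every $j$ (the endpoints glue via $\gb_{j-1} - \cost \waitGap_{j-1,j} = \gb_j$, equivalently established by backward induction on $j$). Comparing $\gb_i$ with $\Util(q)$ for a driver at $q \in [\numSkip_j, \numSkip_{j+1}]$, one has $\gb_{j+1} \leq \Util(q) \leq \gb_j$, so she strictly prefers to accept every offer to a location $i \leq j$, strictly prefers to decline every offer to a location $i \geq j+2$, and is exactly indifferent about offers to location $j+1$ at $q = \numSkip_{j+1}$ while strictly preferring to decline them on the interior. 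This is precisely the rule ``accept $i$ iff $q \geq \numSkip_i$''.

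The main obstacle I anticipate is the formal handling of the non-atomic flow: one must justify rigorously that in steady state trips to location $j$ are absorbed precisely at $\numSkip_j$ rather than spread over an interval, and one must rule out non-threshold deviations in which a driver conditions acceptance on past offers. The latter is handled by observing that $\Util(q)$ is monotonically non-increasing in $q$, so any best response is of threshold form and is determined by the indifference conditions already verified; a short separate check ensures that $\Util(q) \geq 0$ for all $q$ supported in equilibrium, so joining the queue at arrival is individually rational along the equilibrium path.
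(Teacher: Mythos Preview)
Your proposal is correct and follows essentially the same approach as the paper: compute the piecewise-linear continuation payoff with $\pi^\ast(\numSkip_i)=\gb_i$ via Little's Law on each segment, then verify that the threshold rule is a best response through the indifference conditions at the breakpoints. The paper organizes this as an explicit forward induction on the segments $(\numSkip_{i-1},\numSkip_i]$, which makes the best-response step slightly cleaner---it establishes segment by segment that $\pi^\ast(q)$ equals the \emph{optimal} value function, rather than appealing to the monotonicity of $U$ (which is a property of the candidate payoff, not yet shown to be the supremum over all strategies) to reduce to threshold deviations after the fact.
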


\newcommand{\lamdmarkHeight}{0.7}
\newcommand{\patienceLocation}{7}

\begin{figure}[t!]
\centering
\begin{tikzpicture}[scale = 0.7] %

\draw[-, line width=0.25mm] (-0,0) -- (15.5,0); %

\draw(-1.8, -0.25) node[anchor=south] {Riders};

\draw[-, line width=0.5mm] 	(0,0.1) -- (0,-0.1);
\draw(0, 1.2) node[anchor=south] {Head of the queue};
\draw(0, \lamdmarkHeight) node[anchor=south] {$\numSkip_1$};
\draw[->] (0, \lamdmarkHeight - 0.1) -- (0, 0.15);

\draw [decorate, decoration={brace, mirror, amplitude=8 pt}, xshift=0.4pt,yshift=-0.4pt] (0,0) -- (5,0); %
\node[text width=2.3cm] at (2.8, -0.8) {$\waitGap \ot$ periods};

\draw[-, line width=0.5mm] 	(5,0.1) -- (5,-0.1);
\draw(5, \lamdmarkHeight) node[anchor=south] {$\numSkip_2$};
\draw[->](5, \lamdmarkHeight - 0.1) -- (5, 0.15);

\draw [decorate, decoration={brace, mirror, amplitude=8 pt}, xshift=0.4pt,yshift=-0.4pt] (5,0) -- (11,0); %
\node[text width=2.3cm] at (8.5, -0.8) {$\waitGap_{2,3}$ periods};

\draw[-, line width=0.5mm] 	(11,0.1) -- (11,-0.1);
\draw(11, \lamdmarkHeight) node[anchor=south] {$\numSkip_3$};
\draw[->](11, \lamdmarkHeight - 0.1) -- (11, 0.15);    
    
\draw(13, 0.6) node[anchor=south] {$\dots$};

\end{tikzpicture}
\caption{The equilibrium outcome under strict FIFO dispatching, assuming infinite rider patience. %
\label{fig:strict_FIFO_equ}} 
\end{figure}
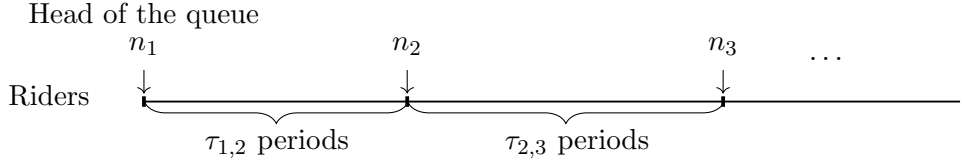

We provide in Appendix~\ref{appx:proof_strict_FIFO} the formal statement and the proof of the equilibrium outcome under strict FIFO dispatching. Briefly, we prove by induction that assuming infinite rider patience, for each location $i \in \loc$, a driver at position $\numSkip_i$ in the queue gets a continuation payoff of $\gb_i$ regardless of whether she accepted a trip to location $i$ or not. Drivers at positions earlier than $\numSkip_i$ in the queue are, however, better off waiting for trips with higher earnings.

When riders have a finite patience level $\patience$, however, trip requests to locations $i \in \loc$ with $\numSkip_i > \patience$ will not reach a driver who is willing to accept this trip before the rider's patience runs out. As a result, trips to these destinations become unfulfilled, leading to %
poor efficiency and reliability.
The following example demonstrates that drivers' excessive waiting in the queue further reduces drivers' total payoffs as well as the net revenue of the platform.

\begin{example} \label{exmp:model_and_fb}
Consider an airport queue, where riders request trips to three destinations $\loc = \{1, 2, 3\}$. The arrival rate of riders to each destination is $\mu_1 = 1$, $\mu_2 = 6$, and $\mu_3 = 3$, and the net earnings from these trips are $\gb_1 = 75$, $\gb_2 = 25$, $\gb_3 = 15$, respectively. Intuitively, trips to location~$1$ represent the rare but high-earning long trips from the airport. Location~$2$ can be considered as the downtown area with high trip volumes and medium earnings, and think about location~$3$ trips as short rides to the hotels and towns surrounding the airport with low earnings.

Drivers arrive at a rate of $\lambda = 5$ per unit of time, and the opportunity costs for the drivers and the platform are $\cost = \mechCost = 1/3$. Considering each unit of time as one minute, this corresponds to a scenario where %
a driver driving for the platform elsewhere in the city will make \$20 per hour. Riders have a patience level of $\patience = 12$. When it takes an average of 10 seconds for each driver to decline a dispatch, this corresponds to the riders' being willing to wait for two minutes for a match before canceling their trip requests. 

\medskip

\noindent{}\emph{The first best.} 
The first best outcome accepts all trips to location $1$, and dispatches the remaining $4$ units of drivers to trips to location~$2$. No driver waits in the line. The first best trip throughput is $\tp \fb = 5$, and the first best net revenue is $\rev \fb = \gb_1 + 4\gb_2 = 175$. 
This outcome can be implemented, for example, by forcing each driver to always accept the first trip dispatch she receives. This, however, introduces a high variance in drivers' total payoffs (net earnings from trip minus waiting costs): %
the average total payoff of a driver who arrived at the queue is $35$, and the 
variance is $400$. %

\medskip

\noindent{}\emph{Strict FIFO dispatching.} Under strict FIFO dispatching, when drivers have the flexibility to decide which trips to take, the driver at the head of the queue is only willing to accept trips to location~$1$. A driver with a location~$2$ trip in hand is willing to wait an additional $\waitGap\ot = (\gb_1-\gb_2)/\cost = 150$ minutes for a trip to location $1$. Therefore, the first position in the queue where the driver is willing to go to location $2$ is $\numSkip_2 = \waitGap \ot \mu_1 = 150$. 
With a patience level of $12$, riders requesting trips to location $2$ will cancel their trip requests %
after their requests are declined by the $12\th$ driver in the queue. %
Location~$3$ trips are similarly unfulfilled, %
thus strict FIFO dispatching achieves a trip throughput of only $\tp\strict = 1$ per minute.

The remaining $4$ units of drivers will need to leave the queue without a rider trip in steady state. The drivers, however, will not leave %
if the payoff from joining the queue at the tail is better than that from relocating without a rider. Drivers are willing to wait for $\gb_1/\cost = 225$ minutes for a trip to location~$1$, thus the steady state queue length will be $\mu_1 \gb_1/\cost = 225$ by Little's Law. In equilibrium, drivers get a payoff of zero regardless of whether they joined the queue or left without a rider. The large number of drivers waiting in the queue is also very costly for the platform, which achieves in this example a net revenue of zero: $\rev \strict = \gb_1 \mu_1 - \mechCost (\mu_1 \gb_1/\cost) = 0$. 
\qed
\end{example}

Strict FIFO dispatching is \emph{fair} in the sense that drivers who are closer to the head of the queue have higher priority for trips to every destination. However, as we have seen in the above example, dispatching each trip to each and every driver in the queue leads to poor reliability for the riders, low trip throughput and net revenue for the platform, and zero earnings for the drivers %
despite their strategizing for better earnings. 
In the next section, we will see that by deprioritizing drivers at the head of the queue for trips to certain destinations (thereby violating what is typically perceived as fair dispatching), we are able to substantially improve the outcome for the riders, drivers, and the platform, even without the power to adjust trip prices.

\section{The Direct FIFO Mechanism} \label{sec:direct_FIFO}

In this section, we introduce \emph{the direct FIFO mechanism}. The mechanism is based largely on FIFO dispatching, but %
sends lower-earning trips starting from positions further down the queue where drivers are willing to accept the dispatches for the option to skip the rest of the line. Accepting all trips forms a \emph{subgame perfect equilibrium} among drivers, and the mechanism achieves the highest possible revenue and throughput under any mechanism that is flexible and transparent. %

\subsection{A Dispatching Mechanism} \label{sec:dynamic_mech}

We first formally define a dispatching mechanism.  Let $\Queue \geq 0$ denote the length of the queue, and let $\queue \in [0, \Queue]$ be a particular position in the queue. $\queue = 0$ and $\queue = \Queue$ are the \emph{head} and the \emph{tail} of the queue, respectively, i.e. positions where the drivers have waited the longest and the shortest time in line. 
Let $\history$ denote the past dispatching history of a particular rider's trip request. %
This represents the positions in the queue to which the trip was offered (if any). 
Finally, we use $\phi$ %
to denote the decision to not dispatch a rider's trip request to any driver.

\begin{definition}[Dispatching mechanism] \label{defn:dynamic_mech}
Given the queue length $\Queue$, the past dispatching history $\history$ of a trip, and the trip's destination, a dispatching mechanism %
determines a probability distribution over $[0, \Queue] \cup \{\phi\}$.
Upon the arrival of a rider, or after a rider's trip is declined by some driver, the mechanism %
either (i) dispatches the trip to a driver at some position $\queue \in [0, \Queue]$ in the queue, or (ii) decides to not dispatch the trip (which we denote as $\phi$).
\end{definition}

The queue length $\Queue$ represents the \emph{state} of the queue. The dispatching mechanisms we study %
make dispatch decisions for each trip based on the state and the past dispatch history of this particular trip, but not on other factors such as how the state had evolved over time, or what actions the drivers had taken in the past.%
\footnote{When a mechanism is allowed to make dispatch decisions based on drivers' actions in the past, the mechanism may easily align incentives by no longer sending any trip offers to a driver who had declined a dispatch, for example. %
}
Similarly, we focus on driver strategies that %
depend on the queue length and a driver's position in the queue, and we denote a \emph{strategy} as a tuple $\sigma = (\alpha, \beta, \gamma)$.  For any queue length $\Queue \geq 0$, and at any position $\queue \in [0, \Queue]$ in the queue, 
\begin{enumerate}[(i)]
	\item $\alpha(\queue, \Queue, i) \in [0, 1]$ for each location $i \in \loc$ is the probability with which the driver at position $\queue$ in the queue accepts the trip dispatch %
	if she is dispatched a trip to location $i$,
	\item $\beta(\queue, \Queue)  \in [0, 1]$ determines the probability with which the driver  at position $\queue$ in the queue re-joins the queue at the tail (by going offline and online again, for example), and 
	\item $\gamma(\queue, \Queue) \in [0, 1]$ is the probability for the driver at %
	$\queue$ to leave the queue without a rider. 
	\end{enumerate}

Let $\Util(\queue, \Queue, \sigma, \sigma')$ denote the random variable representing the \emph{continuation payoff} of the driver at position $\queue$ in the queue, when the current length of the queue is $\Queue$, when this driver adopts strategy $\sigma$, and when all other drivers employ strategy $\sigma'$ (including those drivers who will arrive in the future). This includes the net earnings from the trip the driver may complete in the future, minus the total opportunity cost she incurs from this point of time onward waiting in the queue. Denote $\pi(\queue, \Queue, \sigma, \sigma') \triangleq \E{\Util(\queue, \Queue, \sigma, \sigma')}$ as the driver's \emph{expected continuation payoff} from position $\queue$ onward, where the expectation is taken over randomness in both the mechanism's decisions and the strategies of drivers. $\pi(\Queue, \Queue, \sigma, \sigma')$ thus represents the expected payoff of a driver with strategy $\sigma$, who joins the queue when the queue length is $\Queue$, and when all other drivers %
employ strategy $\sigma'$.

\medskip

We define the following properties. \todo{Consider defining an economy.}

\begin{definition}[Subgame-perfect equilibrium] %
\label{defn:spic}
A strategy $\sigmast$ forms a \emph{subgame perfect equilibrium} (SPE) among drivers under a mechanism if for any economy and any feasible strategy $\sigma$, %
\begin{align}
	\pi(\queue, \Queue, \sigmast, \sigmast) \geq \pi(\queue, \Queue, \sigma, \sigmast) , ~ \forall \Queue \geq 0, ~\forall \queue \in [0, \Queue]. %
	\label{equ:defn_spe}
\end{align}
\end{definition}

\begin{definition}[Individual rationality] \label{defn:IR_in_SPE}
A mechanism is \emph{%
individually rational in SPE}  
if under a strategy $\sigmast$ that forms an SPE among drivers, for any economy, 
\begin{align}
	\pi(\queue, \Queue, \sigmast, \sigmast) \geq 0, ~\forall \Queue \geq 0, ~\forall \queue \in [0, \Queue]. \label{equ:exante_IRSPE}
\end{align}
\end{definition}

\begin{definition}[Envy-freeness] \label{defn:EF_in_SPE}
A mechanism is \emph{%
envy-free in SPE} if %
under a strategy $\sigmast$ that forms an SPE among drivers, for any economy, 
\begin{align}
	\pi(\queue_1, \Queue, \sigmast, \sigmast) \geq \pi(\queue_2, \Queue, \sigmast, \sigmast) , ~\forall \Queue \geq 0, ~\forall \queue_1, \queue_2 \in [0, \Queue] \txtst \queue_1 \leq \queue_2. \label{equ:exante_EFSPE}
\end{align}
\end{definition}

Intuitively, under a mechanism that is individually rational and envy-free in SPE, a driver anywhere in the queue always gets non-negative continuation payoff, and does not envy the expected continuation earnings of any driver who is further down the queue.%

Given a mechanism $\mech$ %
and a strategy $\sigmast$ that forms an SPE under $\mech$, let $\equQL$ denote the length of the queue under $\sigmast$ in steady state. This is the case if the number of drivers joining the queue per unit of time is equal to the number of drivers dispatched from the queue when (i) the length of the queue is $\equQL$ and (ii) all drivers adopt strategy $\sigmast$. %
Moreover, let $\completion_i(\sigmast)$ denote the fraction of trips to location $i \in \loc$ that are completed  in steady state when all drivers adopt $\sigmast$.

The \emph{trip throughput} of mechanism $\mech$ is the amount %
of trips completed per unit of time under $\sigmast$ in steady state:
\begin{align}
	\tp_\mech(\sigmast) \triangleq \sum_{i \in \loc} \completion_i(\sigmast) \mu_i. \label{equ:trip_throughput}
\end{align}

The \emph{net revenue} achieved by mechanism $\mech$ is the total net earnings all drivers made from trips per unit of time under $\sigmast$ in steady state, minus the total opportunity costs the platform incurs due to drivers' waiting in the queue:
\begin{align}
	\rev_\mech(\sigmast) \triangleq %
	\sum_{i \in \loc} \completion_i(\sigmast) \mu_i \gb_i - \equQL  \mechCost. \label{equ:net_revenue}
\end{align} 
When $\mechCost = \cost$, the net revenue of the platform is $\rev_\mech(\sigmast) = \sum_{i \in \loc} \completion_i(\sigmast) \mu_i \gb_i - \equQL  \cost$, %
i.e. the total net payoffs to all drivers who arrive at the queue.

The objective of a mechanism is to optimize trip throughput and net revenue achieved %
in equilibrium in steady state.\footnote{When the platform takes as commission a fixed fraction of the earnings made by the drivers (%
from the queue %
as well as from driving elsewhere in the city), the problem of maximizing a platform's total commission is equivalent to that of maximizing the net revenue as defined in \eqref{equ:net_revenue}.}
We say a mechanism is \emph{optimal} if in equilibrium in steady state (i) the mechanism achieves the first best trip throughput, and (ii) the mechanism achieves the \emph{second best net revenue} i.e., the highest steady sate equilibrium net revenue that is achievable by any dispatching mechanism that is flexible and transparent, provides trip information to drivers upfront, and does not penalize drivers for any actions they take.%
\footnote{As we shall see later in this section, %
a platform may not be able to achieve the first best net revenue in certain settings, despite achieving the first best trip throughput. This is the case when a mechanism completes the same set of trips as those under the first best outcome, but drivers' strategically waiting for higher earning trips leads to a non-zero equilibrium queue length, thus increasing opportunity cost and reducing the net revenue of the platform.}

\subsection{Optimality of Direct FIFO} \label{sec:direct_FIFO_optimality}

\begin{definition}[Direct FIFO] \label{defn:direct_FIFO} Under \emph{the direct FIFO mechanism},  trips to each location $i \in \loc$ are dispatched in a FIFO manner to drivers starting from position $\numSkip_i$  (as defined in \eqref{equ:first_accpet_positions}) in the queue, when the length of the queue is $\Queue \geq \numSkip_i$. When $\Queue < \numSkip_i$, trips to location $i$ are not dispatched. %
\end{definition}

Under the direct FIFO mechanism,  the highest earning trips to location~$1$ are dispatched to the head of the queue, where the driver have waited for the longest time (thus have incurred the highest waiting costs). For a trip to location $i > 1$, %
the mechanism skips drivers close to the head of the queue who will be unwilling to accept, and dispatches the trip starting from the $\numSkip_i\th$ position--- the first position in the queue where the driver is willing to accept a trip to location $i$ under strict FIFO dispatching assuming infinite rider patience. 
The following theorem proves that this option to ``skip the rest of the line'' incentivizes drivers to accept all dispatches they receive.

\begin{restatable}[Incentive compatibility of direct FIFO]{theorem}{thmFIFOSkipSPE} \label{thm:fifo_skip_spe} 
It is a subgame-perfect equilibrium for drivers to accept all dispatches from the direct FIFO mechanism, and to join the queue if and only if the length of the queue is at most 
\begin{align}
	\maxQL \triangleq \numSkip_\numLoc + \frac{\gb_\numLoc}{\cost} \sum_{i \in \loc} \mu_i.
 \label{equ:max_eq_queue_length}
\end{align}
Moreover, the equilibrium outcome is individually rational and envy-free. 
\end{restatable}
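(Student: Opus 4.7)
The plan is to verify that the candidate strategy $\sigmast$—under which every driver (i) accepts every dispatch, (ii) never voluntarily leaves the queue or re-joins the tail, and (iii) upon arrival joins iff $\Queue \leq \maxQL$—constitutes a subgame perfect equilibrium by computing the expected continuation payoff $\pi(\queue, \equQL, \sigmast, \sigmast)$ in closed form and then ruling out each type of one-shot deviation. Individual rationality and envy-freeness will then fall out directly from the payoff formula.

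The key step is to compute $\pi$ under $\sigmast$ in steady state. The only dispatch points at positions weakly below any $\queue \in (\numSkip_{i-1}, \numSkip_i)$ are $\numSkip_1, \ldots, \numSkip_{i-1}$, so flow conservation pins the flow through this interval to $\sum_{j < i} \mu_j$; with unit density, a driver advances at this velocity, traversing from $\numSkip_i$ to $\numSkip_{i-1}$ in time $(\numSkip_i - \numSkip_{i-1})/\sum_{j < i} \mu_j = (\gb_{i-1} - \gb_i)/\cost$ by the definition of $\numSkip_i$ in \eqref{equ:first_accpet_positions}, accruing waiting cost exactly $\gb_{i-1} - \gb_i$. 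At each dispatch point $\numSkip_i$, an arriving driver is dispatched to location $i$ with probability $\mu_i / \sum_{j \leq i} \mu_j$, and continues past otherwise. A two-line induction on $i$ then shows $\pi(\numSkip_i) = \gb_i$: the accept branch at $\numSkip_i$ pays $\gb_i$, while the continue branch pays $\pi(\numSkip_{i-1}) - (\gb_{i-1} - \gb_i) = \gb_{i-1} - (\gb_{i-1} - \gb_i) = \gb_i$ by the inductive hypothesis, so the two branches coincide. Linear interpolation on each interval gives $\pi(\queue) = \gb_i - \cost(\queue - \numSkip_i)/\sum_{j \leq i} \mu_j$ for $\queue \in [\numSkip_i, \numSkip_{i+1})$, and an analogous expression on $[\numSkip_\numLoc, \maxQL]$ with denominator $\sum_i \mu_i$ and starting value $\gb_\numLoc$ yields $\pi(\maxQL) = 0$.

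With $\pi$ in hand, envy-freeness is immediate since $\pi$ is continuous, piecewise linear, and strictly decreasing on $[0, \maxQL]$, and individual rationality reduces to $\pi(\maxQL) = 0$ combined with monotonicity. For subgame perfection I invoke the one-shot deviation principle and check each deviation type at an arbitrary subgame: (a) declining a dispatch at $\numSkip_i$ is exactly neutral since both the accept and continue branches pay $\gb_i$; (b) leaving without a rider yields $0 \leq \pi(\queue)$; (c) re-joining at the tail yields $\pi(\equQL) \leq \pi(\queue)$ by monotonicity; (d) refusing to join when $\Queue \leq \maxQL$ yields $0 \leq \pi(\Queue)$; and (e) joining when $\Queue > \maxQL$ yields $\pi(\Queue) < 0$ by extrapolating the formula.

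The main obstacle is the careful fluid-model accounting around a decline. When a driver at $\numSkip_i$ declines her offer, the trip is passed to the driver immediately behind her, while the ongoing dispatches at $\numSkip_1, \ldots, \numSkip_{i-1}$ advance our driver past $\numSkip_i$ instantaneously in the non-atomic limit, so she forfeits any future location-$i$ offer and is effectively placed just past $\numSkip_i$; formalizing this transition and confirming that iterated declines at the subsequent dispatch points $\numSkip_{i-1}, \ldots, \numSkip_1$ remain exactly indifferent (so that the one-shot deviation principle closes cleanly) is the most delicate step. A secondary task is to verify that the equilibrium queue length $\equQL$ is consistent with $\sigmast$ across regimes: when $\lambda \geq \sum_i \mu_i$, drivers are indifferent at $\Queue = \maxQL$ and join fractionally so that inflow equates to the total dispatch rate $\sum_i \mu_i$, giving $\equQL = \maxQL$; when $\lambda < \sum_i \mu_i$, all arrivals join and flow conservation pins down a shorter $\equQL < \maxQL$ at which the IR constraint is slack but $\sigmast$ remains a best response.
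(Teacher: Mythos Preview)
Your proposal is correct and follows essentially the same route as the paper: both arguments compute the continuation payoff by induction on the queue segments $[\numSkip_{i-1},\numSkip_i]$, establish that $\pi(\numSkip_i)=\gb_i$ (so accepting and continuing are exactly indifferent at each dispatch point), derive the piecewise-linear formula for $\pi$, and read off individual rationality and envy-freeness from monotonicity and $\pi(\maxQL)=0$. Your explicit appeal to the one-shot deviation principle and your separate case analysis of the deviation types (decline, leave, rejoin, join/not join) is a slightly cleaner packaging of what the paper folds into its induction (via the proof of Lemma~\ref{lem:strict_fifo_spe_formal}), but the substance is identical; your worry about the fluid accounting around a decline is handled in the paper simply by noting that both branches at $\numSkip_i$ yield $\gb_i$, so no delicate limiting argument is needed.
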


The proof is via induction on queue positions, and is provided in Appendix~\ref{appx:proof_direct_FIFO}. %
Intuitively, this is a ``direct implementation'' of the equilibrium outcome under strict FIFO dispatching when riders have infinite patience (see Lemma~\ref{lem:strict_fifo_spe_formal}). Trips are dispatched starting from the positions in the queue where the drivers are indifferent towards accepting the trip or continuing to wait, and the equilibrium payoff from joining the queue is non-negative when the queue length is at most $\maxQL$.

\medskip

When there are more drivers than needed to complete all trips to location~$1$, %
the direct FIFO mechanism does not achieve the first best net revenue--- drivers are willing to spend time waiting for trips with higher earnings,  leading to a queue of non-zero length and lowering the net revenue of the platform. This kind of ``strategic waiting'' is, however, not avoidable. %
We prove in the following theorem that the outcome under direct FIFO %
achieves the \emph{second best net revenue}, %
i.e. the highest equilibrium net revenue achievable in steady state by any dispatching mechanism that provides trip destinations upfront and does not penalize drivers for declining dispatches.

\begin{restatable}[Optimality of direct FIFO]{theorem}{thmFIFOSB} \label{thm:fifo_skip_second_best} %
For every economy, the direct FIFO mechanism achieves in SPE the first best %
trip throughput. Moreover, the equilibrium outcome achieves the first best net revenue when $\mechCost = 0$, and the second best net revenue when $\mechCost \in (0, \cost]$. 
\end{restatable}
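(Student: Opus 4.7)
The plan is to first characterize the steady-state equilibrium under direct FIFO using Theorem~\ref{thm:fifo_skip_spe}, then verify that the resulting trip throughput and earnings match those of the first best, and finally argue that no transparent, flexible mechanism can achieve higher net revenue when $\mechCost > 0$. By Theorem~\ref{thm:fifo_skip_spe}, in the SPE drivers accept all dispatches and join if and only if the queue length is at most $\maxQL$; this pins down the equilibrium dynamics up to balancing inflow and outflow rates.

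I will analyze two regimes. In the excess-supply case ($\lambda \geq \sum_i \mu_i$), since drivers join whenever the queue is shorter than $\maxQL$ and the system can dispatch trips at rate $\sum_i \mu_i$ whenever $\Queue \geq \numSkip_\numLoc$, the queue stabilizes at $\equQL = \maxQL$, all trips are fulfilled, and throughput equals $\sum_i \mu_i = \tp\fb$. In the excess-demand case ($\lambda < \sum_i \mu_i$), with $\maxJ$ as in~\eqref{equ:maxJ}, the queue stabilizes at $\equQL = \numSkip_\maxJ$: all trips to locations $1, \ldots, \maxJ-1$ are fulfilled together with a fraction $\xi = (\lambda - \sum_{j<\maxJ}\mu_j)/\mu_\maxJ$ of location $\maxJ$ trips, so outflow matches inflow $\lambda = \tp\fb$. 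In either regime the trip earnings per unit time equal those of the first best, so direct FIFO's net revenue is $\rev\fb - \equQL\,\mechCost$. When $\mechCost = 0$, this reduces to $\rev\fb$, establishing first-best optimality.

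For second-best optimality when $\mechCost \in (0, \cost]$, I plan to show that any transparent and flexible mechanism in SPE achieving first-best trip earnings must have steady-state queue length at least that of direct FIFO. The argument exploits the incentive constraints: for each location $i \leq \maxJ$ whose trips are (at least partially) served, some driver accepts these dispatches, and her continuation payoff from declining is at most $\gb_i$. A driver closer to the head of the queue cannot have strictly smaller continuation payoff than one behind her (else the latter would prefer to swap roles by, e.g., declining and repositioning), so the continuation payoffs of drivers accepting locations $1, 2, \ldots, \maxJ$ form a decreasing chain $\gb_1 \geq \gb_2 \geq \cdots \geq \gb_\maxJ$ at successively deeper queue positions. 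Each payoff gap $\gb_{i-1} - \gb_i$ must be bridged by accumulated waiting cost $\cost \cdot \waitGap\ii$, and applying Little's Law to the segments between the respective accepting positions yields the lower bound $\numSkip_\maxJ$ in the excess-demand case. In the excess-supply case the additional requirement that the marginal joiner be indifferent with leaving contributes the extra $(\gb_\numLoc/\cost)\sum_i \mu_i$ mass, recovering $\maxQL$.

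The main obstacle is making this chain-of-continuation-payoffs argument rigorous for general mechanisms, where dispatches may be randomized, a driver's queue position can evolve non-monotonically (e.g., if other drivers leave mid-line), and continuation payoffs are expected quantities over stochastic futures. I would handle this by passing to the steady-state joint distribution of queue positions and continuation payoffs and working with aggregate flow-balance identities that tie expected waiting costs to expected earnings conditional on the accepted trip type, rather than tracking individual driver trajectories. This reduction should let the decreasing-chain inequalities apply in expectation, yielding the required queue-length lower bound and thus the claimed second-best optimality.
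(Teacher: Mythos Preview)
Your analysis of the steady-state queue length under direct FIFO and the throughput/first-best-revenue claims (Steps~1--2 in your outline) is essentially the same as the paper's. The difficulty is in your second-best argument, where your approach differs substantially from the paper's and contains a genuine gap.

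The paper does \emph{not} lower-bound the queue length directly via a spatial chain. Instead it first upper-bounds the \emph{total driver payoff} achievable by any transparent, flexible mechanism: in steady state some driver must either leave without a trip (over-supplied case) or accept a trip with earnings at most $\gb_\maxJ$ (under-supplied case); since any driver can rejoin at the tail, the payoff at the tail cannot exceed that driver's payoff, so the average payoff is bounded by $0$ or $\gb_\maxJ$ respectively. The identity $\text{(total driver payoff)} = \text{(total trip earnings)} - \cost \cdot \equQL$ then converts this into a lower bound on $\equQL$ for any given set of completed trips. Finally the paper optimizes over all possible completion profiles $\{\mutilde_i\}$, showing that even when trading earnings for a shorter queue, direct FIFO's profile is optimal.

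Your chain argument has two problems. First, the monotonicity claim---``a driver closer to the head cannot have strictly smaller continuation payoff than one behind her''---does not follow from the repositioning option: drivers can move only to the \emph{tail}, not to an arbitrary intermediate position, so for a general mechanism there is no reason continuation payoffs should be monotone in position, nor that trips $1,\dots,\maxJ$ are accepted at successively deeper positions. A mechanism could, for instance, dispatch high-earning trips deep in the queue. Second, you only treat mechanisms that achieve first-best trip earnings; you never rule out a mechanism that completes a different (smaller) set of trips in exchange for a shorter queue and higher net revenue. The paper's Step~3 handles exactly this by optimizing over $\{\mutilde_i\}$. Your proposed fix via ``aggregate flow-balance identities'' does not address either issue; the clean route is the paper's payoff-bound argument, which sidesteps any need for spatial monotonicity.
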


\begin{figure}[t!]
\centering
\begin{subfigure}[t]{\textwidth}
\centering
\begin{tikzpicture}[scale = 0.8] %

\draw[-, line width=0.25mm] (-0,0) -- (15, 0); %

\draw(-1.8, -0.25) node[anchor=south] {Riders};

\draw[-, line width=0.5mm] 	(0,0.1) -- (0,-0.1);
\draw(0, 1.2) node[anchor=south] {Head of the queue};
\draw(0, \lamdmarkHeight) node[anchor=south] {$\numSkip_1$};
\draw[->] (0, \lamdmarkHeight - 0.1) -- (0, 0.15);

\draw[-, line width=0.5mm] 	(4,0.1) -- (4,-0.1);
\draw(4, \lamdmarkHeight) node[anchor=south] {$\numSkip_2$};
\draw[->](4, \lamdmarkHeight - 0.1) -- (4, 0.15);

\draw[-, line width=0.5mm] 	(7,0.1) -- (7,-0.1);
\draw(7, \lamdmarkHeight) node[anchor=south] {$\numSkip_3$};
\draw[->](7, \lamdmarkHeight - 0.1) -- (7, 0.15);    
  
\draw(9, 0.6) node[anchor=south] {$\dots$};    

\draw[-, line width=0.5mm] 	(11,0.1) -- (11,-0.1);

\draw[-, line width=0.5mm] 	(15,0.1) -- (15,-0.1);

\draw(11, \lamdmarkHeight) node[anchor=south] {$\numSkip_{\maxJ- 1}$};
\draw[->](11, \lamdmarkHeight - 0.1) -- (11, 0.15);    

\draw(14.5, 1.2) node[anchor=south] {Tail of the queue};
\draw(15, \lamdmarkHeight) node[anchor=south] {$\numSkip_{\maxJ}$};
\draw[->](15, \lamdmarkHeight - 0.1) -- (15, 0.15);    

\end{tikzpicture}
\caption{The under-supplied scenario, with $\lambda \leq \sum_{i \in \loc} \mu_i$.
\label{fig:direct_FIFO_equ_under_supplied}}
\end{subfigure}%

\medskip

\begin{subfigure}[t]{\textwidth}
\centering
\begin{tikzpicture}[scale = 0.8] %

\draw[-, line width=0.25mm] (-0,0) -- (15, 0); %

\draw(-1.8, -0.25) node[anchor=south] {Riders};

\draw[-, line width=0.5mm] 	(0,0.1) -- (0,-0.1);
\draw(0, 1.3) node[anchor=south] {Head of the queue};
\draw(0, \lamdmarkHeight) node[anchor=south] {$\numSkip_1$};
\draw[->] (0, \lamdmarkHeight - 0.1) -- (0, 0.15);

\draw[-, line width=0.5mm] 	(3,0.1) -- (3,-0.1);
\draw(3, \lamdmarkHeight) node[anchor=south] {$\numSkip_2$};
\draw[->](3, \lamdmarkHeight - 0.1) -- (3, 0.15);

\draw[-, line width=0.5mm] 	(5.5,0.1) -- (5.5,-0.1);
\draw(5.5, \lamdmarkHeight) node[anchor=south] {$\numSkip_3$};
\draw[->](5.5, \lamdmarkHeight - 0.1) -- (5.5, 0.15);    
  
\draw(7.5, 0.6) node[anchor=south] {$\dots$};    

\draw[-, line width=0.5mm] 	(9.5,0.1) -- (9.5,-0.1);

\draw(9.5, \lamdmarkHeight) node[anchor=south] {$\numSkip_\ell$};
\draw[->](9.5, \lamdmarkHeight - 0.1) -- (9.5, 0.15);    

\draw(14.5, 1.3) node[anchor=south] {Tail of the queue};
\draw[-, line width=0.5mm] 	(15,0.1) -- (15,-0.1);
\draw(15, \lamdmarkHeight - 0.1) node[anchor=south] {$\maxQL$};
\draw[->](15, \lamdmarkHeight - 0.1) -- (15, 0.15);    

\end{tikzpicture}
\caption{The over-supplied scenario, with $\lambda > \sum_{i \in \loc} \mu_i$. \label{fig:direct_FIFO_equ_over_supplied}}
\end{subfigure}%
\caption{The steady-state equilibrium outcome under the direct FIFO mechanism. \label{fig:direct_FIFO_equ}} 
\end{figure}
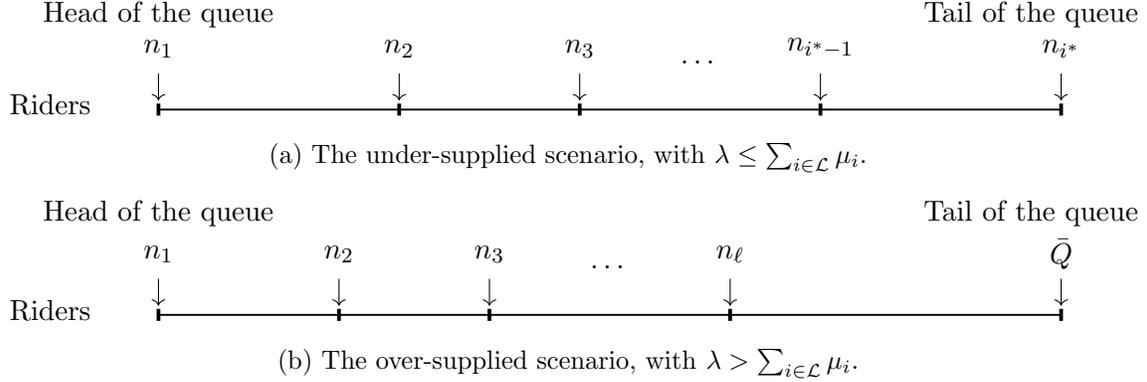

We prove this theorem in Appendix~\ref{appx:proof_direct_FIFO}. Briefly, we first show that the steady state equilibrium outcome under direct FIFO is as illustrated in Figure~\ref{fig:direct_FIFO_equ}. 
When $\lambda \leq \sum_{i\in \loc}\mu_i$, $\maxJ$ as defined in \eqref{equ:maxJ} is the lowest-earning trip that is (partially) completed in equilibrium. Drivers will line up for trips to locations $j < \maxJ$ (which have higher earnings than $\gb_\maxJ$), but the equilibrium queue length is $\equQL = \numSkip_\maxJ$ and there is no wait for a trip to location $\maxJ$. See Figure~\ref{fig:direct_FIFO_equ_under_supplied}. Every driver gets a total payoff of $\gb_\maxJ$ regardless of which trip they take, and all trips that are completed under the first-best outcome are completed.
When $\lambda > \sum_{i\in \loc}\mu_i$, the queue is over-supplied such that all trips are accepted and completed, and the equilibrium queue length is $\equQL = \maxQL$ (see Figure~\ref{fig:direct_FIFO_equ_over_supplied}). At this point, the drivers are indifferent between joining the queue and leaving, and all drivers get a zero payoff.

Given that all trips completed under the first best outcome are completed, direct FIFO achieves the first best trip throughput, and also the first best net revenue if $\mechCost = 0$ (i.e. when the platform does not incur any opportunity cost due to drivers' waiting in the queue).
To prove that the direct FIFO mechanism achieves the second best net revenue when $\mechCost > 0$, we first establish that no mechanism can achieve in equilibrium a strictly higher total payoff for all drivers combined. %
This implies that if the same set of trips are completed, reducing the equilibrium queue length in comparison to that under direct FIFO (thereby reducing the total opportunity costs for the drivers as well as the platform) is not possible. %
We then prove that completing a different set of trips in return for a shorter queue cannot be an improvement.

We now revisit the economy analyzed in Example~\ref{exmp:model_and_fb} in Section~\ref{sec:preliminaries}. 

\addtocounter{example}{-1}
\begin{example}[Continued] \label{exmp:outcome_direct_FIFO} 
Consider the economy in Example~\ref{exmp:model_and_fb}, %
for which strict FIFO dispatching achieves trip throughput $\tp\strict = 1$ and net revenue $\rev\strict = 0$. 
Under direct FIFO, trips to each location will be dispatched to drivers in the queue starting at positions $\numSkip_1 = 0$, $\numSkip_2 = 150$, and $\numSkip_3 = 360$, respectively. With $\lambda = 5$, $\mu_1 = 1$ and $\mu_2 = 6$, the lowest earning trip accepted in equilibrium will be trips to location $\maxJ = 2$, and the steady state queue length is $\equQL = \numSkip_2 = 150$. %

Upon arrival at the tail of the queue at $\numSkip_2$, one unit of driver moves on to wait for trips to location~$1$, and the remaining $4$ units of drivers immediately accept trips to location~$2$ and leave.
In equilibrium, all drivers get the same total payoff of $\gb_2 = 25$. %
The platform achieves a trip throughput of $\tp \direct = 5$ and a net revenue of $\rev\direct = 1\cdot \gb_1 + 4 \cdot \gb_2 - \equQL \mechCost = 125$ per unit of time. Since $\cost = \mechCost$ this net revenue is also the total payoff for all drivers combined.  
\qed
\end{example}

In comparison to strict FIFO dispatching, the direct FIFO mechanism substantially improves driver earnings, trip throughput, and the net revenue of the platform. The mechanism is, however, \emph{not fair} because even when all drivers are non-strategic and accept all dispatches from the platform, a driver closer to the head of the queue may still receive trips to certain destinations at a lower rate than drivers further down the queue. %
Take the Midway airport as an example. A driver who has waited long enough in the queue will never receive a trip back to downtown Chicago again--- as we can see from Figure~\ref{fig:heatmap_midway_average_fare}, all high-earning trips direct FIFO dispatches to her will be heading to the suburbs. %
This %
renders the direct FIFO mechanism ill-suited for practice.

\section{The Randomized FIFO Mechanism} \label{sec:rand_fifo}

In this section, we introduce a family of \emph{randomized FIFO mechanisms}, which achieve optimal equilibrium throughput and revenue without using unfair dispatch rules--- when drivers are straightforward and accept all dispatches, a driver closer to the head of the queue receives trip dispatches to \emph{every} destination at a (weakly) higher rate than any driver further down the queue.  

\medskip

To demonstrate the effectiveness of randomization for aligning incentives, we first analyze the steady state Nash equilibrium under \emph{random dispatching}, where every trip request is simply dispatched to drivers in the queue uniformly at random.%
\footnote{%
For simplicity of analysis, we work in this section with Nash equilibrium in steady state because drivers' equilibrium strategy depends on the length of the queue when dispatches are randomized.}

\begin{definition}[Nash equilibrium in steady state] A strategy $\sigmast$ forms a Nash equilibrium among drivers in steady state under a mechanism if there exists a queue length $\equQL \geq 0$ such that
\begin{enumerate}[(i)]
	\item for any feasible strategy $\sigma$ and any position in the queue $\queue \in [0, \equQL]$, 
	\begin{align}
		\pi(\queue, \equQL, \sigmast, \sigmast) \geq \pi(\queue, \equQL, \sigma, \sigmast), \label{equ:Nash}
	\end{align}
	\item when all drivers adopt strategy $\sigmast$, the steady state queue length is $\equQL$.
\end{enumerate}
\end{definition}

\begin{restatable}[Optimality of random dispatching]{proposition}{propPureRandOpt} \label{prop:pure_random}
In Nash equilibrium in steady sate, dispatching every trip to all drivers in the queue uniformly at random achieves  the first best trip throughput and the second best net revenue. When $\mechCost = 0$, the equilibrium net revenue is also the first best. 
\end{restatable}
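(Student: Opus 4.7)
The plan is to exploit the symmetry of random dispatching to characterize the steady-state Nash equilibrium via a single common continuation value $V$, then verify that the equilibrium throughput and queue length $\equQL$ coincide with those of the direct FIFO mechanism; the optimality claims then follow from Theorem~\ref{thm:fifo_skip_second_best}. Because each dispatch is uniform over $[0,\equQL]$, every driver in the queue faces an identical distribution over current and future offers, so in steady state they share a common expected continuation value $V$, and we may restrict attention to symmetric strategies. The one-shot-deviation principle then yields a threshold acceptance policy $\alpha = (\alpha_1,\dots,\alpha_\numLoc)$ with $\alpha_i = 1$ whenever $\gb_i > V$, $\alpha_i = 0$ whenever $\gb_i < V$, and $\alpha_i \in [0,1]$ if $\gb_i = V$. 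Individual rationality forces $V \geq 0$, with equality whenever some arriving drivers choose not to join.

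First I would treat the under-supplied case $\lambda \leq \sum_i \mu_i$. All arriving drivers join and eventually accept some trip, giving throughput $\lambda = \tp\fb$. Since $\sum_{i<\maxJ}\mu_i < \lambda \leq \sum_{i\leq\maxJ}\mu_i$, the mixing coordinate is $i=\maxJ$: $V = \gb_\maxJ$, with $\alpha_i = 1$ for $i<\maxJ$, $\alpha_i = 0$ for $i>\maxJ$, and $\alpha_\maxJ \in (0,1)$ chosen so that $\sum_i \mu_i \bigl(1 - (1-\alpha_i)^\patience\bigr) = \lambda$. If $N_i = \bigl(1-(1-\alpha_i)^\patience\bigr)/\alpha_i$ is the expected number of dispatches per location-$i$ trip under re-dispatching, then $\alpha_i N_i = 1 - (1-\alpha_i)^\patience$, so the per-driver accepted-offer rate equals $\sum_i \alpha_i \mu_i N_i/\equQL = \lambda/\equQL$, independent of $\patience$. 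Writing the indifference $V = E - \cost W$ with $W = \equQL/\lambda$ and $E\lambda = \rev\fb$ (the completed trips coincide with those of the first best), one obtains $\equQL = (\rev\fb - \lambda \gb_\maxJ)/\cost$, which a summation-by-parts computation identifies with $\numSkip_\maxJ$.

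Next I would handle the over-supplied case $\lambda > \sum_i \mu_i$. Marginal drivers must be indifferent between joining and leaving, forcing $V = 0$, so $\alpha_i = 1$ for all $i$, every trip is accepted on first offer, and the throughput is $\sum_i \mu_i = \tp\fb$. The indifference $0 = \sum_i \mu_i \gb_i/\sum_i \mu_i - \cost \equQL/\sum_i \mu_i$ gives $\equQL = \sum_i \mu_i \gb_i/\cost$, which summation-by-parts identifies with $\maxQL$ from Theorem~\ref{thm:fifo_skip_spe}. In both regimes random dispatching completes the same trips and sustains the same queue length as direct FIFO, so by \eqref{equ:net_revenue} the net revenues agree, delivering the second-best claim via Theorem~\ref{thm:fifo_skip_second_best} and the first-best claim when $\mechCost = 0$.

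The main obstacle is the re-dispatch bookkeeping in the under-supplied case, where the mixing probability $\alpha_\maxJ$ depends on $\patience$ and naive accounting entangles $\alpha_\maxJ$, $N_\maxJ$, and $\equQL$ in a $\patience$-sensitive way. The crucial simplification is the identity $\alpha_i N_i = 1 - (1-\alpha_i)^\patience$, which collapses $\sum_i \alpha_i \mu_i N_i$ to $\lambda$ and thereby severs the dependence of the per-driver waiting time on $\patience$; without this identity the match with $\numSkip_\maxJ$ would not go through cleanly. A secondary subtlety is justifying that any steady-state Nash equilibrium admits a symmetric representative, which follows because any two drivers in the queue have identical best-response sets.
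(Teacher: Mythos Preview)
Your proposal is correct and follows essentially the same route as the paper: fix the candidate queue length ($\numSkip_\maxJ$ or $\maxQL$), verify that the symmetric threshold strategy with cutoff at $\gb_\maxJ$ (resp.\ $\gb_\ell$) is a best response and keeps the queue stationary, and then invoke Theorem~\ref{thm:fifo_skip_second_best}. The paper packages the best-response verification via an auxiliary lemma computing $\rho_j \triangleq (\sum_{i\le j}\gb_i\eta_i - \cost)/\sum_{i\le j}\eta_i$ for the stationary offer rates $\eta_i$ and checking that $\rho_j$ is maximized at $j=\maxJ$ with value $\gb_\maxJ$; your indifference equation $V = E - \cost\,\equQL/\lambda$ together with the identity $\alpha_i N_i = 1-(1-\alpha_i)^\patience$ amounts to the same computation, just organized around Little's Law rather than the explicit $\rho_j$.
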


See Appendix~\ref{appx:proof_pure_random} for the proof of this result. 
Briefly, we show that under random dispatching, every driver in the queue is willing to accept a trip to location $\maxJ$ (the lowest earning trip accepted in equilibrium under direct FIFO) despite the fact that the drivers may still receive higher-earning trips %
later. This is different from the outcome under strict FIFO, because in comparison to the driver at the head of the queue under strict FIFO, a driver who declines a dispatch under random dispatching will need to wait for a much longer time to receive her next dispatch.

This additional waiting time introduced by randomization
increases drivers' costs of cherry-picking, and allows random dispatching to align incentives %
without deprioritizing drivers at the head of the queue when dispatching trips to any location.
Naively randomizing among all drivers in the queue, however, introduces substantial uncertainty in drivers' waiting times. This contributes to the variability in drivers' total payoffs, on top of the variability in the net earnings %
from trips. This is in stark contrast to direct FIFO, which matches lower-earning trips with drivers who have waited less time in the queue, thereby reducing the variation in drivers' total payoffs.

\medskip

\emph{The randomized FIFO mechanisms} we now introduce make proper use of drivers' waiting times in the queue in both ways. By gradually sending declined trips down the queue in a randomized manner, a randomized FIFO mechanism aligns incentives, and also guarantees that drivers in earlier segments in the queue (who have incurred higher waiting costs) will take trips with higher earnings.

\begin{definition}[Randomized FIFO] \label{defn:randomized_FIFO} A \emph{randomized FIFO mechanism} is specified by $\numBins \geq 1$ \emph{bins} in the queue $([\binLB\1, \binUB\1], ~[\binLB\2, \binUB\2], \dots, [\binLB \supPar{\numBins}, ~ \binUB\supPar{\numBins}])$.
For the $k\th$ time a trip is dispatched, the mechanism dispatches the trip to a driver in the $k\th$ bin $[\binLB\supk, \binUB\supk]$ uniformly at random. 
\end{definition}

See Figure~\ref{fig:randomized_FIFO} for an illustration. Intuitively, all rider requests are first dispatched to drivers in the first bin $[\binLB\1, \binUB\1]$ uniformly at random. If a dispatch is declined, the mechanism will then dispatch the trip to a random driver in the next bin. %

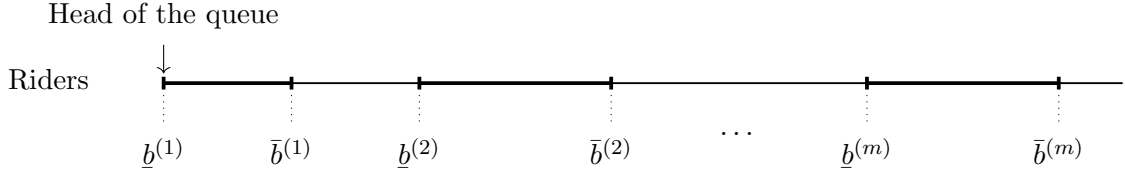
\begin{figure}[t!]
\centering
\begin{tikzpicture}[scale = 0.85] %

\draw[-, line width=0.25mm] (-0,0) -- (15, 0); %

\draw(-1.8, -0.25) node[anchor=south] {Riders};

\draw[-, line width=0.5mm] 	(0,0.1) -- (0,-0.1);
\draw(0,  \lamdmarkHeight) node[anchor=south] {Head of the queue};
\draw[->] (0, \lamdmarkHeight - 0.1) -- (0, 0.15);

\draw[-, line width=0.5 mm] 	(2,0.1) -- (2,-0.1);
\draw[-, line width = 0.5 mm] 	(0, 0) -- (2, 0);
\draw[dotted](0, -\lamdmarkHeight + 0.1) -- (0, 0-.15);  
\draw[dotted](2, -\lamdmarkHeight + 0.1) -- (2, 0-.15);  
\draw(0, - \lamdmarkHeight) node[anchor = north] {$\binLB\1$};
\draw(2, - \lamdmarkHeight) node[anchor = north] {$\binUB\1$};

\draw[-, line width = 0.5 mm] 	(4, 0) -- (7, 0);
\draw[-, line width=0.5mm] 	(4,0.1) -- (4,-0.1);
\draw[-, line width=0.5mm] 	(7,0.1) -- (7,-0.1);
\draw[dotted](4, -\lamdmarkHeight + 0.1) -- (4, 0-.15);    
\draw[dotted](7, -\lamdmarkHeight + 0.1) -- (7, 0-.15);    
\draw(4, -\lamdmarkHeight) node[anchor=north] {$\binLB\2$};
\draw(7, - \lamdmarkHeight) node[anchor = north] {$\binUB\2$};

\draw(9,- 0.6) node[anchor=north] {$\dots$};    

\draw[-, line width = 0.5 mm] 	(11, 0) -- (14, 0);
\draw[-, line width=0.5mm] 	(11,0.1) -- (11, -0.1);
\draw[-, line width=0.5mm] 	(14,0.1) -- (14, -0.1);
\draw[dotted](11, -\lamdmarkHeight + 0.1) -- (11, 0-.15);    
\draw[dotted](14, -\lamdmarkHeight + 0.1) -- (14, 0-.15);    
\draw(11, -\lamdmarkHeight) node[anchor=north] {$\binLB\supK$};
\draw(14, - \lamdmarkHeight) node[anchor = north] {$\binUB\supK$};

\end{tikzpicture}

\caption{Illustration of a randomized FIFO mechanism. \label{fig:randomized_FIFO} }
\end{figure}

With a rider patience level of $\patience$, each trip may be dispatched a maximum of $\patience$ times before the rider cancels her request. 
Recall that $\maxJ$ as defined in \eqref{equ:maxJ} is the lowest-earning trip that is (partially) completed under the first best outcome. Given any economy, let $(\loc\1, \loc\2, \dots, \loc^{(\numBins)})$ for some $\numBins \leq \min\{ \maxJ, ~\patience\}$ be an \emph{ordered partition} of the top $\maxJ$ destinations $\{1, 2, \dots, \maxJ \} \subseteq \loc$, i.e.
\begin{enumerate}[(i)]
    \item (collectively exhaustive) $\bigcup_{k=1}^{\numBins} \loc\supk = \{1, 2, \dots, \maxJ \}$, and for each $k = 1, \dots, \numBins$, %
    $\loc\supk \neq \emptyset$,
	\item (mutually exclusive) for all $k_1, k_2 \leq \numBins$ s.t. $k_1 \neq  k_2$, $\loc^{(k_1)} \cap \loc^{(k_2)} = \emptyset$, %
	\item (monotone) \label{enumitem:ordered} for all $k_1, k_2 \leq \numBins$ s.t. $k_1 < k_2$, we have $ i < j $ for all $i \in \loc^{(k_1)}$ and all $j  \in \loc^{(k_2)}$. 
\end{enumerate}

Condition (\ref{enumitem:ordered}) %
requires that trips in an earlier partition have higher earnings than those in a later partition. Given an economy and any ordered partition $(\loc\1, \dots, \loc^{(\numBins)})$ of the top $\maxJ$ destinations, we construct a corresponding set of $\numBins$ bins in the queue $([\binLB\1, \binUB\1], %
\dots, [\binLB \supPar{\numBins}, ~ \binUB\supPar{\numBins}])$ as follows: 
\begin{align}
	\binLB\supk &  \triangleq  \sum_{i \in \cup_{k' < k} \loc\supkprime }  \left( \gb_i-  \min_{i' \in \loc\supk} \{\gb_{i'}\} \right)  \mu_i  / \cost, \label{equ:defn_of_binLB} \\
	\binUB\supk & \triangleq \sum_{i \in \cup_{k' \leq k} \loc\supkprime} \left(\gb_i - \min_{i' \in \loc\supk} \{ \gb_{i'} \} \right)  \mu_i/ \cost. \label{equ:defn_of_binUB}
\end{align}

In Lemma~\ref{lem:bin_properties} in Appendix~\ref{appx:proof_thm_randFIFO_second_best} we show that $\binLB\1 = 0$, $\binUB\supk \geq \binLB\supk \geq 0$ for each $k \leq \numBins$, and $\binLB\supkpo \geq \binUB\supk$ for all $ k \leq \numBins - 1$. This guarantees that the bins start from the head of the queue, are well defined, and do not overlap with each other. 

\medskip

We now present the main result of this paper, that the family of randomized FIFO mechanisms constructed in this way achieves the optimal steady state outcome in Nash equilibrium.

\begin{restatable}[Optimality of randomized FIFO]{theorem}{thmRandomFIFOSB} \label{thm:randFIFO_second_best}
For any economy and any ordered partition of the top $\maxJ$ destinations $(\loc\1, \dots, \loc^{(\numBins)})$ with $\numBins \leq \min \{\maxJ, \patience\}$,  a randomized FIFO mechanism corresponding to \eqref{equ:defn_of_binLB} and \eqref{equ:defn_of_binUB} %
achieves the first best trip throughput and the second best net revenue in Nash equilibrium in steady state. %
When $\mechCost = 0$, the net revenue is also the first best. %
\end{restatable}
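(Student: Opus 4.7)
The plan is to construct a candidate Nash equilibrium strategy, verify the no-deviation conditions by computing the expected continuation payoff $\pi(\queue)$ at every position, and then reduce the net revenue comparison to Theorem~\ref{thm:fifo_skip_second_best}. Let $j_k = \max \loc\supk$ denote the index of the lowest-earning destination in bin $k$, and define $\sigmast$ as the strategy in which a driver at position $\queue \in [\binLB\supk, \binUB\supk]$ accepts a trip if and only if its destination lies in $\bigcup_{k' \leq k} \loc^{(k')} = \{1, \ldots, j_k\}$, does not voluntarily leave, and joins the queue whenever its current length is at most $\equQL$. A first structural identity, following from \eqref{equ:defn_of_binUB} together with the alternative form $\numSkip_i = \sum_{j < i} \mu_j (\gb_j - \gb_i)/\cost$, is $\binUB\supk = \numSkip_{j_k}$, so the top of bin $k$ under randomized FIFO coincides with the position from which direct FIFO begins dispatching trips to destination $j_k$. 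The bins need not be contiguous; the gap between consecutive bins has length $G_{k+1} = \binLB^{(k+1)} - \binUB\supk = (\gb_{j_k} - \gb_{j_{k+1}}) S_k/\cost$, where $S_k = \sum_{k' \leq k} A_{k'}$ and $A_k = \sum_{i \in \loc\supk} \mu_i$; drivers in the gaps receive no dispatches and simply shift forward at aggregate rate $S_k$.

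The technical core is the computation of $\pi(\queue)$ under $\sigmast$. For a driver at position $\queue$ in bin $k$, acceptance is a Poisson event with rate $A_k/L_k$ (uniform random dispatch over a mass $L_k = \binUB\supk - \binLB\supk$ of drivers), while her position decreases deterministically at rate $M_k(\queue) = \sum_{k' < k} A_{k'} + A_k(\queue - \binLB\supk)/L_k$ driven by acceptances ahead. Solving this race yields that, starting from $\binUB\supk$, the probability of being accepted in bin $k' \leq k$ equals $A_{k'}/S_k$ and the expected time in bin $k''$ before acceptance or pass-through is $L_{k''}/S_{k''}$. Combining the average earnings $E_k = \sum_{i \leq j_k} \mu_i \gb_i/S_k$ with the accumulated waiting cost $W_k = \cost \cdot \binUB\supk/S_k$ (a telescoping sum across bins and gaps, verified by induction on $k$) gives
\begin{align*}
    \pi(\binUB\supk) \;=\; E_k - W_k \;=\; \frac{1}{S_k}\left(\sum_{i=1}^{j_k} \mu_i \gb_i - \cost \cdot \binUB\supk\right) \;=\; \gb_{j_k},
\end{align*}
where the last equality is an immediate consequence of \eqref{equ:defn_of_binUB}. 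An analogous calculation shows that $\pi$ is constant at $\gb_{j_k}$ throughout bin $k$ and interpolates linearly across each gap from $\gb_{j_k}$ down to $\gb_{j_{k+1}}$.

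With these continuation payoffs in hand, the deviation check is routine: accepting a trip in $\loc^{(k+1)} \cup \cdots \cup \loc^{(\numBins)}$ with earnings at most $\gb_{j_{k+1}} < \gb_{j_k} = \pi(\queue)$ while in bin $k$ is strictly dominated; rejecting a trip in $\{1, \ldots, j_k\}$ with earnings at least $\gb_{j_k}$ is weakly dominated; and rejoining the tail or leaving voluntarily is ruled out by monotonicity of $\pi$ together with the tail indifference $\pi(\equQL) = 0$. The equilibrium queue length is $\equQL = \maxQL$ in the over-supplied case, determined by $\pi(\maxQL) = 0$ in the ``above-bins'' region; the under-supplied case requires a separate argument handling the partial population of bin $\numBins$, with $\equQL$ such that driver inflow $\lambda$ matches the effective outflow.

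Since all trips in $\{1, \ldots, \maxJ\}$ are completed under $\sigmast$, the mechanism achieves the first best throughput. Moreover, because the equilibrium queue length and the set of completed trips match those under direct FIFO (Theorem~\ref{thm:fifo_skip_second_best}), formula \eqref{equ:net_revenue} yields identical net revenue, so the second best revenue (or first best when $\mechCost = 0$) is inherited. The main obstacle is the ODE argument in the second paragraph: the interaction between the Poisson acceptance process and the deterministic forward movement within a bin requires careful probabilistic accounting, but the resulting telescoping identity $W_k = \cost \binUB\supk/S_k$ cleanly reproduces the boundary definition. A secondary subtlety is the under-supplied case, where the partial population of bin $\numBins$ forces a small adjustment to the per-driver dispatch rate and one must verify that the resulting equilibrium still admits no profitable deviation.
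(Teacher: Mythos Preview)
Your proposal follows the same architecture as the paper's proof: construct the candidate strategy $\sigmast$ with bin-$k$ drivers accepting exactly $\{1,\dots,j_k\}$, establish $\pi(\queue)=\gb_{j_k}$ throughout bin $k$ via the race between Poisson acceptance and deterministic forward drift, check deviations, and inherit optimality from Theorem~\ref{thm:fifo_skip_second_best} by matching the queue length and the set of completed trips to direct FIFO. The identity $\binUB\supk=\numSkip_{j_k}$ and the linear interpolation across gaps are exactly what the paper uses.

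One point needs correction. In the under-supplied case the paper does \emph{not} balance flows by ``partial population of bin $\numBins$''; the equilibrium queue length is exactly $\equQL=\binUB\supK=\numSkip_{\maxJ}$, so the last bin is fully populated, and balance is achieved by having drivers in bin $\numBins$ accept trips to destination $\maxJ$ with probability $(\lambda-\sum_{j<\maxJ}\mu_j)/\mu_{\maxJ}<1$. If instead the queue were shorter than $\binUB\supK$, the per-driver dispatch rate in the last bin would rise, the constant-$\pi$ identity $\pi(\queue)=\gb_{\maxJ}$ would break, and the claimed equilibrium would not hold. A second, smaller point: your deviation check leans on a one-shot-deviation argument, which is valid here but not what the paper does; the paper instead reduces the within-bin problem to the stationary environment of Lemma~\ref{lem:single_driver_best_response_stationary} and handles tail-rejoining by a separate contradiction (Steps~2.2--2.3), which is more self-contained in this continuous-time undiscounted setting.
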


We provide the proof of this theorem in Appendix~\ref{appx:proof_thm_randFIFO_second_best}. 
Briefly, %
let $\equQL$ denote the steady state equilibrium queue length under the direct FIFO mechanism. %
We first show that under a randomized FIFO mechanism, %
it is a Nash equilibrium in steady state that (i) all drivers in the $k\th$ bin %
in the queue accept all and only trips in the top $k$ partitions $\cup_{k' = 1}^k \loc \supkprime$ (with potential randomization over trips to location $\maxJ$), (ii)  after joining the queue, no driver %
leaves the queue without a rider trip, or rejoins the queue at its tail, (iii) drivers join the queue with probability $\min\{1,~\sum_{i \in \loc} \mu_i / \lambda\}$ upon arrival, and (iv) the length of the queue remains constant at $\equQL$.
Under this equilibrium outcome, %
all trips that are completed under the first best are also completed,  so that the randomized FIFO and direct FIFO mechanisms complete the same set of trips in steady sate.
The queue lengths being the same %
then implies that randomized FIFO also achieves the optimal net revenue, given the optimality of the direct FIFO mechanism we have established in Theorem~\ref{thm:fifo_skip_second_best}.

More formally, let $\pist(\queue) \triangleq \pi(\queue,\equQL,\sigmast,\sigmast)$ be the continuation payoff of a driver at position $\queue \in [0, \equQL]$ in the queue, when the queue length is $\equQL$ and when all drivers adopt the equilibrium strategy described above (which we denote as $\sigmast$).
We show that $\pist(\queue)$ is non-negative, piece-wise linear, and monotonically non-increasing in $\queue$. %
Moreover, $\pist(\queue) = \min_{i \in \loc\supk} \{\gb_i\}$ for all $\queue \in [\binLB\supk, \binUB\supk]$ for each $k \leq \numBins$, i.e. the continuation earning of any driver in the $k\th$ bin is equal to the net earning of the lowest earning trip in the $k\th$ partition. 
This allows us to prove by induction on $k$ that $\sigmast$ forms a Nash equilibrium. %
The non-negativity and monotonicity of %
$\pist(\queue)$ %
imply that the randomized FIFO mechanism is \emph{individually rational} and \emph{envy free in Nash equilibrium in steady state}. 

\medskip

We now %
demonstrate via the following example that the randomized FIFO mechanism %
substantially reduces the variability in drivers' earnings in comparison to dispatching %
every request to all drivers in the queue uniformly at random.

\begin{example} \label{exmp:running_example_rand_vs_rand_FIFO}
Consider an economy with three destinations $\loc = \{1,2,3\}$, rider arrival rates $\mu_1 = 1$, $\mu_2 = 6$, $\mu_3 = 3$, and net earnings from trips $\gb_1 = 75$, $\gb_2 = 25$, $\gb_3 = 15$. The opportunity costs per minute are $\cost = \mechCost = 1/3$, and drivers arrive at a rate of $\lambda = 8$ per minute. %
Moreover, assume for simplicity that riders have a patience level of $\patience = 2$, i.e. each trip can be dispatched only twice before riders cancel their requests.

\smallskip

\noindent{}\emph{Strict FIFO dispatching.} Trips to locations~$2$ and $3$ cannot reach drivers in the queue who are willing to accept them. $\tp\strict = 1$ as a result. Moreover, $\rev\strict = 0$ since queue is long enough that drivers get a payoff of zero regardless of whether they had joined the queue, and when $\mechCost = \cost$ the platform's net revenue is equal to the total payoff of all drivers. %

\smallskip

\noindent{}\emph{Random dispatching.} When the length of the queue is $\equQL = 360$, it is an equilibrium for drivers to accept all trips to locations $1$ and $2$, and randomize on trips to location~$3$. In steady state, all location $1$ and $2$ trips, and a third of location $3$ trips are completed. The throughput is $\tp\rand = 8$, and the platform achieves a net revenue of $\rev\rand = 120$ per minute. The average waiting time in the queue is $\equQL / \tp\rand = 45$ minutes, and the drivers get an average total payoff of  $15$. %
However, due to the high level of variability in (i) a driver's waiting time for a trip, and (ii) the net earnings from the trip a driver %
may accept, the variance of drivers' total payoffs is $500$ (see Appendix~\ref{appx:eq_under_pure_random} for the computation of the equilibrium outcome and earning variance).

\smallskip

\noindent{}\emph{Randomized FIFO.} Consider a randomized FIFO mechanism corresponding to the ordered partition $(\loc\1, \loc\2) = (\{1\}, \{2, 3 \})$. The corresponding bins are given by $\binLB\1 = \binUB\1 = 0$, $\binLB\2=180$, and $\binUB\2 = 360$. 
All trips are first sent to drivers in $[\binLB\1, \binUB\1] = \{0\}$, i.e. the head of the queue. In equilibrium, %
drivers at the head of the queue accept only trips to location~$1$.%
The remaining trips to locations~$2$ and $3$ will then be randomly dispatched to drivers at positions $180$ to $360$ in the queue.

In equilibrium, the length of the queue is $\equQL = 360$. Compared to random dispatching, the randomized FIFO mechanism achieves the same trip throughput, net revenue, average driver waiting time, and average driver payoff. %
In contrast, the variance of the total payoffs of the drivers is reduced from $500$ to $75$ (see Appendix~\ref{appx:eq_under_randomized_FIFO} for more details). 
By matching higher earning trips with drivers in earlier bins who have incurred a higher waiting cost, the randomized FIFO mechanism is able to substantially reduces the variability in drivers' total payoffs. %
\qed
\end{example}

A higher patience level of riders increases the number of times a trip can be dispatched before the rider cancels her request. This allows the randomized FIFO mechanisms to use a larger number of bins and better match higher-earning trips with drivers who have waited longer in the queue. When riders are sufficiently patient, the randomized FIFO mechanism is able to achieve zero variance in drivers' total payoffs.
Consider an economy where riders' patience level is higher than the number of trip types completed in equilibrium, %
i.e. when $\patience \geq \maxJ$. %
The randomized FIFO mechanism corresponding to $\numBins = \maxJ$ partitions has a single trip in each partition, and offers a trip to the driver at position $\numSkip_k$ in the queue if it is the $k\th$ time that the trip is dispatched.
In equilibrium, trips to each location $k \leq \maxJ$ are accepted by the drivers at $\numSkip_k$, and the equilibrium outcome %
is the same as that under direct FIFO, where all drivers have the same total payoff.

\subsection{Discussion} \label{sec:discussion}

In real-life systems with the richness of a ridesharing platform, there typically exist multiple notions of fairness. %
In the context of the airport queues, a platform could be %
perceived as not treating drivers fairly if some drivers receive much more lucrative trips than others after waiting a similar amount of time, or if drivers who arrived later in time have higher priority for trips to certain destinations. %

Under the randomized FIFO mechanisms, the small variance in drivers' total payoffs and the envy-freeness of the equilibrium outcome (which guarantees that no driver would want to swap positions with any other driver who joined the queue later in time) can both be considered as fairness properties for \emph{the equilibrium outcome} (see \citet{avi2004measuring}, \citet{platz2017curse}, and \citet{wierman2011fairness}). %
The direct FIFO mechanism achieves zero earning variance in theory, but severely violates what is typically required of a fair \emph{dispatch rule} since even when all drivers are straightforward and accept every dispatch, drivers closer to the head of the queue may still receive trips to certain destinations at a lower rate than drivers %
further down the queue. %

Under a randomized FIFO mechanism, trips are only dispatched to drivers closest to the head of the queue when all drivers are straightforward. With strategic drivers, in equilibrium, it is possible for drivers in later bins to receive certain low-earning trips at a higher rate than drivers in an earlier bin, after these trips are first dispatched to and declined by drivers in the earlier bins.%
\footnote{There may also be segments in the queue where the drivers do not receive any dispatches under the randomized FIFO mechanisms. It is possible for $\binUB \supkmo < \binLB \supk$, meaning that a driver who have just moved past the $k\th$ bin may need to wait for some time before she reaches the $k-1\th$ bin, and in the mean time will not receive any dispatches. The existence of such segments in between bins is to guarantee that a driver who is about to reach $\binLB\supk$ is not getting a continuation payoff that is too high such that the driver will decline the lower-earning trips in $\loc\supk$.}
As we have seen from the analysis of strict FIFO dispatching, improving efficiency and average driver earnings %
does require that %
lower-earning trips be quickly dispatched to
drivers further down the queue who are willing to accept them, before riders' patience runs out.\footnote{The trade-off between efficiency and fairness has been discussed in various contexts~\citep{su2004patient,su2006recipient}. In our setting, %
the very limited rider patience leads to substantially higher efficiency loss under strict FIFO dispatching.}

\medskip

In addition to the optimality of revenue and throughput, 
the proof of Theorem~\ref{thm:fifo_skip_second_best} also establishes that no mechanism can achieve a better total payoff for drivers, if drivers are provided trip %
details upfront as well as the flexibility to freely %
decline any dispatches. 
It is tempting to think that a mechanism that %
imposes penalties could easily achieve a better outcome. However, the %
same proof also implies that even if a mechanism is allowed %
to %
move drivers to the tail of the queue for declining trip dispatches, no such mechanism can achieve a better throughput, %
revenue, or driver payoff.%
\footnote{\citet{su2004patient} suggest that a mechanism could impose penalties such that patients who decline an organ offer would expect a decrease in their priority position. 
In today's ridesharing platforms, Uber and Lyft drivers may lose their positions in line and move back to the tail of the queue after declining (multiple) trip dispatches. See \url{https://help.lyft.com/hc/en-us/articles/115012922787-Receiving-Airport-FIFO-pickup-requests} and  \url{https://www.uber.com/us/en/drive/san-francisco/airports/san-francisco-international/}, accessed 09/27/2020. %
Such penalties can improve the outcome under %
strict FIFO dispatching.
}
Intuitively, all trips that are accepted under the first best outcome are also accepted under randomized FIFO. The only inefficiency arises because drivers %
strategically wait for better trips, which leads to a higher-than-necessary amount of driver supply being ``tied-up'' at the queue in equilibrium.
Reducing the length of the %
queue %
lowers the opportunity cost for the platform, but  %
also decreases drivers' cost of being moved to the tail of the queue at the same time. This renders such penalties less effective. %
All things considered, the randomized FIFO mechanisms achieve a desirable balance between flexibility, efficiency, fairness and variability in drivers' earnings.

\section{Simulation Results} \label{sec:simulations}

In this section, we present counterfactual simulation results for the Chicago O'Hare International Airport. As we vary the level of driver supply or rider patience, we compare various mechanisms and benchmarks in the equilibrium net revenue, trip throughput, queue length, and drivers' average waiting time, average earnings, and earning variance. Additional simulations for O'Hare and for the Chicago Midway International Airport are provided in Appendix~\ref{appx:additional_simulations}.

To estimate the distribution of trips and the net earnings from trips by destination, we make use of trip-level data from ridesharing platforms (including Uber and Lyft) made public by the City of Chicago.%
\footnote{\url{https://data.cityofchicago.org/Transportation/Transportation-Network-Providers-Trips/m6dm-c72p}, accessed December 12, 2020. This dataset contains all trips in Chicago from November 2018 onward, %
but we use data up to mid March of 2020, before the COVID-19 pandemic substantially changed the dynamics of the market. During this time of consideration, drivers in Chicago \emph{do not} have trip destinations up front.%
}
The dataset provides the fare for each trip (rounded to the nearest \$2.50), the origin and destination of each trip on the Census Tract level, as well as the timestamps at the beginning and the end of each trip (rounded to the nearest 15 minutes). There are a total of 801 census tracts within the City of Chicago, which we consider as the set of destinations.%
\footnote{\url{https://data.cityofchicago.org/Facilities-Geographic-Boundaries/Boundaries-Census-Tracts-2010/5jrd-6zik}, accessed September 14, 2020. The destination census tracts are not available for trips ending outside of the City of Chicago, and may also be hidden %
due to privacy considerations when trips are sparse. 
Overall, 42.6\% of trips originating from O'Hare do not have a destination census tract, thus we cannot take these trips into consideration for our simulations. %
We do not expect any qualitative change in the simulation results if trips to all destinations are included. In fact, incorporating the long trips to the suburbs with very high earnings (which are currently missing) will likely lead to a worse outcome under strict FIFO dispatching, since these trips provide strong incentives for drivers at the head of the queue to wait and cherry-pick.}

\begin{figure}[t!]
\centering
\begin{subfigure}[t]{\subfigWidthTwo \textwidth}
	\centering
    \includegraphics[height=\heatMapHeight in]{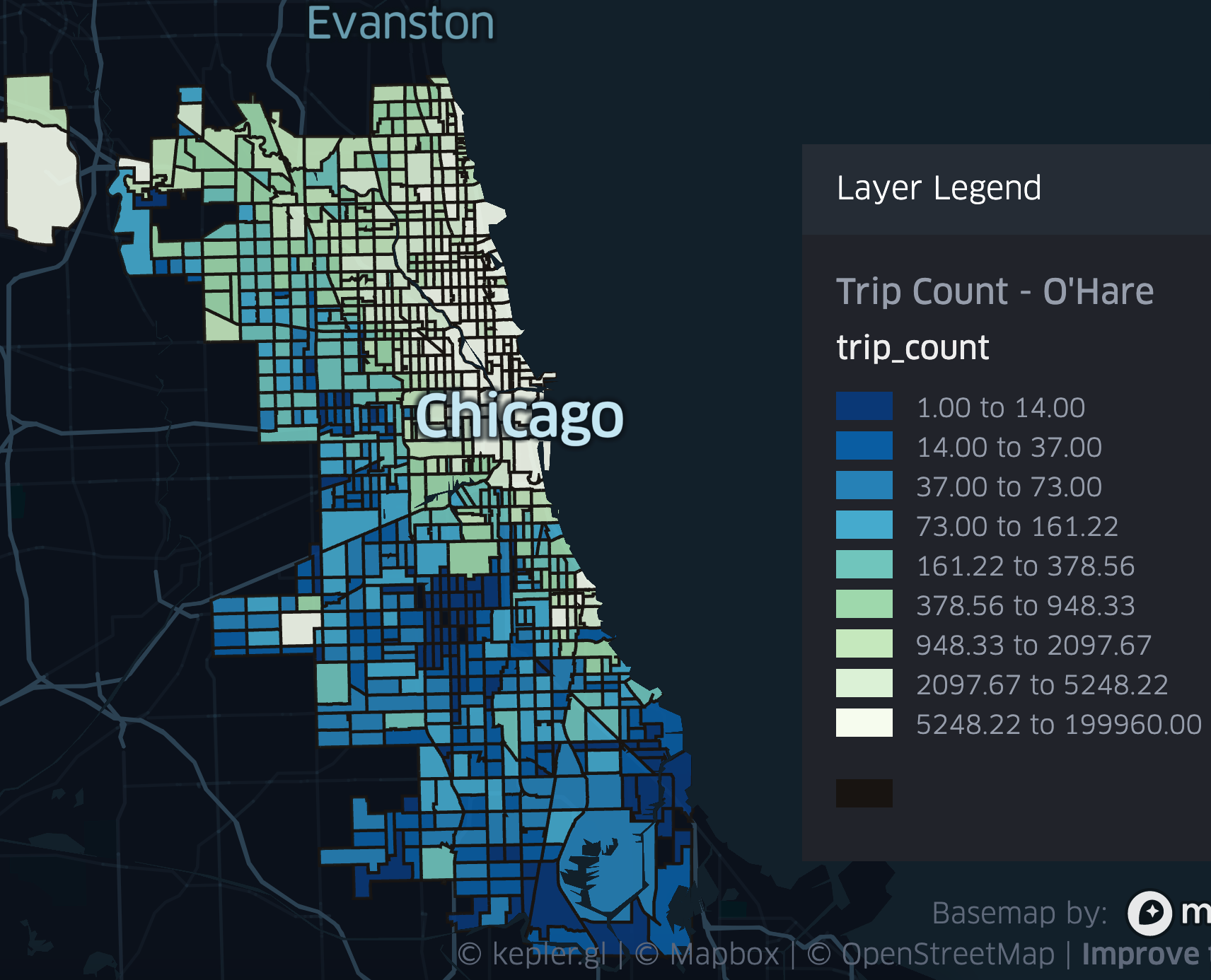}
    \caption{Trip count by destination. \label{fig:heatmap_ohare_trip_count}}
\end{subfigure}%
\begin{subfigure}[t]{\subfigWidthTwo \textwidth}
	\centering
    \includegraphics[height=\heatMapHeight in]{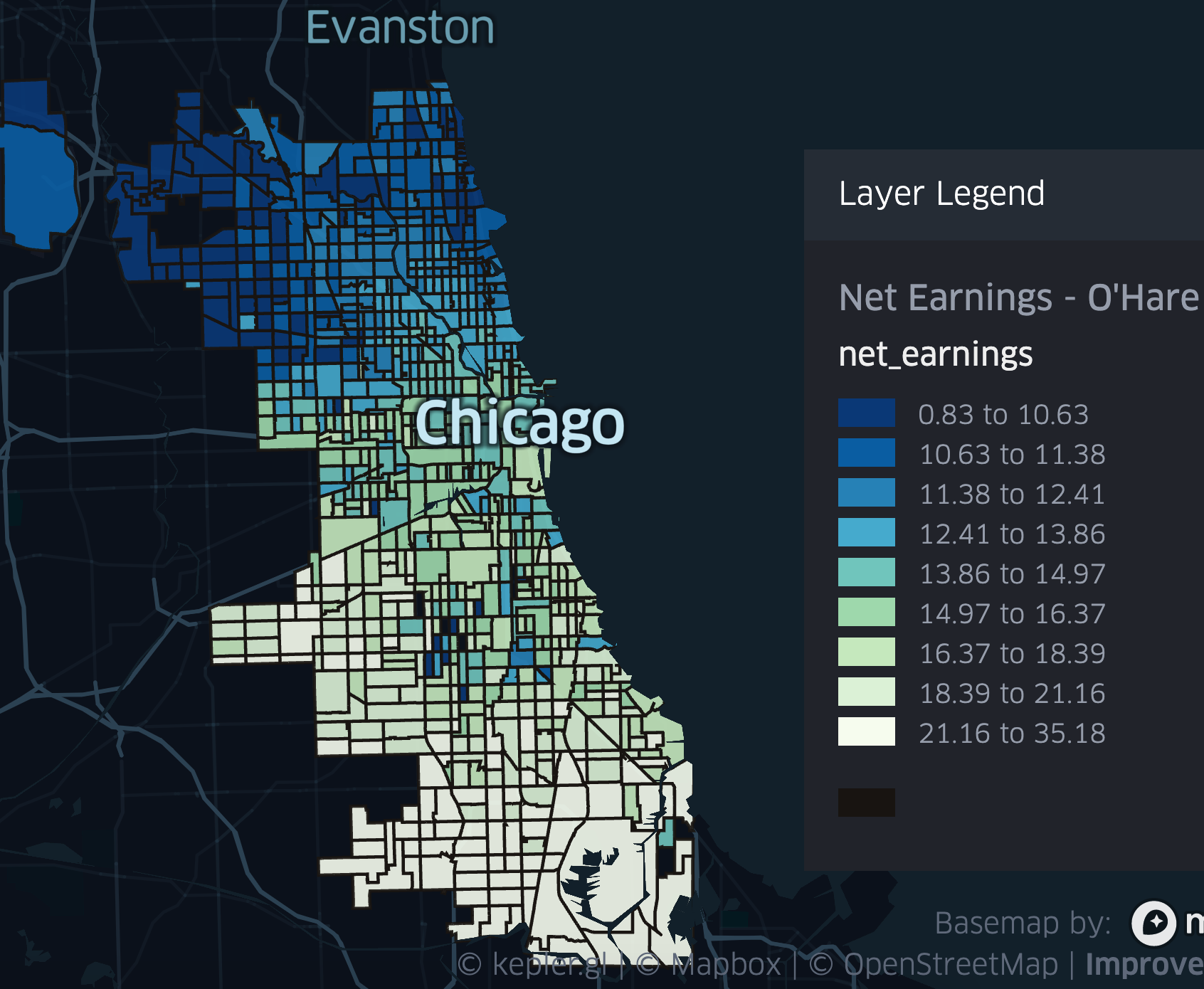}
    \caption{Net earnings by destination. \label{fig:heatmap_ohare_net_earnings}}
\end{subfigure}%
\caption{Trip volume and estimated net earnings (assuming $\cost = 1/3$) by destination Census Tract in Chicago, for trips originating from the Chicago O'Hare International Airport. %
\label{fig:heatmaps_from_ohare_1}}   
\end{figure}

From November 1, 2018 to March 11, 2020, there are a total of 4.53 million trips originating from O'Hare (see Figure~\ref{fig:ohare_daily_trips_from_airport} in Appendix~\ref{appx:additional_simulations_ohare}). The number of trips by destination census tract is as shown in Figure~\ref{fig:heatmap_ohare_trip_count}, and the average trip fare by destination is shown in Figure~\ref{fig:heatmap_ohare_average_fare} in Section~\ref{sec:intro}. Without driver identifiers, we are unable to estimate the average hourly earnings of a driver in Chicago. We assume throughout this section that the opportunity cost of a driver is $\cost = 1/3$, representing the scenario that an average driver driving in the city makes \$20 per hour.%
\footnote{The simulation results are not sensitive to the choice of $\cost$. A proposal from Uber in 2019 (see \url{https://p2a.co/H9gttWA}, accessed September 14, 2020) discussed ensuring drivers are paid an average of \$21 per hour while \emph{on trip}, the earnings per hour online could be currently slightly lower, depending on the average utilization level.} 
Combining the average fare, average trip duration (see Figure~\ref{fig:heatmap_ohare_duration}), and the opportunity cost, we estimate the net earnings by trip destination as shown in Figure~\ref{fig:heatmap_ohare_net_earnings}.%
\footnote{See Appendix~\ref{appx:net_earnings} for more details. Note that without driver identifiers, we are not able to appropriately estimate the continuation payoff of drivers after arriving at different destinations. As a result, the net earnings used in our simulations incorporate only payments from the immediate trip, effectively assuming that there is no heterogeneity in the continuation earnings from different locations onward.%
}

Throughout this section, we fix the total arrival rate of riders at $\sum_{i \in \loc} \mu_i = 12$ per minute. This is roughly equal to the rate of \emph{completed} trips during early evening hours on weekdays (see Figure~\ref{fig:ohare_how_trips_from_airport} in Appendix~\ref{appx:additional_simulations_ohare} for the average number of completed trips by hour-of-week). %
We assume that the platform's opportunity cost of drivers' time is $\mechCost = \cost = 1/3$ per minute, which corresponds to the scenario where %
the gap between the first best and the second best net revenue (which is achieved by the mechanisms we propose) is the largest. 
Finally, the randomized FIFO mechanism we evaluate in this section %
corresponds to an ordered partition of the set of completed trips into at most $\patience$ subsets, each containing (approximately) the same number of destinations. %

\paragraph{Varying Driver Supply.} 

We first compare the different mechanisms and benchmarks as the arrival rate of drivers $\lambda$ varies from zero to twenty percent over the total rider arrival rate. We fix the rider patience level at $\patience = 12$, representing the scenario where each driver decline takes $10$ seconds on average, and where riders are willing to wait for $2$ minutes for a match. %
Figure~\ref{fig:varying_lambda_ohare_1} presents the steady state net revenue, trip throughput, and queue length achieved in equilibrium.

When the arrival rate of drivers is very low, the outcome under direct FIFO, randomized FIFO, strict FIFO and random dispatching coincide, and all mechanisms achieve a net revenue very close to that under the first best outcome. This is because all drivers are able to accept trips with high earnings, and do not spend much time lining up in the queue. As the arrival rate of drivers increases, the length of the queue increases, and so does the gap between the first best  and the second best net revenue (which is achieved by direct FIFO, randomized FIFO, and random dispatching).

\newcommand{\figHeight}{1.66}
\newcommand{\subfigWidthThree}{0.33}

\begin{figure}[t!]
\centering
\begin{subfigure}[t]{\subfigWidthThree \textwidth}
	\centering
    \includegraphics[height= \figHeight in]{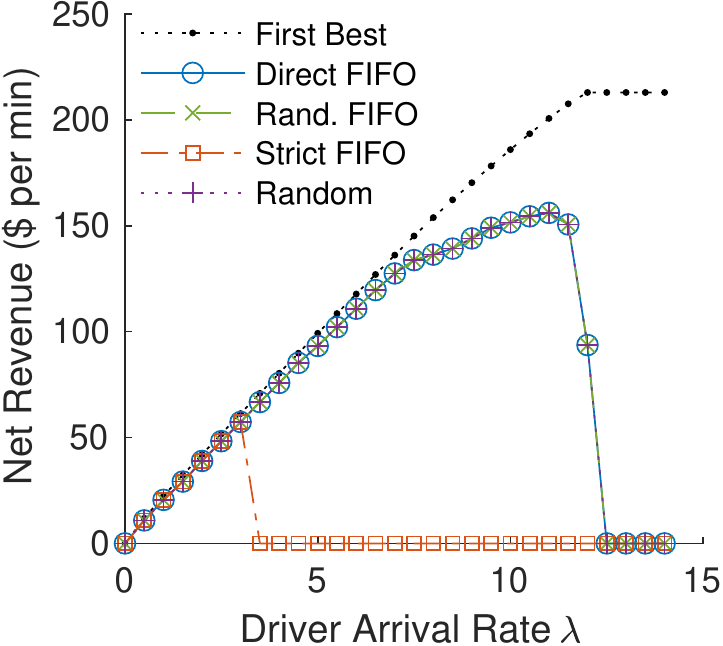}
    \caption{Net revenue.\label{fig:varying_m_gb_ohare}}
\end{subfigure}%
\begin{subfigure}[t]{\subfigWidthThree \textwidth}
	\centering
    \includegraphics[height=\figHeight in]{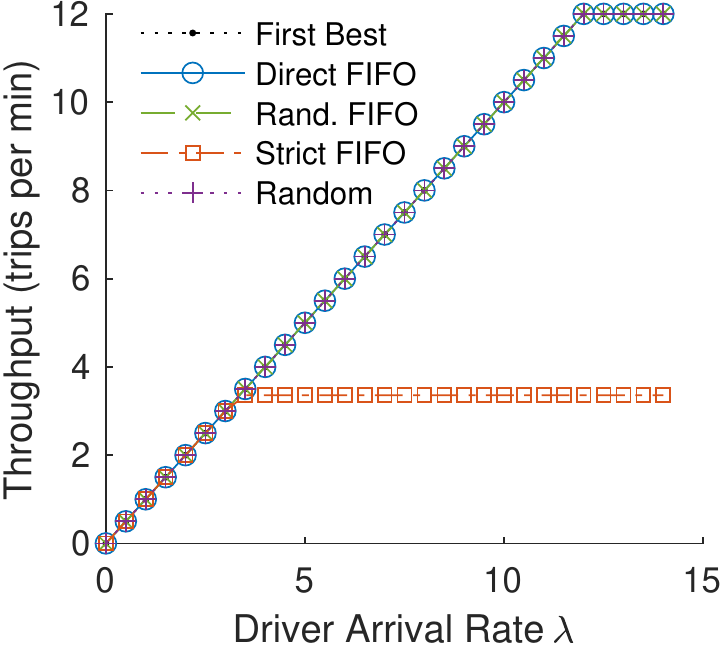}
    \caption{Trip throughput. \label{fig:varying_m_throughput_ohare}}
\end{subfigure}%
\begin{subfigure}[t]{\subfigWidthThree \textwidth}
	\centering
    \includegraphics[height=\figHeight in]{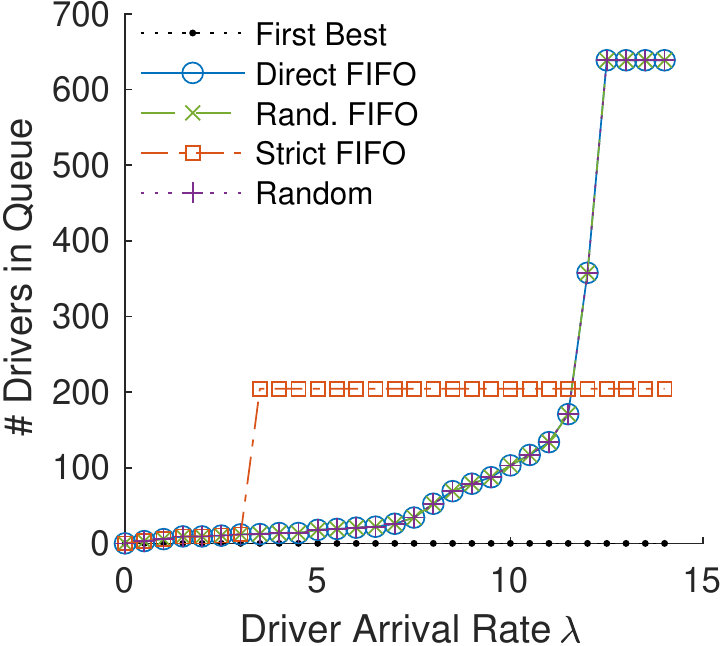}
    \caption{Equilibrium queue length. \label{fig:varying_m_max_q_length_ohare}}
\end{subfigure}%
\caption{Equilibrium net revenue, trip throughput, and length of the queue in steady state, as the arrival rate of drivers varies. Chicago O'Hare. \label{fig:varying_lambda_ohare_1}}    
\end{figure}

In contrast to the other mechanisms, the trip throughput under strict FIFO dispatching quickly plateaus despite the increasing driver supply, since rider requests for lower earning trips cannot reach drivers in the queue who are willing to accept them. These trips become unfulfilled, and at the same time, some drivers will have to deadhead back to the city without a rider. As a result, the net revenue under strict FIFO (which is equal to the total payoff of all drivers combined when $\mechCost = \cost$) drops to zero--- drivers will continue to join the queue until the queue is so long that joining is no better than leaving without a rider, thus in equilibrium all drivers %
get a zero total payoff.

Once the queue is over-supplied, i.e. when the driver arrival rate exceeds the total rider demand, the net revenue under %
all mechanisms %
drop to zero. This is inevitable, since no driver is willing to leave the airport without a rider as long as %
joining the queue and wait leads to a strictly positive payoff, but some driver has to deadhead in steady state. Nevertheless, %
we can see from Figure~\ref{fig:varying_m_ave_wait_ohare} that the average waiting time %
under randomized FIFO is still shorter than that under strict FIFO dispatching despite the longer queue length, since the trip throughout is substantially higher.

\begin{figure}[t!]
\centering
\begin{subfigure}[t]{\subfigWidthThree \textwidth}
	\centering
    \includegraphics[height=\figHeight in]{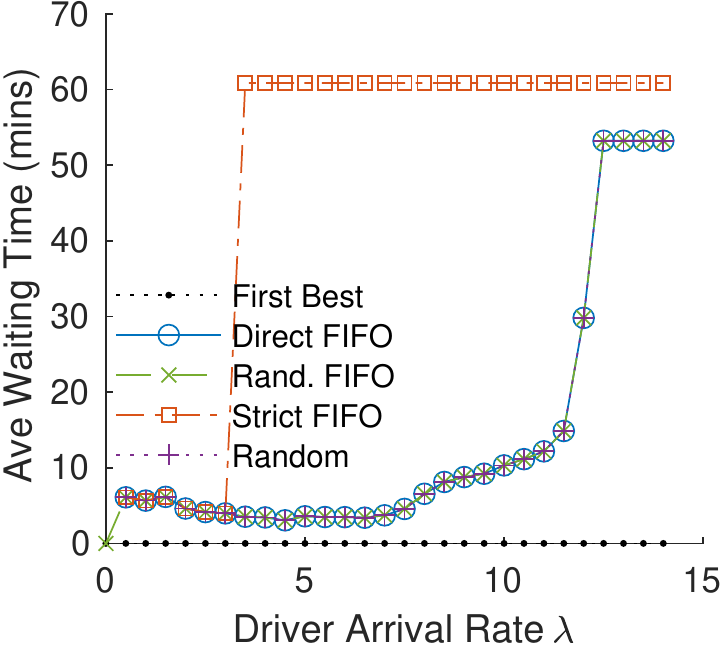}
    \caption{Average waiting time. \label{fig:varying_m_ave_wait_ohare}}
\end{subfigure}%
\begin{subfigure}[t]{\subfigWidthThree \textwidth}
	\centering
    \includegraphics[height=\figHeight in]{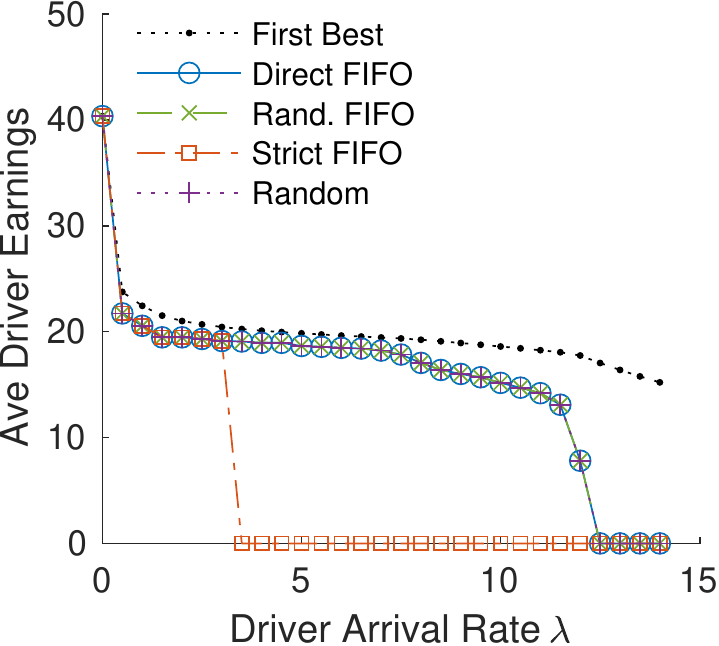}
    \caption{Average driver payoff. \label{fig:varying_m_ave_driver_earning_ohare}}
\end{subfigure}%
\begin{subfigure}[t]{\subfigWidthThree \textwidth}
	\centering
    \includegraphics[height=\figHeight in]{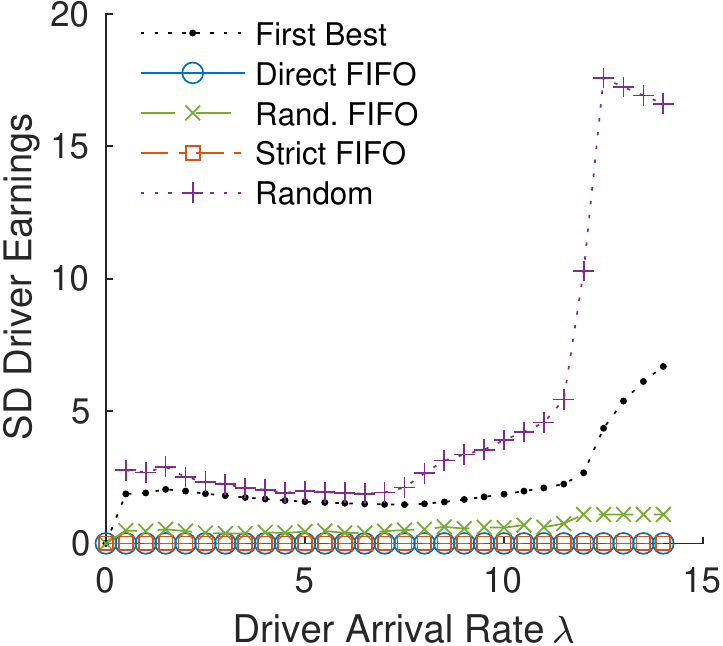}
    \caption{SD of driver payoffs. \label{fig:varying_m_sd_driver_earning_ohare}}
\end{subfigure}%
\caption{Drivers' average waiting times, average payoff, and the standard deviation (SD) in drivers' payoff in equilibrium in steady state, as the arrival rate of drivers varies. Chicago O'Hare. \label{fig:varying_lambda_ohare_2}}   
\end{figure}

In Figures~\ref{fig:varying_m_ave_driver_earning_ohare} and~\ref{fig:varying_m_sd_driver_earning_ohare}, we compare the average payoff (i.e. the net earnings from trips %
minus the waiting costs) of all drivers who arrived at the airport, and also the standard deviation of drivers' payoffs. 
As expected, random dispatching introduces substantial uncertainty in drivers' payoffs. In contrast, by matching higher-earning trips with drivers who have waited longer in the queue, the randomized FIFO mechanism achieves a much smaller variance in drivers' payoffs, in comparison to random dispatching as well as the first best outcome. %

\paragraph{Varying Rider Patience.}

Fixing the arrival rate of drivers at $\lambda = 10$, we %
compare the equilibrium, steady state outcomes under different mechanisms when riders' patience level increases from $\patience = 1$ to $\patience = 120$.  %
Figure~\ref{fig:varying_patience_ohare_1} shows the net revenue, trip throughput, and the length of the queue, and Figure~\ref{fig:varying_patience_ohare_2} shows drivers' average waiting times in queue, drivers' average payoff after arriving at the queue, and the standard deviation of drivers' payoffs.

\begin{figure}[t!]
\centering
\begin{subfigure}[t]{\subfigWidthThree \textwidth}
	\centering
    \includegraphics[height= \figHeight in]{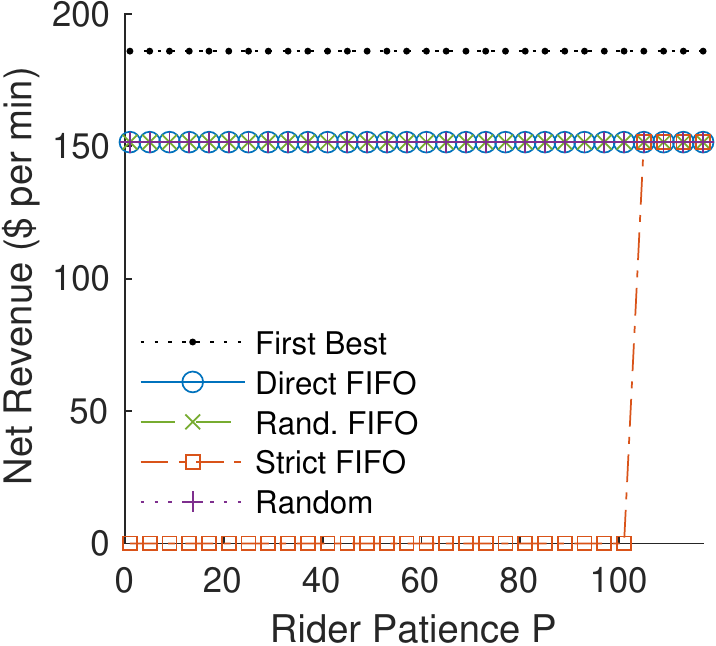}
    \caption{Net revenue.\label{fig:varying_P_gb_ohare}}
\end{subfigure}%
\begin{subfigure}[t]{\subfigWidthThree \textwidth}
	\centering
    \includegraphics[height=\figHeight in]{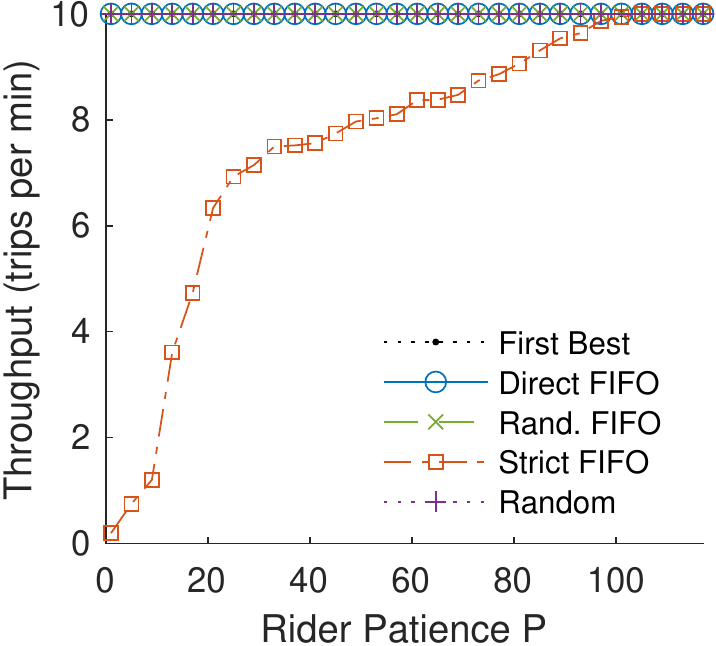}
    \caption{Trip throughput. \label{fig:varying_P_throughput_ohare}}
\end{subfigure}%
\begin{subfigure}[t]{\subfigWidthThree \textwidth}
	\centering
    \includegraphics[height=\figHeight in]{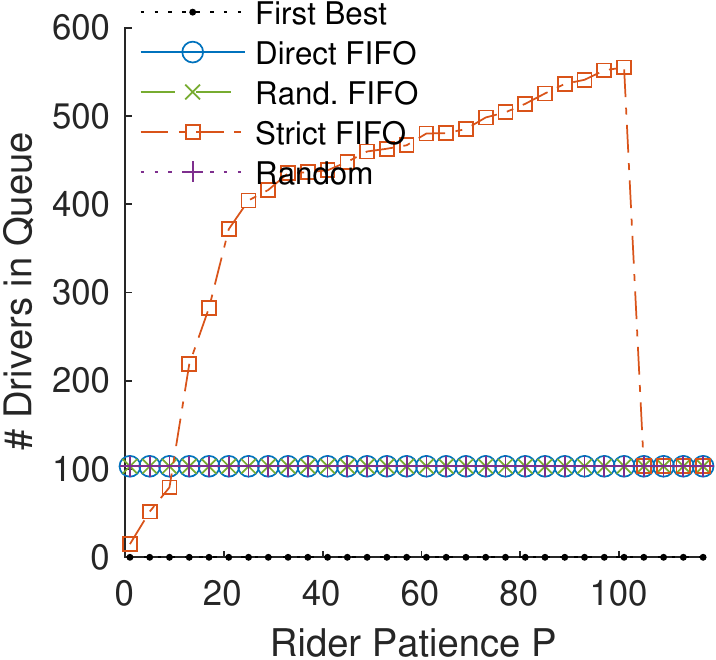}
    \caption{Equilibrium queue length. \label{fig:varying_P_max_q_length_ohare}}
\end{subfigure}%
\caption{Equilibrium net revenue, trip throughput, and length of the queue in steady state, as riders' patience level varies. Chicago O'Hare. \label{fig:varying_patience_ohare_1}}    
\end{figure}

The equilibrium outcomes under the direct FIFO mechanism and random dispatching are not affected by riders' patience level. %
Both mechanisms achieve the first best trip throughput, a high net revenue for the platform, and a low waiting time for the drivers. The randomized FIFO mechanism achieves the same throughput, revenue, and average driver waiting time. Moreover, we see from Figure~\ref{fig:varying_P_sd_driver_earning_ohare} that (i) the variance in drivers' total payoffs is substantially lower than that under random dispatching, and (ii) this variance diminishes rapidly as riders' patience level increases. 
Intuitively, riders' patience level $\patience$ determines the number of times a trip can be dispatched, hence the number of bins a randomized FIFO mechanism may employ. As $\patience$ increases, the mechanism is able to better match trips with higher earnings with drivers who have waiting longer in the queue. %

\begin{figure}[t!]
\centering
\begin{subfigure}[t]{\subfigWidthThree \textwidth}
	\centering
    \includegraphics[height=\figHeight in]{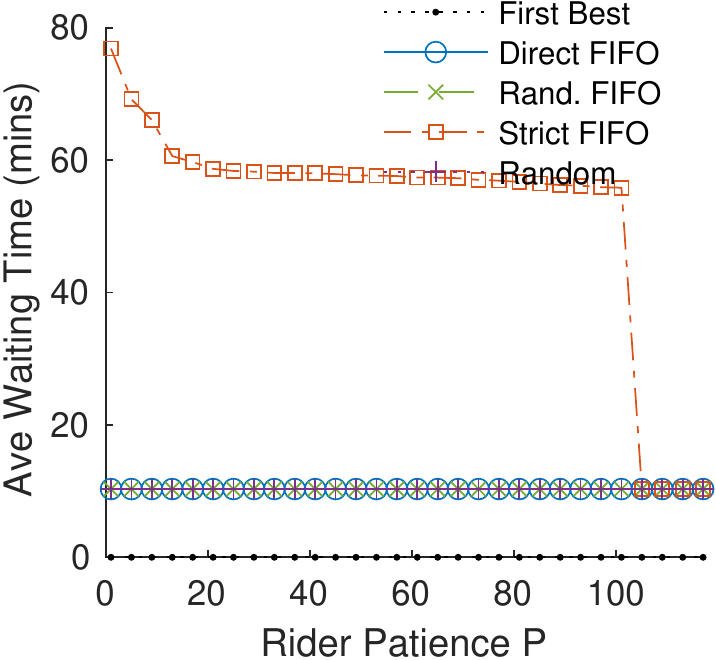}
    \caption{Average waiting time. \label{fig:varying_P_ave_wait_ohare}}
\end{subfigure}%
\begin{subfigure}[t]{\subfigWidthThree \textwidth}
	\centering
    \includegraphics[height=\figHeight in]{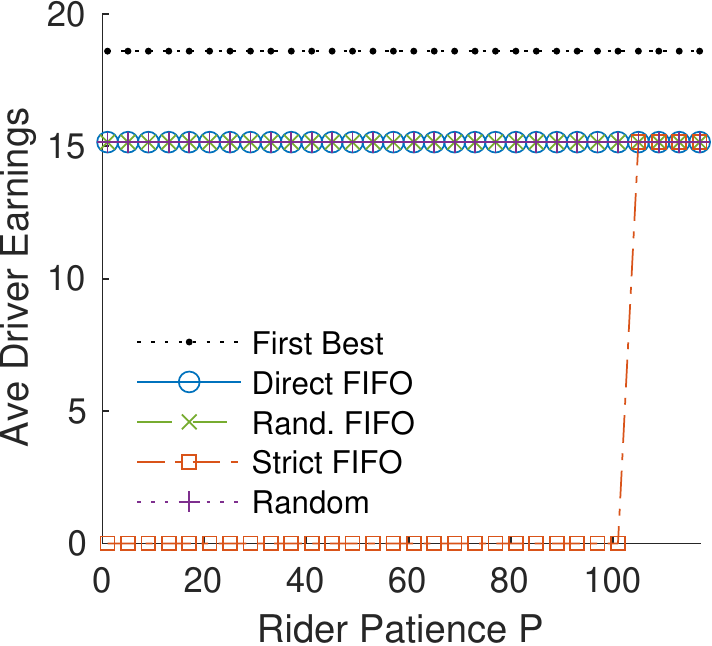}
    \caption{Average driver payoff. \label{fig:varying_P_ave_driver_earning_ohare}}
\end{subfigure}%
\begin{subfigure}[t]{\subfigWidthThree \textwidth}
	\centering
    \includegraphics[height=\figHeight in]{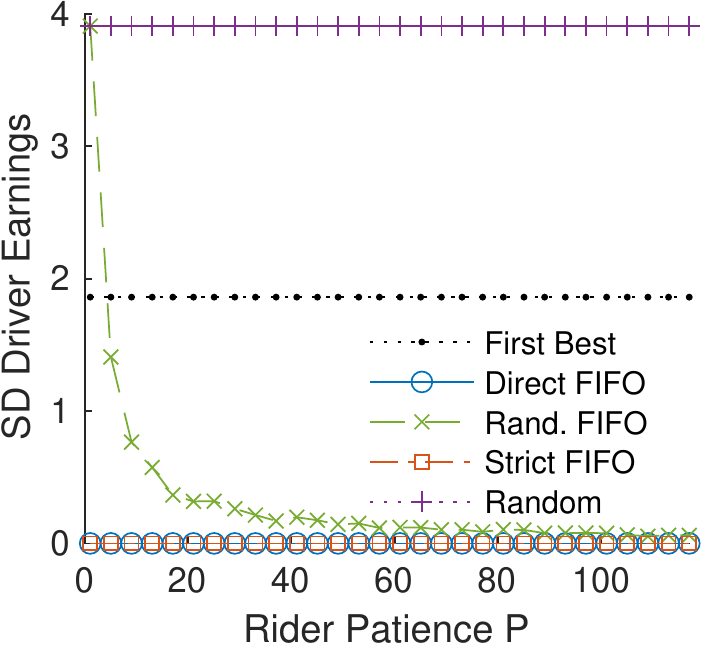}
    \caption{SD of driver payoffs. \label{fig:varying_P_sd_driver_earning_ohare}}
\end{subfigure}%
\caption{Drivers' average waiting times, average payoff, and the standard deviation (SD) in drivers' payoff in equilibrium in steady state, as riders' patience level varies. Chicago O'Hare. \label{fig:varying_patience_ohare_2}}   
\end{figure}

Strict FIFO dispatching, on the other hand, performs poorly. As the patience level increases, %
trips to more destinations can reach drivers in the queue who are willing to accept them, thus the %
throughput increases. The net revenue %
and the average driver payoff remain at zero, however, because drivers continue to join the queue until the payoff from joining is no better than that from leaving without a rider.
Once $\patience$ exceeded $100$, strict FIFO is finally able to dispatch all drivers that arrive at the airport, %
achieving the second best net revenue.
This level of rider patience is not practical, however, since even when each driver decline takes only 10 seconds, $\patience > 100$ requires that riders wait for over fifteen minutes to get matched to a driver. %

\section{Conclusion} \label{sec:conclusion}

We study the dispatching of trips to drivers in a queue, where some trips are necessarily more lucrative than the others due to operational constraints. We propose a family of randomized FIFO mechanisms, which send declined trips gradually down the queue in a randomized manner, and
achieve in equilibrium the %
highest possible revenue and throughput %
under any mechanism that is transparent and flexible.
Extensive counterfactual simulations %
demonstrate substantial improvements of throughput and revenue in comparison to the status quo strict FIFO dispatching, highlighting the effectiveness of using drivers' waiting times in the queue to align incentives, improve efficiency and reliability, and reduce the variability in driver earnings.

From a technical perspective, our setting generalizes existing work in the literature by modeling rider impatience and endogenizing drivers' decisions to join, leave, or re-join the platform. The randomized FIFO mechanisms we propose are also appealing for practice since drivers are provided trip destination and earnings information upfront, as well as the flexibility to freely accept or decline any dispatches. Even when a mechanism is allowed to %
impose penalties such that drivers would lose their position in the queue after declining a dispatch (i.e., moving back to the tail of the queue), no such mechanism can achieve a higher better throughput, revenue, or total driver earnings. %
All things considered, the randomized FIFO mechanism achieves a desirable balance between efficiency, flexibility, fairness and variability in driver earnings.

\bibliography{Refs-randomized-fifo}

\newpage

\appendix

\noindent{}\textbf{\huge{Appendix}}

\bigskip

\noindent{}%
Appendix~\ref{appx:proofs} provides proofs omitted from the body of the paper. We derive the equilibrium %
outcome under various mechanisms and benchmarks in Appendix~\ref{appx:var_of_earnings_derivation}. Additional examples and discussions are provided in Appendix~\ref{appx:additional_discussion}, and we include in Appendix~\ref{appx:additional_simulations} detailed description of the data from the City of Chicago as well as additional simulation results. %
 
\section{Proofs} \label{appx:proofs}

\subsection{Equilibrium Outcome Under Strict FIFO} \label{appx:proof_strict_FIFO}

Before formally stating and proving the equilibrium outcome under strict FIFO dispatching, we first provide the following result on necessary conditions of best-response strategies.

Recall from Section~\ref{sec:dynamic_mech} that given a mechanism, $\pi(\queue, \Queue, \sigma, \sigma')$ denotes the expected continuation payoff (net earnings from trip minus waiting costs) of a driver at position $\queue \geq 0$ in the queue, when the length of the queue is $\Queue \geq \queue$, when this driver adopts strategy $\sigma$, and when every other driver employs strategy $\sigma'$.
Moreover, under a strategy $\sigma = (\alpha, \beta, \gamma)$, $\alpha(\queue,\Queue,i)$, $\beta(\queue,\Queue)$ and $\gamma(\queue,\Queue)$ denote the probability for a driver to (i) accept a trip to location $i \in \loc$, (ii) re-joins the queue at the tail, and (iii) leave the queue without a rider, when the length of the queue is $\Queue \geq 0$ and when the driver is at some position $\queue \in [0, \Queue]$.

\begin{lemma} \label{lem:best_response_necessary_conditions}
Fix a strategy $\sigmast$ adopted by the rest of the drivers. $\sigma = (\alpha, \beta, \gamma)$ is a \emph{best-response strategy} only if for any queue length $\Queue \geq 0$ and at any position in the queue $\queue \leq \Queue$, 
\begin{enumerate}[(i)]
	\item the driver accepts (or declines) with probability one trips for which the net earnings is strictly above (or below) the continuation payoff, i.e. for all $i \in \loc$, $\gb_i > \pi(\queue, \Queue, \sigma, \sigmast) \Rightarrow \alpha(\queue, \Queue, i)=1$, and $\gb_i < \pi(\queue, \Queue, \sigma, \sigmast) \Rightarrow \alpha(\queue, \Queue, i) = 0$,
	\item the driver rejoins at the tail of the queue with probability one (or zero) if the continuation payoff at the tail of the queue is strictly higher (or lower), i.e. $ \pi(\queue, \Queue, \sigma, \sigmast) < \pi(\Queue, \Queue, \sigma, \sigmast) \Rightarrow  \beta(\queue, \Queue) = 1$ and $ \pi(\queue, \Queue, \sigma, \sigmast) > \pi(\Queue, \Queue, \sigma, \sigmast) \Rightarrow  \beta(\queue, \Queue) = 0$, and
	\item the driver leaves the queue without a rider trip with probability one (or zero) if the continuation payoff is strictly negative (or positive), i.e. $\pi(\queue, \Queue, \sigma, \sigmast) < 0 \Rightarrow \gamma(\queue, \Queue) = 1$ and $\pi(\queue, \Queue, \sigma, \sigmast) > 0 \Rightarrow \gamma(\queue, \Queue) =0$.
\end{enumerate}
\end{lemma}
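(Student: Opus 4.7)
The plan is to prove each of the three assertions via a one-shot deviation argument, relying on the interpretation of $\pi(\queue, \Queue, \sigma, \sigmast)$ as the expected payoff of a driver playing $\sigma$ from state $(\queue, \Queue)$ onward against opponents playing $\sigmast$. In each case I assume the stated condition is violated, construct a modified strategy $\tilde\sigma$ that agrees with $\sigma$ everywhere except at a single decision at $(\queue,\Queue)$, and verify that $\tilde\sigma$ strictly improves the expected continuation payoff at $(\queue,\Queue)$, contradicting the best-response assumption.

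Part (iii) is the cleanest to dispatch first. If $\pi(\queue,\Queue,\sigma,\sigmast) < 0$ but $\gamma(\queue,\Queue) < 1$, the modification $\tilde\gamma(\queue,\Queue)=1$ yields continuation payoff $0 > \pi(\queue,\Queue,\sigma,\sigmast)$, so $\sigma$ cannot be a best response. The converse direction $\pi > 0 \Rightarrow \gamma = 0$ is symmetric: leaving gives $0$, so any positive mass on $\gamma$ is strictly dominated by $\tilde\gamma(\queue,\Queue)=0$. Part (ii) uses the same template together with the observation that rejoining the tail is instantaneous and costs nothing: the continuation value from rejoining and then playing $\sigma$ at position $\Queue$ equals $\pi(\Queue,\Queue,\sigma,\sigmast)$. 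If $\pi(\queue,\Queue,\sigma,\sigmast) < \pi(\Queue,\Queue,\sigma,\sigmast)$ but $\beta(\queue,\Queue)<1$, the modification $\tilde\beta(\queue,\Queue)=1$ strictly improves; the converse is analogous.

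Part (i) is the most delicate. The argument exploits that in the continuous-time, non-atomic model, declining a single trip offer at $(\queue,\Queue)$ is an instantaneous event and the mechanism imposes no penalty for declining, so the driver's state $(\queue,\Queue)$ is unchanged immediately after the decline. Hence the continuation value conditional on declining a trip-$i$ offer at $(\queue,\Queue)$ and then playing $\sigma$ is exactly $\pi(\queue,\Queue,\sigma,\sigmast)$. The expected payoff at a trip-$i$ offer event is therefore
\[
	\alpha(\queue,\Queue,i)\,\gb_i + \bigl(1 - \alpha(\queue,\Queue,i)\bigr)\,\pi(\queue,\Queue,\sigma,\sigmast).
\]
If $\gb_i > \pi(\queue,\Queue,\sigma,\sigmast)$, then raising $\alpha(\queue,\Queue,i)$ to $1$ strictly increases this quantity by $(1-\alpha(\queue,\Queue,i))(\gb_i - \pi(\queue,\Queue,\sigma,\sigmast)) > 0$; the converse $\gb_i < \pi \Rightarrow \alpha = 0$ is symmetric.

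The main obstacle is justifying the identification in part (i) of the conditional-on-declining continuation value with $\pi(\queue,\Queue,\sigma,\sigmast)$. This rests on the mechanism's transparency and no-penalty properties together with the continuous-time, non-atomic structure: a single decline neither advances the driver's position measurably, nor alters the rule the mechanism applies from $(\queue,\Queue)$ onward, so the post-decline subgame is distributionally identical to the pre-decline one. Once this is granted, the three deviation arguments are routine, and the lemma follows by taking the contrapositive in each case.
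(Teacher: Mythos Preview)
Your proposal is correct and follows essentially the same one-shot deviation argument as the paper, whose proof is a single short paragraph immediately after the lemma statement. Your treatment is more detailed—particularly in writing out the convex-combination identity for part~(i) and in flagging the continuous-time, no-penalty justification for identifying the post-decline continuation value with $\pi(\queue,\Queue,\sigma,\sigmast)$—but the underlying logic is identical.
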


When the length of the queue is $\Queue$, a driver at location $\queue$ who is dispatched a trip to location $i \in \loc$ faces the decision of whether to accept the trip and get a continuation payoff of $\gb_i$, or to decline the trip and remain in the queue. The continuation payoff from remaining in the queue given strategy $\sigma$ is $\pi(\queue,\Queue,\sigma,\sigmast)$, thus a best response must satisfy condition (i) in Lemma~\ref{lem:best_response_necessary_conditions}. 
Similarly, it is easy to see that a violation of either condition (ii) or (iii)  leads to a useful deviation that improves the driver's payoff, contradicting the assumption that $\sigma$ is a best-response strategy. 

Condition (i) also implies that an optimal acceptance strategy $\alpha$ must have a cut-off structure, such that for any $\Queue \geq 0$ and any $\queue \in [0, \Queue]$, $\alpha( \queue, \Queue, i) > 0$ for  location $i \in \loc$ implies $\alpha(\queue, \Queue, j) =1$ for all $j < i$, since trips to these destinations have higher net earnings. 

\bigskip

Recall that $\numSkip_1 \triangleq 0$%
, and observe that for each $i \geq 2$, $\numSkip_i$ as defined in \eqref{equ:first_accpet_positions} can be rewritten as:
\begin{align}
	\numSkip_i & \triangleq  \sum_{j = 1}^{i-1} \left( \frac{\gb_j - \gb_{j+1}}{ \cost}  \sum_{k=1}^j \mu_k \right) = \sum_{j = 1}^{i-1} \frac{\gb_j - \gb_i}{\cost} \mu_j.
	 \label{equ:first_accpet_positions_appx}
\end{align}
Similarly, $\maxQL$ as defined in \eqref{equ:max_eq_queue_length} can be rewritten as: %
\begin{align}
		\maxQL \triangleq \numSkip_\numLoc + \frac{\gb_\numLoc}{\cost} \sum_{i = 1}^\numLoc \numRiders_i = \sum_{i \in \loc} \gb_i \mu_i / \cost, \label{equ:max_eq_queue_length_appx}
\end{align}
and it is straightforward to see that $\numSkip_i \leq \maxQL$ for all $i \in \loc$. 
We now formally state and prove Lemma~\ref{lem:strict_FIFO_eq}, on the equilibrium outcome of strict FIFO dispatching when riders have infinite patience and never cancel their trip requests.

\newcount\lemmaCount
\lemmaCount = \thelemma

\addtocounter{lemma}{-\thelemma}

\begin{lemma}[SPE under strict FIFO with infinite rider patience] 
\label{lem:strict_fifo_spe_formal}
Assume that riders are infinitely patient. Under strict FIFO dispatching, it is a subgame-perfect equilibrium for drivers to:\footnote{Note that the strategy prescribed by this lemma is a particular SPE. There exist other strategies that may form an SPE among the drivers, depending on how drivers break ties between alternatives with equal continuation payoffs. %
}
\begin{enumerate} [$\bullet$]
	\item accept trips to each location $i \in \loc$ if and only if the driver is at position $\queue \geq \numSkip_i$ in the queue, and
	\item join the queue if and only if the length of the queue is weakly below $\maxQL$, and never leave the queue or move to the tail after joining.
\end{enumerate}
\end{lemma}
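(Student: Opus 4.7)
The plan is to fix the prescribed profile $\sigmast$ for all other drivers and to verify, at every subgame (any queue length $\Queue \geq 0$ and any position $\queue \in [0, \Queue]$), that the actions prescribed by $\sigmast$ form a best response for the tagged driver via the necessary conditions of Lemma~\ref{lem:best_response_necessary_conditions}. The bulk of the work is to compute the expected continuation payoff $\pi(\queue)$ of a (potentially deviating) driver under $\sigmast$ and to show that $\pi$ is piecewise linear, monotonically non-increasing in $\queue$, satisfies $\pi(\numSkip_i) = \gb_i$ for every $i \in \loc$, and vanishes at $\queue = \maxQL$.

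\textbf{Inductive computation of $\pi$.} Because the acceptance rule prescribed by $\sigmast$ depends only on queue position, under $\sigmast$ all absorption happens at the points $0 = \numSkip_1 < \numSkip_2 < \cdots < \numSkip_\numLoc$, with location-$i$ trips (arriving at rate $\mu_i$) taken at $\numSkip_i$ by the first driver willing to accept them. A tagged driver at $\queue \in [\numSkip_i, \numSkip_{i+1})$ therefore advances at the aggregate rate $\sum_{j=1}^{i} \mu_j$ (the total absorption rate at positions in $[0, \queue]$), reaches $\numSkip_i$ in time $(\queue - \numSkip_i)/\sum_{j \leq i}\mu_j$, and is immediately matched with a location-$i$ trip for payoff $\gb_i$. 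Hence
\[
    \pi(\queue) \;=\; \gb_i \;-\; \frac{\cost\,(\queue - \numSkip_i)}{\sum_{j=1}^{i} \mu_j}, \qquad \queue \in [\numSkip_i, \numSkip_{i+1}],
\]
where I extend the indexing by $\numSkip_{\numLoc+1} \triangleq \maxQL$ and $\gb_{\numLoc+1} \triangleq 0$ to handle the final segment uniformly. The induction is then one line: $\pi(\numSkip_1) = \pi(0) = \gb_1$ at the head, and the recursion $\numSkip_{i+1} - \numSkip_i = (\gb_i - \gb_{i+1})\sum_{j\leq i}\mu_j/\cost$ implicit in \eqref{equ:first_accpet_positions_appx}, together with \eqref{equ:max_eq_queue_length_appx} for the final step, yields $\pi(\numSkip_{i+1}) = \gb_{i+1}$. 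Crucially, $\pi$ depends on $\queue$ alone and not on the ambient queue length $\Queue$, since absorption rates ahead of the tagged driver are determined solely by her position.

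\textbf{Verifying best responses and SPE.} With $\pi$ in hand, each clause of Lemma~\ref{lem:best_response_necessary_conditions} falls into place. For the acceptance rule, at any $\queue \in (\numSkip_i, \numSkip_{i+1})$ we have $\pi(\queue) \in (\gb_{i+1}, \gb_i)$, so any best response must accept trips to locations $\{1, \dots, i\}$ and decline those to $\{i{+}1, \dots, \numLoc\}$; at the measure-zero threshold $\queue = \numSkip_i$ the driver is indifferent toward location-$i$ trips, making the prescribed tie-breaking admissible. For the join rule, the continuation payoff at the tail is $\pi(\Queue)$, which is non-negative iff $\Queue \leq \maxQL$. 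Mid-queue departure yields payoff $0 \leq \pi(\queue)$ and re-joining the tail yields $\pi(\Queue) \leq \pi(\queue)$ by monotonicity, so neither strictly improves upon remaining in place. Because both $\sigmast$ and $\pi$ depend only on position, the best-response verification holds uniformly across all ambient $\Queue$, upgrading the equilibrium from Nash to subgame perfect. The main technical obstacle is the motion-law step: rigorously justifying in the non-atomic continuous-time model that the tagged driver's forward speed equals the sum of absorption rates ahead of her and is insensitive to $\Queue$; once this kinematic fact is pinned down, the remainder is a bookkeeping exercise on the piecewise-linear function $\pi$.
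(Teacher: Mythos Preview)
Your proposal follows the same segment-by-segment inductive structure as the paper: compute the piecewise-linear continuation payoff $\pi(\queue)$ under $\sigmast$, verify $\pi(\numSkip_i)=\gb_i$ via the recursion implicit in \eqref{equ:first_accpet_positions_appx}, and observe that $\pi$ depends on position alone (so the analysis is uniform in $\Queue$, upgrading Nash to SPE). The computation of $\pi$ and the claim that the advancement speed at $\queue\in[\numSkip_i,\numSkip_{i+1})$ equals $\sum_{j\le i}\mu_j$ are correct and match the paper; the ``motion-law'' step you flag is just Little's Law applied to the non-atomic flow, which the paper also takes as primitive.

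There is, however, a logical slip in the verification step. Lemma~\ref{lem:best_response_necessary_conditions} gives only \emph{necessary} conditions for a best response $\sigma$, and those conditions are stated in terms of $\pi(\queue,\Queue,\sigma,\sigmast)$ --- the deviator's \emph{own} continuation payoff under $\sigma$ --- not $\pi(\queue,\Queue,\sigmast,\sigmast)$. Substituting the latter and observing that $\sigmast$ satisfies the resulting conditions is circular: it presumes the two continuation payoffs coincide, which is precisely what must be proved. What your checks actually accomplish (comparing $\gb_j$, $0$, and $\pi(\Queue)$ against $\pi(\queue)$) is a \emph{one-shot deviation} argument: at every position, the action prescribed by $\sigmast$ weakly dominates any alternative single action given that the continuation follows $\sigmast$. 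That is a valid route, but you must invoke the one-shot deviation principle explicitly rather than Lemma~\ref{lem:best_response_necessary_conditions}. The paper closes the gap differently: at each induction step it fixes an \emph{arbitrary} deviating strategy $\sigma$ and argues directly --- by casing on whether the deviator leaves before reaching the previous threshold $\numSkip_{i-1}$ or not --- that $\pi(\queue,\Queue,\sigma,\sigmast)\le\pi(\queue,\Queue,\sigmast,\sigmast)$, thereby establishing that $\pi(\cdot,\sigmast,\sigmast)$ is the value function without appealing to a deviation principle.
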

\addtocounter{lemma}{-\thelemma+\lemmaCount}

\begin{proof}
Let $\sigmast = (\alphast, \betast, \gammast)$ be the strategy specified by the lemma, i.e. for any queue length $\Queue \geq 0$ and any position in the queue $\queue \in [0, \Queue]$, %
\begin{align*}
	\alphast(\queue, \Queue, i) = & \one{\queue \geq \numSkip_i}, \\
	\betast(\queue,\Queue) = & 0, \\
	\gammast(\queue, \Queue) = & \one{\queue > \maxQL}.  
\end{align*}
Here, $\one{\cdot}$ is the indicator function. $\gammast(\queue, \Queue) = \one{\queue > \maxQL}$ means that the driver will leave (or not join) the queue  if and only if the driver's position is (or will be) later than $\maxQL$. 
What we need to show is that starting from any queue length $\Queue\geq 0$, assuming that the rest of the drivers all adopt strategy $\sigmast$, it is a best response for a driver at any position $\queue \in [0, \Queue]$ in the queue to also employ strategy $\sigmast$. We prove this by induction on (segments of) positions in the queue, starting from the head of the queue.

\medskip

\noindent{}\emph{The base case.}
First, consider the driver at the head of the queue (i.e. at position $\queue =0$). The (infinitesimal) driver does not have to wait any time for a dispatch to location $1$. The  continuation payoff for the driver at $\queue = 0$ under $\sigmast$ is therefore
\begin{align}
	\pi(0, \Queue, \sigmast, \sigmast) = \gb_1. \label{equ:strict_FIFO_pi_expression_1}
\end{align}
This is the highest net earnings a driver may get from any trip, thus no other strategy may achieve a higher payoff, and $\sigmast$ is a best response for the driver at the head of the queue.  

\medskip

\noindent{}\emph{The induction step.}
Now assume that for some $i \geq 2$, it is a best response for drivers at positions $\queue \leq \numSkip_{i-1}$ in the queue to employ strategy $\sigmast$, and that a driver's optimal continuation payoff starting from position $\queue = \numSkip_{i-1}$ onward is $ \pi(\numSkip_{i-1}, \Queue, \sigmast, \sigmast) = \gb_{i-1}$ for any $\Queue \geq \numSkip_{i-1}$. 
We prove the induction step by showing that:
\begin{enumerate}[(i)]
	\item $\sigmast$ is a best response for a driver at any position $\queue \in (\numSkip_{i-1},  \numSkip_{i}]$ in the queue, and
	\item %
	the optimal continuation payoff from position $\numSkip_i$ onward is $\pi(\numSkip_{i}, \Queue, \sigmast, \sigmast) = \gb_i$.
\end{enumerate}

We first compute the continuation payoff for drivers at positions $(\numSkip_{i-1}, \numSkip_{i}]$ in the queue, assuming that all drivers adopt strategy $\sigmast$. 
First, consider a driver at some position $\queue \in (\numSkip_{i-1}, \numSkip_{i})$. Under $\sigmast$, the driver accepts trips to locations $j < i$, does not leave the queue without a rider trip, or move to the tail of the queue. When all drivers adopt $\sigmast$, trips to locations $1$ through $i-1$ are accepted by drivers at or ahead of $\numSkip_{i-1}$, thus the driver at $\queue$ will not accept any trip she receives from the platform. 
By Little's Law, a driver at $\queue$ will wait $(\queue - \numSkip_{i-1})/\sum_{j \leq i-1} \mu_j$ units of time before she reaches position $\numSkip_{i-1}$ in the queue. By the induction assumption, the driver gets an optimal continuation payoff of $\gb_{i-1}$ starting from  $\numSkip_{i-1}$. As a result, for any $\queue \in (\numSkip_{i-1}, \numSkip_{i})$, 
\begin{align*}
	\pi(\queue, \Queue, \sigmast, \sigmast) = \pi(\numSkip_{i-1}, \Queue, \sigmast, \sigmast) - \cost (\queue - \numSkip_{i-1})/\sum_{j \leq i-1} \mu_j  = \gb_{i-1} -  \cost (\queue - \numSkip_{i-1})/\sum_{j \leq i-1} \mu_j. 
\end{align*}

Now consider a driver at position $\queue = \numSkip_i$ in the queue. If the driver is dispatched and accepts a trip to location $i$, she gets $\gb_i$. If not, the driver moves forward in the queue and her continuation payoff is again 
\begin{align*}
	\lim_{\queue \rightarrow \numSkip_i-} \pi(\queue, \Queue, \sigmast, \sigmast) = \gb_{i-1} - \cost (\numSkip_i - \numSkip_{i-1})/\sum_{j \leq i-1} \mu_j  = \gb_i.
\end{align*}
Combining the two cases, we know 
\begin{align}
	\pi(\queue, \Queue, \sigmast, \sigmast) = \gb_{i-1} -  \cost (\queue - \numSkip_{i-1})/\sum_{j \leq i-1} \mu_j, ~\forall \queue \in (\numSkip_{i-1}, \numSkip_{i}],
	 \label{equ:strict_FIFO_pi_expression_2}
\end{align}
and we have $\pi(\queue, \Queue, \sigmast, \sigmast) > \gb_i$ when $\queue < \numSkip_i$ and  $\pi(\numSkip_i, \Queue, \sigmast, \sigmast) = \gb_i$.

\smallskip

We now prove that $\sigmast$ is a best response for drivers at $(\numSkip_{i-1}, \numSkip_{i}]$ in the queue. Assume towards a contradiction, that there exists some strategy $\sigma$, and some $\queue \in (\numSkip_{i-1}, \numSkip_{i}]$, such that $\pi(\queue, \Queue, \sigma, \sigmast) > \pi(\queue, \Queue, \sigmast, \sigmast)$ for some $\Queue \geq \queue$. %
Note that the driver does not get dispatched any trip with net earnings higher than $\gb_i$ until the driver reaches position $\numSkip_{i-1}$ in the queue. Consider the following two scenarios:
\begin{enumerate}[$\bullet$]
	\item If the driver left the queue (with or without a rider) under $\sigma$ before she reaches $\numSkip_{i-1}$, the driver's payoff is upper-bounded by $\gb_i \leq  \pi(\queue, \Queue, \sigma, \sigmast)$. 
	\item When the driver did reach $\numSkip_{i-1}$ under $\sigma$, her %
	optimal continuation payoff from $\numSkip_{i-1}$ onward is %
	$\gb_{i-1}$ given the induction assumption. Moreover, the driver will incur a waiting cost of at least $ \cost (\queue - \numSkip_{i-1})/\sum_{j \leq i-1} \mu_j$ before reaching $\numSkip_{i-1}$ thus driver's continuation payoff starting from $\queue$ is again upper bounded by $\pi(\queue,\Queue, \sigmast, \sigmast)$.
\end{enumerate}
Combining the two cases, we know that $\pi(\queue, \Queue, \sigma, \sigmast) > \pi(\queue, \Queue, \sigmast, \sigmast)$  is not achievable for any $\queue \in (\numSkip_{i-1}, \numSkip_{i}]$ under any strategy $\sigma$, thus $\sigmast$ is a best response. This completes the proof of the induction step.

\medskip

\noindent{}\emph{End of the queue.} What we have proved by induction is that $\sigmast$ is a best response for any driver at positions $\queue \in [0, \numSkip_\ell]$ in the queue, and that the optimal continuation payoff from $\numSkip_\ell$ onward under any strategy is $\gb_\ell$. 
Now consider drivers at positions $\queue \in ( \numSkip_\numLoc, \maxQL]$ in the queue. Following $\sigmast$ implies waiting until reaching $\numSkip_\numLoc$ in the queue, thus the continuation payoff is:
\begin{align}
	\pi(\queue, \Queue, \sigmast, \sigmast) = \gb_\ell -  \cost (\queue - \numSkip_\ell)/\sum_{j \leq \ell} \mu_j \geq \gb_\ell - \cost (\maxQL - \numSkip_\ell)/\sum_{j \leq \ell} \mu_j= 0, ~\forall \queue \in (\numSkip_\ell, \maxQL]. \label{equ:strict_FIFO_pi_expression_3}
\end{align}
An argument very similar to the proof of the induction step shows that regardless of whether a driver reached $\numSkip_\ell$ in the queue or not, achieving continuation payoff higher than $\pi(\queue, \Queue, \sigmast, \sigmast)$ is not possible and $\sigmast$ is a best response. 
Moreover, $\pi(\maxQL, \maxQL, \sigmast, \sigmast) = 0$ holds, thus it is a best response to leave (or not join) the queue when at position $\queue > \maxQL$ (or when the length of the queue is longer than $\maxQL$). This completes the proof of this lemma. 
\end{proof}

\subsection{Incentive Compatibility and Optimality of Direct FIFO} \label{appx:proof_direct_FIFO}

In this section, we provide proofs for the incentive compatibility and optimality of the direct FIFO mechanism. 

\thmFIFOSkipSPE*

\begin{proof} 

Let $\sigmast$ denote the strategy of %
(i) always accepting trip dispatches from the direct FIFO mechanism, (ii) join the queue if and only if the length of the queue is $\Queue \leq \maxQL$, and (iii) once in the queue, never leave the queue without a rider or move to the tail of the queue.

To establish the incentive compatibility of the direct FIFO mechanism, we need to show that starting from any queue length $\Queue \geq 0$, and assuming the rest of the drivers all adopt strategy $\sigmast$, it is a best response for a driver %
to also employ strategy $\sigmast$. This can be established using a very similar (and in fact, slightly simpler) argument as in the proof of Lemma~\ref{lem:strict_fifo_spe_formal}.
We do not repeat the same arguments here but refer the readers to Appendix~\ref{appx:proof_strict_FIFO}, where we established %
the SPE under strict FIFO (assuming infinite rider patience) by induction on segments of the queue. %

\smallskip

It is also straightforward to show that the equilibrium continuation payoff under direct FIFO is also identical to that under strict FIFO dispatching where riders are infinitely patient. Combining equations \eqref{equ:strict_FIFO_pi_expression_1}, \eqref{equ:strict_FIFO_pi_expression_2} and \eqref{equ:strict_FIFO_pi_expression_3}, we have the following expression for the equilibrium continuation payoff of a driver at position $\queue \in [0, \Queue]$ in the queue for any queue length $\Queue \geq 0$ under the direct FIFO mechanism: 
\begin{align}
	\pi(\queue, \Queue, \sigmast, \sigmast) = \pwfun{
		\gb_1, & \txtif \queue = 0, \\
		\gb_{i-1} - \cost (\queue - \numSkip_{i-1})/\sum_{j \leq i-1}\mu_j , & \txtif \queue \in (\numSkip_{i-1}, \numSkip_i],~\forall i \geq 2, \\
		\gb_\numLoc -  \cost (\queue - \numSkip_{\numLoc})/\sum_{j \in \loc}\mu_j , & \txtif \queue \in (\numSkip_{\numLoc}, \maxQL], \\
		0,  & \txtif q > \maxQL.
		} \label{equ:pist_direct_fifo}
\end{align}
For any $\queue \leq \maxQL$, $\pi(\queue, \Queue, \sigmast, \sigmast)$ is non-negative, continuous, and monotonically decreasing in $\queue$, thus the direct FIFO mechanism is individually rational and envy-free (i.e., no drivers envies the other drivers in positions behind her in the queue).

Also observe that there is no randomness at all  in a driver's continuation payoff starting from any position in the queue, since at each point $\numSkip_i$, the driver gets precisely $\gb_i$ regardless of whether the driver accepted a trip to location $i$ and left, or if the driver moved forward in the queue. As a result, the individual rationality and envy-freeness properties also hold \emph{ex post}. This completes the proof of the theorem.
\end{proof}

\thmFIFOSB*

\begin{proof}
Let $\sigmast$ denote the equilibrium strategy of accepting all dispatches from the direct FIFO mechanism, and joining the queue if and only if the length of the queue is at most $\maxQL$ (see Theorem~\ref{thm:fifo_skip_spe}). 
We prove the optimality of the direct FIFO mechanism with the following three steps:
\begin{enumerate}[{Step} 1.]
    \item Establish the steady state outcome when all drivers adopt strategy $\sigmast$, and prove that the same set of trips that are completed under the first best outcome are also completed under direct FIFO.
    \item Show that in equilibrium, no transparent and flexible  mechanism is able to achieve a higher %
    total payoff than that under direct FIFO %
    for all drivers who arrive at the queue. %
    \item Complete the proof that no mechanism is able to achieve a better net revenue. 
\end{enumerate}
We start from Step~1.

\medskip

\noindent{}\emph{Step 1.}
We first establish the steady state equilibrium outcome under the direct FIFO mechanism. There are two cases, depending on whether the platform is over or under-supplied.

\smallskip 
 
\noindent{}\emph{Step 1.1: $\lambda > \sum_{i \in \loc} \mu_i$.}
We first show that in the over-supplied case, when all drivers adopt strategy $\sigmast$, $\equQL = \maxQL$ is a steady-state queue length. 
To prove this, first observe that with $\equQL = \maxQL \geq \numSkip_i$ for all $i \in \loc$, all rider trips are accepted. The rate at which drivers are dispatched from the queue is $\sum_{i \in \loc} \mu_i < \lambda$, thus drivers effectively join the queue with probability $ \sum_{i \in \loc} \mu_i/ \lambda $ and the length of the queue remains constant at $\equQL = \maxQL$. 
Observe that the total payoff achieved by all drivers who arrive at the queue is zero, because a driver gets a zero payoff regardless of whether she join the queue upon arrival, or left immediately without joining.

We also show that $\equQL = \maxQL$ is the unique steady queue length, by proving that starting from any queue length $\Queue \neq \maxQL$, the length of the queue will converge to $\maxQL$ within a finite amount of time. First, we know from \eqref{equ:pist_direct_fifo} that the equilibrium continuation payoff of a driver at any position $\queue < \maxQL$ in the queue is strictly positive. 
If the length of the queue $\Queue$ is strictly shorter than $\maxQL$, a driver strictly prefers to join the queue upon arrival, and drivers join the queue at a rate of $\lambda$ under $\sigmast$. 
This cannot be the steady state outcome, since the rate at which drivers are dispatched from the queue is at most $\sum_{i \in \loc} \mu_i < \lambda$, and even lower when $\Queue < \numSkip_\ell$. As a result, the queue length will grow at a rate of at least $\lambda - \sum_{i \in \loc} \mu_i$, whenever $\Queue < \maxQL$. 
Moreover, any queue length $\Queue > \maxQL$ cannot be an steady sate either, since \eqref{equ:pist_direct_fifo} implies that the drivers at positions $\queue > \maxQL$ have strictly negative continuation payoffs, thus will leave the queue immediately.

\smallskip

\noindent{}\emph{Step 1.2: $\lambda \leq \sum_{i \in \loc} \mu_i$.}
Recall that when a platform is not over-supplied, $\maxJ \in \loc$ as defined in \eqref{equ:maxJ} denotes the lowest-earning (i.e. highest index) trip that is (partially) completed under the first best outcome, when the $\lambda$ units of drivers are dispatched to destinations in decreasing order of $\gb_i$. 

We first show that $\equQL = \numSkip_{\maxJ}$ is a steady state equilibrium queue length. When the length of the queue is $\numSkip_{\maxJ}$, all trips to locations $i < \maxJ$ will be dispatched and accepted by drivers in the queue. $\sum_{i < \maxJ} \mu_i$ out of the $\lambda$ drivers move forward in the queue upon arrival, and the remaining $\lambda - \sum_{i < \maxJ} \mu_i$ drivers leave the queue immediately with trips to location $\maxJ$ that are dispatched to the tail of the queue $\equQL = \numSkip_\maxJ$. In this way, rate at which drivers join the queue is the same as the rate at which drivers are dispatched from the queue, and the length of the queue remains at $\numSkip_\maxJ$.

Observe that the set of trips completed in steady state under direct FIFO is the same as those completed under the first best outcome. Moreover, a driver gets a payoff of $\gb_\maxJ$ regardless of whether the driver accepted a trip to location $\maxJ$ immediately after arrival. As a result, the total payoff of all drivers is $\lambda \gb_\maxJ$ per unit of time.

We also show that $\equQL = \numSkip_\maxJ$ is the unique steady state queue length for all non-degenerate economies, meaning that $\lambda \neq \sum_{j = 1}^i \mu_j$ for any $i \in \loc$. Consider the following two scenarios:  
	\begin{enumerate}[$\bullet$]
		\item When the length of the queue is $\Queue < \numSkip_{\maxJ}$, trips to locations $j \geq \maxJ$ are not dispatched under the direct FIFO mechanism. The excess drivers, however, will still join the queue under $\sigmast$ (at $\Queue < \maxQL$, the payoff from joining is strictly positive). As a result, the length of the queue will grow at a rate at least $\lambda - \sum_{j < \maxJ} \mu_ > 0j$, as long as it is strictly below $\numSkip_\maxJ$.
		\item When $\Queue > \numSkip_\maxJ$, all trips to locations $j \leq \maxJ$ are dispatched and accepted under direct FIFO. As a result, the length of the queue will decrease at a rate of $\sum_{j \leq \maxJ} \mu_j - \lambda$  when $\lambda < \sum_{j \leq \maxJ} \mu_j$, until it reaches $\Queue = \numSkip_{\maxJ}$. %
		In the degenerate case where $\lambda = \sum_{j \leq \maxJ} \mu_j$, any queue length between $\numSkip_\maxJ$ and $\numSkip_{\maxJ + 1}$ may be a steady  state queue length, and we break ties in favor of shorter queues under the direct FIFO mechanism. %
	\end{enumerate}
\smallskip

Combining the two settings in Step 1.1 and 1.2, we know that the same set of trips that are completed under the first best outcome are also completed under direct FIFO. This implies that the direct FIFO mechanism achieves in equilibrium the first best steady state trip throughput of $\tp\direct = \min\{\sum_{i \in \loc} \mu_i,~\lambda\}$. Moreover, when $\mechCost = 0$, the outcome under direct FIFO also achieves the first best revenue, since the total net earnings from trips is the same as that under the first best, and drivers' lining up in the queue is not costly for the platform.

\medskip

\noindent{}\emph{Step 2.}
We now prove that it is not possible to improve the total payoff of all drivers who had arrived at the queue, when  drivers have access to trip destinations upfront and have the option to decline trips and to re-join the queue at the tail at any point of time. 
Again we discuss the under-supplied and the over-supplied cases separately. 

\smallskip

\noindent{}\textit{Step 2.1: $\lambda > \sum_{i \in \loc} \mu_i$.} 
We need to prove that in equilibrium, under any mechanism that is transparent and flexible, drivers cannot get a strictly positive average payoff after arriving at the queue.
To show this, consider a mechanism $\mech$ that is flexible and transparent. %
It cannot be a steady state equilibrium under $\mech$ for every driver to leave the queue with a rider trip. %
As a result, some driver must willingly leave without a rider, and the net payoff of such drivers is non-positive.

Assume towards a contradiction that $\mech$ achieves a strictly positive average driver payoff, and let $\sigma'$ and $\Queue'$ denote the equilibrium strategy under $\mech$, and the steady state queue length under $\mech$, respectively. 
The expected continuation payoff of a driver who joined the queue at the tail must be strictly positive: $\pi_\mech(\Queue', \Queue', \sigma', \sigma') > 0$. This is because the drivers who did not join the queue upon arrival (if any) have zero net earnings thus if $\pi_\mech(\Queue', \Queue', \sigma', \sigma') \leq 0$, the average payoff of all drivers who arrived at the queue will be non-positive. 
$\pi_\mech(\Queue', \Queue', \sigma', \sigma') > 0$, however, contradicts the assumption that the outcome forms an equilibrium. In this case, no driver will be willing to leave the queue without a rider trip, since it is a useful deviation to join the queue at the tail and get a strictly positive payoff. 

\smallskip

\noindent{}\textit{Step 2.2: $\lambda \leq \sum_{i \in \loc} \mu_i$.} 
As we've shown in Step 1.2, in this case drivers have an average payoff of $\gb_\maxJ$ after arriving at the queue, where $\maxJ$ is the lowest earning trip that is (partially) completed in equilibrium. What we need to prove is that under any mechanism $\mech$ that does not penalize drivers for declining dispatches or rejoining the queue at the tail, the average payoff of a driver who arrived at the queue cannot exceed $\gb_\maxJ$.

First, by definition of $\maxJ$, it cannot be a steady state equilibrium under $\mech$ for every driver who arrive at the virtual queue to leave the queue with a rider trip to a location $j < \maxJ$. As a result, some driver must leave with a trip to some location $j \geq \maxJ$, or leave without a rider. In both cases, the driver's continuation payoff after accepting a trip or leaving the queue is upper bounded by $\gb_\maxJ$. 
This cannot form an equilibrium when $\pi_\mech(\Queue', \Queue', \sigma', \sigma') > \gb_\maxJ$ (since a driver is better off re-joining the queue at the tail instead, %
therefore $\pi_\mech(\Queue', \Queue', \sigma', \sigma') \leq \gb_\maxJ$ must hold. 

This completes the proof of Step 2.

\medskip

\noindent{}\emph{Step 3.} 
We now prove that no mechanism can achieve a higher net revenue in equilibrium than that under the direct FIFO mechanism. The case of $\mechCost = 0$ was already discussed in Step 1. The case where $\mechCost = \cost$ is also straightforward, since in this case the net revenue of the platform is equal to the total net payoff of all drivers combined (see discussions in Section~\ref{sec:direct_FIFO}), thus Step~2 implies that no mechanism can achieve a higher net revenue.

What is left to prove is the case where $\mechCost \in (0, \cost)$. %
Consider an alternative mechanism $\mech$, and let $\{ \mutilde_i \}_{i \in \loc}$ be the rate at which mechanism $\mech$ completes trips to each destination in equilibrium in steady state. %
We are going to prove that the net revenue under $\mech$ is optimized when the outcome under $\mech$ is the same as that under direct FIFO, and we again discuss the over and under-supplied cases separately.

\smallskip

\noindent{}\emph{Step 3.1: $\lambda > \sum_{i \in \loc} \mu_i$.} Given Step~2, drivers get a total payoff of zero under $\mech$. Assuming that the equilibrium queue length is $\equQL_\mech$, we have:
    \begin{align}
        \sum_{i \in \loc} \mutilde_i \gb_i - \cost \equQL_\mech = 0. \label{equ:thm_1_equality_condition_1}
    \end{align}
    The platform, however, may still get a non-zero net revenue
    \begin{align*}
        \rev_\mech = \sum_{i \in \loc} \mutilde_i \gb_i - \mechCost \equQL_\mech  = (\cost - \mechCost)  \equQL_\mech \geq 0,
    \end{align*}
    and it is straightforward to see that %
    $\rev_\mech$ is optimized when $\equQL_\mech$ is the maximized. With \eqref{equ:thm_1_equality_condition_1}, $\equQL_\mech = \sum_{i \in \loc} \mutilde_i \gb_i/\cost$ is maximized when $\mutilde_i = \mu_i$ for all $i \in \loc$. This is the same outcome as that under the direct FIFO mechanism, thus no mechanism can achieve a better net revenue.

\smallskip

\noindent{}\emph{Step 3.2: $\lambda \leq \sum_{i \in \loc} \mu_i$.} In this case, drivers get an average payoff of $\gb_\maxJ$ under direct FIFO, and the equilibrium queue length is $\equQL\direct = \numSkip_\maxJ$. 
Let $\tp_\mech \triangleq \sum_{i \in \loc} \mutilde_i$ denote the trip throughput under mechanism $\mech$, and let $\equtil_\mech$ be the average equilibrium payoff of drivers achieved under $\mech$. Consider the following two cases:
\begin{enumerate}[$\bullet$]
    \item $\tp_\mech < \lambda$, in which case not all drivers receive rider trips in equilibrium under $\mech$. 
    An argument very similar to that in Step~2 shows that in this case, the average payoff of a driver who joined the queue upon arrival must be zero, thus $\equtil_\mech = 0$. Similar to the over-suppllied setting, we have
    \begin{align*}
    	\sum_{i \in \loc} \mutilde_i \gb_i - \cost \equQL_\mech = 0,
    \end{align*}
    which implies 
    \begin{align*}
      \rev_\mech = \sum_{i \in \loc} \mutilde_i \gb_i - \mechCost \equQL_\mech  = (\cost - \mechCost)  \equQL_\mech. 
    \end{align*}
    $ \rev_\mech$ is again optimized when $\equQL_\mech$ is the longest. For any fixed throughput $ \tp_\mech = \sum_{i \in \loc} \mutilde_i  < \lambda$,  the queue length $\equQL_\mech = \sum_{i \in \loc} \mutilde_i \gb_i / \cost$ is maximized when the $\tp_\mech$ units of drivers are dispatched to  trips in decreasing order of $\gb_i$, and this implies that the net revenue $\rev_\mech =  (\cost - \mechCost)  \equQL_\mech$ is upper bounded by:
    \begin{align*}
    	\rev_\mech  \leq & (\cost - \mechCost) \left( \sum_{i < \maxJ} \mu_i \gb_i / \cost + (\lambda - \sum_{i < \maxJ}\mu_i) \gb_\maxJ / \cost \right) \\
    	= & \sum_{i < \maxJ} \mu_i \gb_i +(\lambda - \sum_{i < \maxJ}\mu_i) \gb_\maxJ - \frac{\mechCost}{\cost} \left(\sum_{i < \maxJ} \mu_i \gb_i + (\lambda - \sum_{i < \maxJ}\mu_i) \gb_\maxJ \right)
	\end{align*}     
	This is weakly below the net revenue under direct FIFO, which can be written as:
	\begin{align*}
		\rev\direct =& \sum_{i < \maxJ} \mu_i \gb_i + (\lambda - \sum_{i < \maxJ}\mu_i) \gb_\maxJ - \mechCost\numSkip_\maxJ \\
		= &  \sum_{i < \maxJ} \mu_i \gb_i +(\lambda - \sum_{i < \maxJ}\mu_i) \gb_\maxJ - \frac{\mechCost}{\cost} \left(\sum_{i < \maxJ} \mu_i (\gb_i - \gb_\maxJ) + (\lambda - \sum_{i < \maxJ}\mu_i) (\gb_\maxJ - \gb_\maxJ) \right).
	\end{align*}
    \item Consider now the case where $\tp_\mech = \lambda$. Drivers' getting an average payoff of $\equtil_\mech$ implies:
    \begin{align}
        \sum_{i \in \loc} \mutilde_i \gb_i - \cost \equQL_\mech = \lambda \equtil_\mech. \label{equ:thm_1_equality_condition_2}
    \end{align}
	For each $i \in \loc$, denote $ \Delta_i \triangleq (\gb_i - \gb_\maxJ) / \cost$. The equilibrium queue length can  be written as:
    \begin{align}
        \equQL_\mech = & \frac{1}{\cost} \left(  \sum_{i \in \loc} \mutilde_i \gb_i -  \sum_{i \in \loc} \mutilde_i \equtil_\mech \right)
        	=  \sum_{i \in \loc} \mutilde_i \left(\Delta_i + (\gb_\maxJ - \equtil_\mech)/\cost \right). 
    \end{align}
    The net revenue under $\mech$ is therefore of the form:
    \begin{align}
        \rev_\mech = & \sum_{i \in \loc} \mutilde_i \gb_i - \mechCost \equQL_\mech  %
        =  \sum_{i \in \loc} \mutilde_i \left(\gb_i - \mechCost \Delta_i  \right) - \lambda (\gb_\maxJ - \equtil_\mech) \mechCost/\cost. \label{equ:proof_of_thm_2_mech_rev}
    \end{align}
    For the first term in \eqref{equ:proof_of_thm_2_mech_rev}, $\gb_i - \mechCost \Delta_i = \gb_i -  (\gb_i - \gb_\maxJ) \mechCost/\cost = \gb_i (1-\mechCost/\cost) + \gb_\maxJ \mechCost/\cost$ is higher for smaller $i$ with higher $\gb_i$. 
    The second term $- \lambda (\gb_\maxJ - \equtil_\mech) \mechCost/\cost$ is non-positive given Step~2, therefore achieves its maximum when $\equtil_\mech = \gb_\maxJ$.   
    Putting the two parts together, we know that $\rev_\mech$ is optimized when when $\mutilde_i$ is maximized for smallest $i \in \loc$ first (until we have $\sum_{i\in\loc}\mutilde_i=\lambda$), in which case the average payoff achieves $\equtil_\mech = \gb_\maxJ$. 
    This is, again, the same outcome as that under direct FIFO.  
\end{enumerate}
This completes the proof of Step~3, and concludes the proof of the optimality of direct FIFO.
\end{proof}

\subsection{Optimality of Random Dispatching} \label{appx:proof_pure_random}

Before proving the optimality of random dispatching, we first provide the following lemma on the best response strategy of a driver in a stationary environment. %

\begin{lemma}\label{lem:single_driver_best_response_stationary} 
Consider a driver in a stationary environment, where she receives trip offers to each location $i \in \loc$ at a rate of $\eta_i \geq 0$. The highest achievable net payoff from any feasible strategy is $\max \left\lbrace \max_{j \in \loc} \rho_j, ~ 0 \right\rbrace$, 
where 
\begin{align}
	\rho_j \triangleq \left(\sum_{i = 1}^j \gb_i \eta_i - \cost \right) \bigg/ \sum_{i = 1}^j \eta_i. \label{equ:util_top_j}
\end{align}
Moreover, $\bestJ$ is a maximizer of $\rho_j$ if and only if $\rho_\bestJ \leq \gb_\bestJ$ and $\rho_\bestJ \geq \gb_{\bestJ+1}$.
\end{lemma}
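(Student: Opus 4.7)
The plan is to reduce the driver's problem to a one-shot optimal-stopping problem over cutoff strategies, and then characterize the maximizer via a short algebraic argument exploiting the monotonicity of $\gb_i$.

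First, I would argue that it suffices to consider \emph{stationary cutoff strategies}, i.e.\ strategies in which the driver fixes some $S \subseteq \loc$ and accepts an offer iff its destination lies in $S$. Because the environment is stationary and memoryless (arrivals of trips to each location $i$ come at a constant rate $\eta_i$, independent of history), standard dynamic programming arguments imply that an optimal policy may be taken to depend only on the currently offered trip type. Moreover, any stationary acceptance set $S$ can be replaced by a prefix $\{1,\dots,j\}$ without lowering the payoff: since $\gb_1 > \gb_2 > \dots > \gb_\ell$, the indifference threshold that makes a driver accept a lower-value offer $\gb_i$ must also make her accept any higher-value offer $\gb_{i'} < \gb_i$. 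The driver also has the option to simply leave immediately and collect zero.

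Next, I would compute the payoff of the cutoff-$j$ strategy (for $j \geq 1$). Given the stationary, Poisson-like offer process, the expected time until the next acceptable offer is $1/\sum_{i=1}^j \eta_i$, incurring expected waiting cost $\cost/\sum_{i=1}^j\eta_i$, and the expected trip earnings conditional on an acceptable arrival is $\sum_{i=1}^j \gb_i\eta_i / \sum_{i=1}^j \eta_i$. Subtracting gives exactly $\rho_j$ as defined in \eqref{equ:util_top_j}. Combined with the outside option of leaving, the highest achievable payoff is $\max\{0,\max_{j\in\loc}\rho_j\}$.

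Finally, I would establish the characterization of maximizers. Writing $N_j \triangleq \sum_{i=1}^j \eta_i$ and $M_j \triangleq \sum_{i=1}^j \gb_i\eta_i - \cost$, a direct computation shows
\begin{align*}
    \rho_{j+1} - \rho_j \;=\; \frac{\eta_{j+1}}{N_{j+1}}\bigl(\gb_{j+1} - \rho_j\bigr),
\end{align*}
so $\rho_{j+1} \leq \rho_j \iff \gb_{j+1} \leq \rho_j$, and analogously $\rho_{\bestJ-1}\leq \rho_\bestJ \iff \gb_\bestJ \geq \rho_\bestJ$. These are the two inequalities in the lemma. To upgrade this local optimality to a global one, I would observe that the sequence $\{\rho_j\}$ is \emph{unimodal}: whenever $\gb_{j+1} \leq \rho_j$, the update $\rho_{j+1}$ lies in $[\gb_{j+1},\rho_j]$, so $\rho_{j+1} \geq \gb_{j+1} \geq \gb_{j+2}$ by monotonicity of $\gb_i$, and iteratively $\rho_{j+k}$ stays nonincreasing. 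Hence any index satisfying the stated local conditions is indeed a global maximizer.

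The main obstacle is the reduction step in the first paragraph: formally justifying that no history-dependent or randomized policy can outperform a deterministic cutoff policy in this semi-Markov setting. I expect this to follow from the memoryless property of the offer process and a standard Bellman/renewal-reward argument identifying the continuation value $V$ with $\rho_\bestJ$ at the optimum; the remainder of the proof is essentially the two-line algebra above.
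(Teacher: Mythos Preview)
Your approach is essentially the same as the paper's: reduce to cutoff strategies, compute $\rho_j$ via a renewal argument, and use the weighted-average identity $\rho_{j+1}-\rho_j = \frac{\eta_{j+1}}{N_{j+1}}(\gb_{j+1}-\rho_j)$ to establish unimodality and the local-to-global characterization.

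The one place you and the paper differ is the reduction step you flag as the main obstacle. Rather than invoking a general Bellman/renewal-reward principle, the paper handles this directly and elementarily: Lemma~\ref{lem:best_response_necessary_conditions} already forces any best response to have a cutoff structure (accept $i$ with positive probability $\Rightarrow$ accept all $i'<i$ with probability one), so the only remaining freedom is randomization on the marginal trip type. The paper then simply computes the payoff of ``accept $1,\dots,j-1$ surely and accept $j$ with probability $\theta$'' and observes it is a convex combination of $\rho_{j-1}$ and $\gb_j$, hence maximized at $\theta\in\{0,1\}$. This two-line computation replaces the abstract dynamic-programming appeal and is worth adopting; it also makes transparent why re-joining or history dependence cannot help in a memoryless environment.
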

\begin{proof} 
Lemma~\ref{lem:best_response_necessary_conditions} implies that any best response strategy on acceptance in this setting must have a cutoff structure, meaning that if the driver accepts a trip to some location $j \in \loc$ with non-zero probability, then she must accept any trip to locations $i < j$ with probability $1$. Moreover, the driver will decide to leave the queue only if the expected continuation payoff from the optimal acceptance strategy is non-positive. %
We now show that the highest achievable net payoff under any best-response strategy is this stationary environment is $\max\{ \max_{j \in \loc} \rho_j, ~ 0 \}$.

Consider for now a deterministic strategy such that the driver stays in the queue, and accepts all trips to locations $1$ through $j$ if offered. We denote this strategy as $\sigma_j$.  
The average net earnings the driver gets from the an average trip she accepts is $\sum_{i = 1}^j \gb_i \eta_i / \sum_{i = 1}^j \eta_i$, and in expectation, the driver will wait $1 / \sum_{i = 1}^j \eta_i$ units of time to receive a trip dispatch she will accept. 
Therefore, the expected net payoff (i.e. the net earnings from trip a driver accepts minus her expected waiting cost) under strategy $\sigma_j$ is
\begin{align*}
	\sum_{i = 1}^j \gb_i \eta_i \bigg/ \sum_{i = 1}^j \eta_i - \cost \bigg/ \sum_{i = 1}^j \eta_i =  \left(\sum_{i = 1}^j \gb_i \eta_i - \cost \right) \bigg/ \sum_{i = 1}^j \eta_i  = \rho_j. 
\end{align*}
Among all deterministic strategies such that the driver does not leave, the highest achievable net payoff is therefore $\max_{j \in \loc} \rho_j$.  %

The cutoff structure proved by Lemma~\ref{lem:best_response_necessary_conditions} also implies that the only potentially useful randomization in a driver's acceptance strategy is on the lowest earning trip that is accepted. 
Consider a strategy where the driver accepts all trips to locations $1$ through $j-1$, but accepts location $j$ trips with probability $\theta  \in [0, 1]$. The expected net payoff in this setting is:
\begin{align*}
	& \left(\sum_{i = 1}^{j-1} \gb_i \eta_i + \gb_j \theta \eta_j - \cost \right) \bigg/ \left(\sum_{i = 1}^{j-1} \eta_i + \theta \eta_j \right) %
	= \left( \rho_{j-1} \sum_{i = 1}^{j-1} \eta_i + \gb_j \theta \eta_j \right) \bigg/ \left(\sum_{i = 1}^{j-1} \eta_i + \theta \eta_j \right).
\end{align*}
This is a weighted average of $\rho_{j-1}$ and $\gb_j$, thus can be optimized at $\theta = 0$ (or $\theta = 1$) if $\rho_{j-1} \geq \gb_j$ (or if $\rho_{j-1} \leq \gb_j$). Therefore, for a driver who does not choose to immediately leave the queue, the highest achievable net payoff can be achieved by a deterministic acceptance strategy, and the optimal payoff under any acceptance strategy is equal to $\max_{j \in \loc} \rho_j$. When this payoff is negative, the driver is better off leaving the queue instead of waiting for any trip dispatches. As a result, a driver's highest possible payoff a driver may achieve in this stationary environment is $\max \left\lbrace \max_{j \in \loc} \rho_j, ~ 0 \right\rbrace$.

\bigskip

What is left to show is that %
$\bestJ$ is a maximizer of $\rho_j$ if and only $\rho_\bestJ \leq \gb_\bestJ$ and $\rho_\bestJ \geq \gb_{\bestJ+1}$. To prove this, first observe that for any $j > 1$, $\rho_j$ is a weighted average of $\rho_{j-1}$ and $\gb_j$:
\begin{align}
	\rho_j = \left( \rho_{j-1} \sum_{i = 1}^{j-1} \eta_i + \gb_j  \eta_j \right) \bigg/ \sum_{i = 1}^j \eta_i. \label{equ:rho_j_as_a_weighted_average}
\end{align}
This implies (i) when $ \rho_j \geq \rho_{j-1}$, it must be the case that $\gb_j  \geq \rho_{j}  \geq \rho_{j-1}$, and (ii) $\rho_j \geq \rho_{j+1} \Rightarrow \rho_j \geq \gb_{j+1}$. 
Therefore, if $\bestJ$ is a maximizer of $\rho_j$, we must have $\rho_\bestJ \geq \rho_{\bestJ - 1} \Rightarrow \gb_\bestJ \geq \rho_\bestJ$, and $\rho_\bestJ \geq \rho_{\bestJ+1} \Rightarrow \rho_\bestJ \geq \gb_{\bestJ+1}$.

On the other hand, if $\rho_\bestJ \leq \gb_\bestJ$ and $\rho_\bestJ \geq \gb_{\bestJ+1}$ both hold, we now prove that $\bestJ$ must be a maximizer of $\rho_j$. 
Denote $\hat{j} \in \loc$ as the first location for which $\rho_{j} > \gb_{j+1}$, i.e. 
\begin{align}
	\hat{j} \triangleq \min\{ j \in \loc ~|~ \rho_{j} > \gb_{j+1}\}. \label{equ:min_j_with_higher_rho_than_w}
\end{align}

We first claim that $\rho_j$ must be monotonically non-decreasing when $j \leq \hat{j}$, i.e. for all $j < \hat{j}$, $\rho_{j} \leq \rho_{j+1}$. This is because for any $j < \hat{j}$,  $\rho_{j} \leq  \gb_{j+1}$ holds by definition of $\hat{j}$, thus by \eqref{equ:rho_j_as_a_weighted_average} we have $\rho_{j} \leq \rho_{j+1}$. 
Moreover, given \eqref{equ:rho_j_as_a_weighted_average} and the fact that $\gb_j$ is monotonically decreasing, we can prove by a simple induction ($\rho_{\hat{j}} > \gb_{\hat{j} + 1} \Rightarrow \rho_{\hat{j}} > \rho_{\hat{j} + 1} > \gb_{\hat{j} + 1} > \gb_{\hat{j} + 2}$ and so on)  that (i) $\rho_j$ must be monotonically decreasing for all $j \geq \hat{j}$, i.e. $\forall j \geq \hat{j}$, $\rho_{j} \geq \rho_{j+1}$, and (ii) $\rho_j > \gb_{j + 1}$ for all $j \geq \hat{j}$.
Combining the two cases, we know that $\hat{j}$ is a maximizer of $\rho_j$.

For $\bestJ$, we know from \eqref{equ:rho_j_as_a_weighted_average} that $\rho_\bestJ \leq \gb_\bestJ \Rightarrow \gb_\bestJ \geq  \rho_\bestJ \geq \rho_{\bestJ - 1}$, therefore $\bestJ -1 < \hat{j}$. %
Given $\rho_\bestJ \geq \gb_{\bestJ+1}$, consider the two possible scenarios. 
\begin{enumerate}[$\bullet$]
	\item If $\rho_\bestJ > \gb_{\bestJ+1}$, we must have $\bestJ \geq \hat{j}$, thus $\bestJ = \hat{j}$ holds and 
$\bestJ$ is a maximizer of $\rho_j$.
	\item If $\rho_\bestJ = \gb_{\bestJ+1}$, we have $\bestJ < \hat{j}$. Moreover, \eqref{equ:rho_j_as_a_weighted_average} implies $\rho_{\bestJ + 1} = \rho_{\bestJ} = \gb_{\bestJ+1} > \gb_{\bestJ+2}$, which means $\bestJ + 1 \geq \hat{j}$. As a result, $\bestJ = \hat{j} - 1$, and $\bestJ$ is still a maximizer of $\rho_j$ because $\rho_\bestJ = \rho_{\bestJ + 1} = \rho_{\hat{j}}$. 
\end{enumerate}

This completes the proof of this lemma. 
\end{proof}

With Lemma~\ref{lem:single_driver_best_response_stationary} at hand, we now prove the result on the equilibrium outcome under a mechanism that dispatches every trip request to all drivers in the queue, uniformly at random. 

\propPureRandOpt*

\begin{proof}
We prove this result by showing that the equilibrium outcome under random dispatching has the same queue length $\equQL$ as that under direct FIFO, and that the same set of trips that are completed under direct FIFO is also completed under random dispatching. Theorem~\ref{thm:fifo_skip_second_best} then implies the same optimality results for random dispatching. 

We discuss the over-supplied and under-supplied settings separately.

\medskip

\noindent{}\emph{Case 1: $\lambda > \sum_{i \in \loc} \mu_i$.} When the platform is over-supplied, we have proved in Theorem~\ref{thm:fifo_skip_second_best} that all rider trips are completed under direct FIFO, and that the equilibrium queue length is $\equQL = \maxQL$ (as defined in \eqref{equ:max_eq_queue_length}). 
We now prove that under random dispatching, when the queue length is $\maxQL$, it is a Nash equilibrium for drivers to (i) join the queue with probability $\sum_{i \in \loc} \mu_i / \lambda$ upon arrival, (ii)  accept all trip dispatches while in the queue, and (iii) never move to the tail of the queue or leave the queue after joining.

More formally, we prove that the strategy $\sigmast = (\alphast, \betast, \gammast)$ defined as follows forms a Nash equilibrium among the drivers when the queue length is $\maxQL$:
\begin{align*}
	\alphast(\queue, \maxQL, i) &= 1, ~\forall i \in \loc, ~\forall \queue \in [0, ~\maxQL], \\
	\betast(\queue, \maxQL) & = 0, ~\forall \queue \in [0, ~\maxQL], \\
	\gammast(\queue, \maxQL) & = \pwfun{0, & \txtif \queue < \maxQL \\
		1 - \sum_{i \in \loc} \mu_i / \lambda, & \txtif \queue = \maxQL.}	
\end{align*}
When all drivers adopt strategy $\sigmast$, the length of the queue remains at $\maxQL$, since the numbers of drivers who join the queue and who are dispatched from the queue are both $\sum_{i \in \loc} \mu_i $ per unit of time. All rider trips are completed, implying the same steady state revenue and trip throughput as those under direct FIFO.

\smallskip

We now prove that $\sigmast$ forms a Nash equilibrium among the drivers under random dispatching when the queue length is $\maxQL$. First, observe that when the queue length is $\maxQL$ and when the rest of the driver adopts $\sigmast$, (i) each rider trip is dispatched once since the probability of declines is zero, and (ii) a driver's position in the queue has no impact on the rate at which she receives dispatches to each destination. This is therefore a stationary setting we have analyzed in Lemma~\ref{lem:single_driver_best_response_stationary}. 
For a driver anywhere in the queue, the rate at which she receives dispatches to each location $i \in \loc$ is:
\begin{align*}
	\eta_i = \mu_i / \maxQL. 
\end{align*}

Recall from \eqref{equ:max_eq_queue_length_appx} that $\maxQL$ can be rewritten as $ \maxQL = \left(\sum_{i =1}^\ell \gb_i \mu_i  \right) / \cost$.  
Therefore, the expected payoff $\rho_j$ from accepting only trips to locations $1$ through $j$ (as defined in \eqref{equ:util_top_j}) is of the form:
\begin{align*}
	\rho_j = \left(\sum_{i = 1}^j \gb_i \eta_i - \cost \right) \bigg/ \sum_{i = 1}^j \eta_i =  \left(\sum_{i = 1}^j \gb_i \mu_i - \cost \maxQL \right) \bigg/ \sum_{i = 1}^j \mu_i =  \left(\sum_{i = 1}^j \gb_i \mu_i - \sum_{i =1}^\ell  \gb_i \mu_i  \right) \bigg/ \sum_{i = 1}^j \mu_i. 
\end{align*}

This implies $\rho_\ell = 0$, and $\rho_j < 0$ for all $j < \ell$. 
By Lemma~\ref{lem:single_driver_best_response_stationary}, we know that the best acceptance strategy is to accept all dispatches, which is aligned with $\alphast$. This also implies that when all drivers adopt $\sigmast$,  the continuation payoff of a driver anywhere in the queue is  $\pi(\queue, \maxQL, \sigmast, \sigmast) = \rho_\ell = 0$.

The drivers' being indifferent towards being in the queue and leaving the queue %
means that there is no useful deviation from joining the queue with probability $\sum_{i \in \loc} \mu_i / \lambda$ (hence the probability of not joining the queue is $\gammast(\maxQL, \maxQL)  = 1 - \sum_{i \in \loc} \mu_i / \lambda$). Moreover, re-joining the queue at the tail is not useful since a driver's position in the queue has no impact on the rate at which the driver receives trip dispatches. This completes the proof that $\sigmast$ forms a Nash equilibrium among the drivers, thus concludes the discussion for Case 1, the over-supplied setting.

\medskip

\noindent{}\emph{Case 2: $\lambda \leq \sum_{i \in \loc} \mu_i$.} In the case without excess drivers, $\maxJ$ as defined in \eqref{equ:maxJ} denotes the index of the lowest-earning trip that is (partially) completed in equilibrium under the direct FIFO mechanism and the first best. 
We know from Theorem~\ref{thm:fifo_skip_second_best} that the equilibrium queue length under direct FIFO is $\equQL = \numSkip_\maxJ$, and the drivers complete all trips to locations $j < \maxJ$, and in each unit of time the drivers also complete $\lambda - \sum_{i = 1}^ {\maxJ-1} \mu_i$ out of the $\mu_\maxJ$ trips to location $\maxJ$.

We now prove that random dispatching achieves the same equilibrium outcome (queue length and set of trips completed). 
Fix the length of the queue at $\equQL = \numSkip_\maxJ$, and consider the strategy $\sigmast = (\alphast, \betast, \gammast)$ such that for all $\queue \in [0, ~\maxQL]$,
\begin{align}
	\alphast(\queue, \maxQL, i) &= \pwfun{1, & \txtif i < \maxJ, \label{equ:rand_under_supplied_alphast} \\
		1 -  \left( 1-  (\lambda - \sum_{i =1}^{\maxJ-1}  \mu_i )/\mu_\maxJ\right)^{1 / \patience}, & \txtif i = \maxJ, \\
		0, & \txtif i < \maxJ, }  \\
	\betast(\queue, \maxQL) & = 0, \label{equ:rand_under_supplied_betast}  \\
	\gammast(\queue, \maxQL) & =0. \label{equ:rand_under_supplied_gammast} 
\end{align}
For simplicity of notation, let $\theta_\maxJ \triangleq 1 -  \left( 1-  (\lambda - \sum_{i =1}^{\maxJ-1}  \mu_i )/\mu_\maxJ\right)^{1 / \patience}$. When every driver adopts strategy $\sigmast$, each trip to locations $i  < \maxJ$ is dispatched once, the trip to location $\maxJ$ is dispatched $\sum_{k = 1}^\patience (1-\theta_\maxJ)^{(k-1)}\theta_\maxJ k + (1 - \theta_\maxJ)^\patience \patience = (1 - (1 - \theta_\maxJ)^\patience) / \theta_\maxJ$ times, and each trip to locations $i > \maxJ$ is dispatched $\patience$ times. Given the queue length $\equQL = \numSkip_\maxJ$, the rate at which a driver anywhere in the queue receives trip dispatches to each location is:
\begin{align*}
	\eta_i = \pwfun{
		\mu_i / \numSkip_\maxJ, & \txtif i < \maxJ, \\
		\mu_i  (1 - (1 - \theta_\maxJ)^\patience) / (\theta_\maxJ \numSkip_\maxJ), & \txtif i = \maxJ, \\
		\mu_i \patience / \numSkip_\maxJ, & \txtif i > \maxJ.
	}
\end{align*}

As we observed in \eqref{equ:first_accpet_positions_appx}, $\numSkip_\maxJ = \sum_{i =1 }^{\maxJ-1} (\gb_i - \gb_\maxJ) \mu_i/\cost $. %
For each $j < \maxJ$, the expected payoff from accepting only the top $j$ trips can be written as:
\begin{align*}
	\rho_j = %
	\left(\sum_{i = 1}^j \gb_i \mu_i - \cost \numSkip_\maxJ \right) \bigg/ \sum_{i = 1}^j \mu_i = \left(\sum_{i = 1}^j \gb_i \mu_i - \sum_{i =1}^{\maxJ-1} (\gb_i - \gb_\maxJ) \mu_i  \right) \bigg/ \sum_{i = 1}^j \mu_i. 
\end{align*}
This implies that:
\begin{align*}
	\rho_{\maxJ - 1} = \left(\sum_{i = 1}^{\maxJ - 1} \gb_i \mu_i - \sum_{i =1}^{\maxJ - 1} (\gb_i - \gb_\maxJ) \mu_i  \right) \bigg/ \sum_{i = 1}^{\maxJ - 1} \mu_i = \gb_\maxJ.
\end{align*}

We know from \eqref{equ:rho_j_as_a_weighted_average} that $\rho_{\maxJ} = \gb_\maxJ$ must hold as well since $\rho_{\maxJ}$ is  a weighted average of $\rho_{\maxJ - 1}$ and $\gb_\maxJ$.Moreover, since $\gb_i$ is strictly decreasing in $i$, we have $\rho_{\maxJ - 1} < \gb_{\maxJ - 1}$. 
Applying Lemma~\ref{lem:single_driver_best_response_stationary}, we know that the highest possible expected payoff a driver may receive in this stationary setting is $\gb_\maxJ$, and this can be achieved by accepting all trips to location $ i < \maxJ$, and accepting trips to location $\maxJ$ with any probability in $[0, 1]$. 
$\alphast$ is therefore an optimal acceptance strategy. 
It is also straightforward to see that no strategy that involves not joining the queue the queue, and moving to the tail of the queue, or leave the queue without a rider trip, could achieve a higher expected payoff than $\gb_\maxJ$, thus $\sigmast$ forms a Nash equilibrium when the queue length is $\equQL = \numSkip_\maxJ$.

What is left to prove is that the length of the queue remains at $\equQL = \numSkip_\maxJ$ when all drivers adopt $\sigmast$. To show this, we prove that the rate at which drivers are dispatched from the queue is equal to $\lambda$, the rate at which drivers join the queue. 
First, all trips to locations $i \leq \maxJ - 1$ are accepted, %
so we only need to prove that $\lambda - \sum_{i =1}^{\maxJ-1}  \mu_i $ drivers accept trips to location $\maxJ$ per unit of time. 
For trips to location $\maxJ$, each time a trip is dispatched, it is \emph{not} accepted with probability $( 1-  (\lambda - \sum_{i =1}^{\maxJ-1}  \mu_i )/\mu_\maxJ)^{1 / \patience}$. Thus the probability for the trip to be unfulfilled after $\patience$ dispatches is $1-  (\lambda - \sum_{i =1}^{\maxJ-1}  \mu_i )/\mu_\maxJ$. This implies that the probability for a trip to location $\maxJ$ to be completed is $(\lambda - \sum_{i =1}^{\maxJ-1} \mu_i )/\mu_\maxJ$, so that $\lambda - \sum_{i =1}^{\maxJ-1}  \mu_i $ drivers accept trips to location $\maxJ$ per unit of time. 
This completes the proof of the under-supplied case, and concludes the proof of this proposition.
\end{proof}

\subsection{Optimality of Randomized FIFO} \label{appx:proof_thm_randFIFO_second_best}

In this section, we prove the optimality of the randomized FIFO mechanisms. 
We first provide the following lemma, which shows that the bins constructed as in \eqref{equ:defn_of_binLB} and \eqref{equ:defn_of_binUB} given any ordered partition are well-defined and not overlapping.

\begin{restatable}{lemma}{lemmaBinProperties} \label{lem:bin_properties}
For any ordered partition $(\loc \1, \loc\2, \dots, \loc\supK)$ of the top $\maxJ$ %
destinations $\{1, 2, \dots, \maxJ\}$, the corresponding set of bins satisfies: 
\begin{enumerate}[(i)]
	\item $0 \leq \binLB\supk \leq \binUB \supk$ for each $k = 1, \dots, \numBins$, and $\binUB \supk = \binLB\supk$ if $|\loc\supk| = 1$,
	\item $\binUB\supkmo < \binLB \supk$ for all $k = 2, 3, \dots, \numBins$. %
\end{enumerate}
\end{restatable}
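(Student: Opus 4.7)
The plan is to exploit the two structural features of the partition: the terms in each sum defining $\binLB\supk$ and $\binUB\supk$ have a simple sign because of the \emph{monotone} property of the ordered partition, and the index set for $\binLB\supk$ matches the index set for $\binUB\supkmo$ exactly, so their difference telescopes cleanly. The key observation to record upfront is that if $k' < k$, $i \in \loc\supkprime$, and $i' \in \loc\supk$, then $i < i'$ and hence $\gb_i > \gb_{i'}$ by the strict ordering $\gb_1 > \gb_2 > \cdots > \gb_\ell$. In particular, $\gb_i > \min_{i' \in \loc\supk}\{\gb_{i'}\}$ whenever $i \in \loc\supkprime$ with $k' < k$.

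For part (i), to show $\binLB\supk \geq 0$, I would note that every summand in the definition of $\binLB\supk$ is of the form $(\gb_i - \min_{i' \in \loc\supk}\{\gb_{i'}\})\mu_i/\cost$ with $i \in \cup_{k'<k}\loc\supkprime$, and each such factor is non-negative by the observation above (with $\binLB\1 = 0$ arising from an empty sum). To show $\binUB\supk \geq \binLB\supk$, I would compute the difference directly:
\begin{align*}
\binUB\supk - \binLB\supk = \sum_{i \in \loc\supk} \bigl(\gb_i - \min_{i' \in \loc\supk}\{\gb_{i'}\}\bigr)\mu_i/\cost,
\end{align*}
which is a sum of non-negative terms. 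When $|\loc\supk| = 1$, this sum has a single term in which $\gb_i$ equals the minimum, yielding $\binUB\supk = \binLB\supk$.

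For part (ii), the main step is to observe that $\cup_{k' \leq k-1}\loc\supkprime = \cup_{k' < k}\loc\supkprime$, so $\binLB\supk$ and $\binUB\supkmo$ are sums over the same index set. Subtracting term by term,
\begin{align*}
\binLB\supk - \binUB\supkmo = \sum_{i \in \cup_{k' < k}\loc\supkprime} \bigl(\min_{i' \in \loc\supkmo}\{\gb_{i'}\} - \min_{i' \in \loc\supk}\{\gb_{i'}\}\bigr)\mu_i/\cost.
\end{align*}
The parenthesized factor is strictly positive: if $j \in \loc\supkmo$ attains the minimum over $\loc\supkmo$, then for every $i' \in \loc\supk$ the monotonicity of the partition gives $j < i'$, hence $\gb_j > \gb_{i'}$, so $\gb_j$ strictly exceeds the minimum over $\loc\supk$. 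Since $\cup_{k' < k}\loc\supkprime \supseteq \loc\1 \neq \emptyset$ and all $\mu_i > 0$, the sum is strictly positive, yielding $\binLB\supk > \binUB\supkmo$.

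The proof is almost entirely bookkeeping; the only subtle point is recognizing that the two sums defining $\binLB\supk$ and $\binUB\supkmo$ are over identical index sets so that the subtraction collapses to a clean expression in the minima, at which point the strict ordering $\gb_1 > \cdots > \gb_\ell$ together with the monotonicity of the partition delivers the strict inequality. No induction or auxiliary construction is needed.
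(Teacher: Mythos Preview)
Your proof is correct and follows essentially the same approach as the paper's: both compute $\binUB\supk - \binLB\supk$ directly to obtain the sum over $\loc\supk$, and both reduce part~(ii) to the strict inequality $\min_{i' \in \loc\supkmo}\{\gb_{i'}\} > \min_{i' \in \loc\supk}\{\gb_{i'}\}$ coming from the monotonicity of the partition. Your write-up is slightly more explicit in verifying $\binLB\supk \geq 0$ and in noting that the index set in part~(ii) is nonempty, but the argument is the same.
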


\begin{proof}
For part (i), $\binUB\1 \geq \binLB\1 = 0$ trivially holds. For all $k = 2, 3, \dots, \numBins$, we have
\begin{align*}
	& \binUB \supk - \binLB \supk \\
	=&  \frac{1}{\cost} \left( \sum_{i \in \cup_{k' \leq k} \loc\supkprime} (\gb_i - \min_{i' \in \loc\supk} \{ \gb_{i'} \} )  \mu_i  \right) -
     \frac{1}{\cost}\left( \sum_{i \in \cup_{k' < k} \loc\supkprime }  \left( \gb_i-  \min_{i' \in \loc\supk} \{\gb_{i'}\} \right)  \mu_i \right)   \\
	= &  \frac{1}{\cost}  \left( \sum_{i \in \loc \supk} \left(\gb_i - \min_{i' \in \loc\supk} \{\gb_{i'} \} \right) \mu_i  \right) %
	\geq  0.
\end{align*}
Note that when $\loc\supk$ contains a single location, $\binLB\supk = \binUB\supk$ holds, meaning that for the $k\th$ time each trip is dispatched, the trip will be offered to the driver at position $\queue = \binLB\supk$ in the queue. 
This completes the proof of part (i). 

\medskip

For part (ii), first observe that for any $k > 1$, $\min_{i \in \loc \supkmo} \{ \gb_i \} > \min_{i \in \loc \supk} \{ \gb_i \}$, since the partition is ordered thus $\gb_i > \gb_j$ for all $i \in \loc \supkmo$ and all $j \in \loc \supk$. 
As a result, 
\begin{align*}
	\binUB \supkmo & = \frac{1}{\cost} \left( \sum_{i \in \cup_{k' < k} \loc \supkprime} \hspace{-0.2em} \left(\gb_i - \hspace{-0.2em} \min_{i' \in \loc \supkmo} \{\gb_{i'}\} \right) \mu_i \right) 
	< 
	\frac{1}{\cost} \left( \sum_{i \in \cup_{k' < k} \loc \supkprime} \left( \gb_i - \min_{i' \in \loc \supk } \{ \gb_{i'}\} \right) \mu_i  \right) 
	= \binLB \supk.
\end{align*}
This completes the proof of this lemma.
\end{proof}

\bigskip

We now prove the main result of our paper on the optimality of randomized FIFO.

\thmRandomFIFOSB*

\begin{proof} 
We first show that given a randomized FIFO mechanism corresponding to an ordered partition of the top $\maxJ$ locations, under the Nash equilibrium in steady state, (i) the length of the queue is equal to the equilibrium queue length under the direct FIFO mechanism, and (ii) the same set of trips completed under direct FIFO are also completed. Theorem~\ref{thm:fifo_skip_second_best} then implies that the equilibrium outcome under randomized FIFO is optimal. 
We also establish individual rationality and envy-freeness under randomized FIFO by showing that a driver's continuation payoff as a function of the driver's position in the queue is non-negative and monotonically non-increasing. %

\medskip

\todo{%
Also, this proof is over 10 pages long. Any idea about how to split it up into e.g. lemmas and claims?}

Recall that $\maxJ$ (defined in \eqref{equ:maxJ}) is the index of the lowest-earning trip that is (partially) completed in equilibrium under direct FIFO. We discuss the following cases: \vspace{-0.3em}
\begin{enumerate}[{Case} 1.]
	\item The total number of bins $\numBins = \min\{\maxJ, ~\patience\} = 1$, in which case %
	all trips are dispatched to drivers in the first bin. There are again two scenarios:
	\begin{enumerate}[{Case 1.}1]
		\item $\maxJ = 1$, and  in which case only trips to location $1$ are (partially) completed under the direct FIFO mechanism.
		\item $\maxJ > 1$, but $\patience = 1$, meaning that riders are impatient, and will cancel their trip request after any driver decline. 
	\end{enumerate}
	\item The number of bins  $\numBins  = \min\{\maxJ, ~\patience\} > 1$ , in which case trips may be dispatched multiple times, and we establish the equilibrium result by induction. 
\end{enumerate}

\bigskip

\noindent{}\textbf{Case 1.1: $\numBins = \maxJ = 1$.} In this case, there is a single partition under randomized FIFO: $\loc\1 = \{ 1 \}$, and we have $\binLB\1 = \binUB\1 = 0$. As a result, all trips are dispatched (only once) to the driver at the head of the queue. Recall that no driver will decline a trip to location $1$, since there is no other trip with better earnings that the driver would like to wait for. 
Consider two cases:
\begin{enumerate}[$\bullet$]
	\item When $\lambda \leq \mu_1$, it is straightforward to verify that %
	(i) the length of the queue is zero, and (ii) all drivers accept a trip to location $1$ immediately upon arrival, forms a Nash equilibrium among drivers in steady state. This is the same outcome as that under direct FIFO. %
	\item When $\lambda > \mu_1$ but $\maxJ = 1$, the number of locations must be $\ell = 1$, and all trips are accepted at the head of the queue. The equilibrium outcome is again the same as that under direct FIFO, where it is also the case that all trips are dispatched to and accepted by the driver at the head of the queue. %
	In steady state, drivers join the queue with probability $\mu_1/\lambda$, all trips are completed, and the length of the queue is $\equQL = \mu_1 \gb_1/\cost = \maxQL$ (at which point a driver is indifferent toward joining the queue and leaving without a rider trip). %
\end{enumerate}
Combining the two cases, we know that when $\maxJ = 1$, randomized FIFO achieves the same optimal outcome achieved by direct FIFO in equilibrium. Every driver gets a payoff of $\gb_1$ when $\lambda \leq \mu_1$, and when $\lambda > \mu_1$, the continuation payoff decreases linearly in the driver's position in the queue at takes value zero at $\maxQL$. The equilibrium outcome is, therefore, individually rational and envy-free. 

\bigskip

\noindent{}\textbf{Case 1.2: $\maxJ > 1$, $\numBins = \patience = 1$.} With $\patience = 1$, riders  cancel their trip requests after a single driver decline, and all trips are dispatched by the randomized FIFO mechanism to the first (and only) bin of drivers uniformly at random. $\loc\1 = \{1, \dots, \maxJ$, and the first bin is given by $\binLB\1 = 0$ and $\binUB\1 = \numSkip_\maxJ > 0$. There are two cases, depending on whether the queue is under or over-supplied. 

\medskip

\noindent{}\textit{Case 1.2.1: $\lambda \leq \sum_{i \in \loc} \mu_i$.} 
Assume that the length of the queue is $\equQL = \numSkip_\maxJ$. 
Under the randomized FIFO mechanism, all trips are randomly dispatched to drivers in $[\binLB\1, ~\binUB\1] = [0, \numSkip_\maxJ]$, i.e. all drivers in the queue. This is the same scenario as the under-supplied setting under random dispatching, which we analyzed in Case~2 in the proof of  Proposition~\ref{prop:pure_random}. 
It is straightforward to verify that the same strategy (specified by \eqref{equ:rand_under_supplied_alphast}, \eqref{equ:rand_under_supplied_betast} and \eqref{equ:rand_under_supplied_gammast}) forms a Nash equilibrium in steady state under randomized FIFO, and the equilibrium queue length remains constant at $\equQL = \numSkip_\maxJ$. We refer the readers to the proof of Proposition~\ref{prop:pure_random}, and do not repeat the same arguments here. Drivers at any position $\queue \in [0, \equQL]$ in the queue has the same continuation payoff $\gb_\maxJ \geq 0$, thus the equilibrium outcome is individually rational and envy-free. 

\medskip 

\noindent{}\textit{Case 1.2.2: $\lambda > \sum_{i \in \loc} \mu_i$.} When the queue is over-supplied, all trips are completed under direct FIFO, and $\maxJ = \ell$. 
The randomized FIFO mechanism dispatches each trip once (since $\patience = 1$) to drivers in $[\binLB\1, \binUB\1] = [0, \numSkip_\ell]$ uniformly at random.%
\footnote{The equilibrium queue length is $\equQL = \maxQL > \binUB\1$, as a result, drivers at positions $\queue \in (\binUB\1, \maxQL]$ do not receive any dispatches under randomized FIFO. This is, therefore, a different setting from the over-supplied setting under random dispatching (Case 1 of Proposition~\ref{prop:pure_random}), where trips are dispatched to all drivers in the queue  at random.}
Consider the strategy $\sigmast = (\alphast, \betast, \gammast)$:
\begin{align*}
	\alphast(\queue, \maxQL, i) & =  1, ~\forall i \in \loc, \forall \queue \in [0, \maxQL] \\
	\betast(\queue, \maxQL) & = 0, \forall \queue \in [0, \maxQL]  \\
	\gammast(\queue, \maxQL) & = \pwfun{0, & \txtif \queue < \maxQL, \\
		1 - \sum_{i \in \loc}\mu_i / \lambda, & \txtif \queue = \maxQL.}
\end{align*}
Here, $\gammast(\maxQL, \maxQL) = 1 - \sum_{i \in \loc}\mu_i / \lambda$ means that the drivers join the queue at the tail $\queue = \maxQL$ with probability $ \sum_{i \in \loc}\mu_i / \lambda$. It is clear that when $\sigmast$ is adopted by all drivers, the length of the queue remains at $\maxQL$. We now prove that with $\equQL = \maxQL$, it is a Nash equilibrium for all drivers to adopt strategy $\sigmast$. %
In other words, when the length of the queue is $\maxQL$ and when $\sigmast$ is adopted by the rest of the drivers, $\sigmast$ is a best response strategy for a driver at any $\queue \in [0, \maxQL]$

Let us first consider a driver at some position $\queue \in [0, \numSkip_\ell]$ in the queue. If the driver does not leave the queue or move to the tail of the queue, this again is a stationary environment analyzed in Lemma~\ref{lem:single_driver_best_response_stationary}. When every other driver adopts $\sigmast$, every trip is dispatched only once, thus the rate at which a driver receives trip dispatches to each location $i \in \loc$ is $\eta_i = \mu_i / \numSkip_\ell$. 
Since $\numSkip_\ell = \sum_{i =1 }^{\ell-1} (\gb_i - \gb_\ell) \mu_i/\cost = \sum_{i =1 }^\ell (\gb_i - \gb_\ell) \mu_i/\cost$ (see~\eqref{equ:first_accpet_positions_appx}), a driver's expected payoff from accepting only the top $j$ trips $\rho_j$ %
can therefore be rewritten as:
\begin{align*}
	\rho_j =  & %
	\left(\sum_{i = 1}^j \gb_i \mu_i - \cost \numSkip_\ell \right) \bigg/ \sum_{i = 1}^j \mu_i = \left(\sum_{i = 1}^j \gb_i \mu_i - \sum_{i =1}^\ell (\gb_i - \gb_\ell) \mu_i  \right) \bigg/ \sum_{i = 1}^j \mu_i \\
	= & \left( \sum_{i = j+1}^\ell (\gb_\ell -  \gb_i) \mu_i +  \gb_\ell \sum_{i =1}^j  \mu_i  \right) \bigg/ \sum_{i = 1}^j \mu_i = \gb_\ell  + \left( \sum_{i = j+1}^\ell (\gb_\ell -  \gb_i) \mu_i \right) \bigg/ \sum_{i = 1}^j \mu_i.
\end{align*}

Since $\gb_\ell - \gb_i \leq 0$ for all $i \in \loc$, %
$\rho_j$ is maximized at $j = \ell$, and we also have $\rho_\ell = \gb_\ell$. Lemma~\ref{lem:single_driver_best_response_stationary} implies that $\sigmast$, i.e. accepting all trips, is the optimal acceptance strategy for a driver at $\queue \in [0, \binUB\1]$. Under $\sigmast$, the 
continuation payoff is of the form:
\begin{align}
	\pi(\queue, \maxQL, \sigmast, \sigmast) = \rho_\ell = \gb_\ell, ~\forall \queue \in [0, \numSkip_\ell]. 
\end{align}
Since $\gb_\ell \geq 0$, there is no incentive for a driver to leave the queue without a rider trip, hence there is no useful deviation from $\gamma(\queue, \maxQL) = 0$. To see why moving to the tail of the queue is not a useful deviation either, observe that a driver will not receive any trip dispatch until she moves back to position $\binUB\1 =  \numSkip_\ell$ in the queue. Once the driver moves here (after incurring a non-negative waiting cost), the driver is in the exact same position as she is before, achieving an optimal payoff of $\gb_\ell$ when accepting all trips from the platform. This completes the proof that $\sigmast$ is a best response for a driver at some $\queue \in [0, \binUB\1]$ in the queue, when $\sigmast$ is adopted by the rest of the drivers.

\smallskip

Now consider any driver at some position $\queue \in (\numSkip_\ell, \maxQL]$ in the queue. The driver will not receive any trip dispatches until she reaches position $\numSkip_\ell$ in the queue, thus there is no useful deviation from the acceptance strategy $\sigmast$. From $\binUB\1$ onward, the driver gets an optimal continuation payoff of $\gb_\ell$ as we have shown above.
As a result, the driver's continuation payoff under $\sigmast$ is of the form:
\begin{align}
	\pi(\queue, \maxQL, \sigmast, \sigmast) = \gb_\ell - \cost (\queue - \numSkip_\ell)  / \sum_{i \in \loc} \mu_i,
	 ~\forall \queue \in (\binUB\1, \maxQL]. 
\end{align}
We can verify that $\pi(\queue, \maxQL, \sigmast, \sigmast) > 0$ for all $\queue < \maxQL$ and that $\pi(\queue, \maxQL, \sigmast, \sigmast) = 0$. Leaving the queue without a rider (and get zero) is therefore not a useful deviation. Moreover, at $\maxQL$ the drivers are indifferent towards being in the queue or leaving without a rider, thus a randomized joining decision is a best response. 
Individual rationality and envy-freeness both hold, since $\pi(\queue,\maxQL,\sigmast, \sigmast)$ is non-negative  and monotonically non-increasing for all $\queue \in \maxQL$. This completes Case 1.2.

\bigskip

\noindent{}\textbf{Case 2: $\numBins > 1$.} 
In this case %
we prove by induction on $k$ (the index of bins, starting from the first bin)
that in Nash equilibrium given the steady state queue length $\equQL$, drivers in the $k\th$ bin accept all dispatches for trips in the first $k$ partitions $\cup_{k' = 1}^k \loc\supkprime$, and decline all lower earning trips in $\cup_{k' > k} \loc\supkprime$. %
We then establish individual rationality and envy-freeness by checking that the continuation payoff is non-negative and monotonically non-increasing. 

\smallskip

Before analyzing the base case of the induction, we first provide some notations.
Denote $\minInBin\supk \in \loc$ and $\maxInBin\supk \in \loc$ as the lowest and highest indices of trips in the $k\th$ partition $\loc\supk$:
\begin{align}
	\minInBin\supk & \triangleq \min \{ i \in \loc | i \in \loc\supk \},  \label{equ:minInBin} \\
	\maxInBin\supk & \triangleq \max \{i \in \loc | i \in \loc \supk \}. \label{equ:maxInBin}
\end{align}
We know that a trip to location $\minInBin\supk$ (or $\maxInBin\supk$) is the highest (or lowest) paying trip in $\loc\supk$. 
Let $\equQL$ denote the equilibrium queue length under direct FIFO, i.e. $\equQL = \numSkip_\maxJ$ when $\lambda \leq \sum_{i\in \loc} \mu_j$, and $\equQL = \maxQL$ when $\lambda > \sum_{i \in \loc} \mu_j$. 
Let $\sigmast = (\alphast, \betast, \gammast)$ be a strategy given by:
\begin{enumerate}[$\bullet$]
	\item accepting all trips in the top $k$ partitions while in the $k\th$ bin in the queue, and randomize only on trips to location $\maxJ$ while in the last bin:  %
	\begin{align}
		\alphast(\queue, \equQL, i) & = \pwfun{ \one{ i \in \cup_{k' = 1}^k \loc\supkprime}, &\txtif \queue \in [\binLB\supk, \binUB\supk] \text{ for some } k \leq \numBins, \txtand i \neq \maxJ, \\
	\min\{(\lambda - \sum_{i < \maxJ} \mu_i), ~\mu_\maxJ \} / \mu_\maxJ, &\txtif \queue \in [\binLB\supK, \binUB\supK] \txtand  i = \maxJ,} \label{equ:rand_fifo_case_2_alphast}
	\end{align} 
	\item never move to the tail of the queue:
	\begin{align}
		\betast(\queue, \equQL) &= 	0, ~\forall \queue \in [0, \equQL],
	\end{align}
	\item never leave the queue without a trip after joining the queue, and join the queue with probability $\min\{ \sum_{i \in \loc}\mu_i / \lambda,~1\}$, i.e.
	\begin{align}
		\gammast(\queue, \equQL) &= \pwfun{
			0, & \txtif  \queue < \equQL, \\ 
			1 - \min\{ \sum_{i \in \loc}\mu_i / \lambda,~1\}, & \txtif \queue = \equQL. 
			 }
	\end{align}		
\end{enumerate} 
We now prove by induction that $\sigmast$ forms a Nash equilibrium among the drivers in steady state with queue length $\equQL$.

\bigskip

\noindent{}\textit{Step 1: the base case with $k=1$.} We first prove that when %
the length of the queue is $\equQL$ and when every other driver adopts strategy $\sigmast$, it is a best response for any driver in the first bin $[\binLB\1, \binUB\1] $ to also adopt strategy $\sigmast$.

The first bin consists of the top $\binUB\1$ drivers at the head of the queue. When $\loc \1 = \{1\}$, $\binUB\1 = \binLB\1 = 0$, thus all trips are first dispatched to the driver at the head of the queue. In this case, it is clear that accepting only trips to location $1$ is the best response for a driver at $\queue = 0$, and this is aligned with $\sigmast$. Now consider the case where $|\loc\1| > 1$, such that $\binUB\1 > 0$. 
With $\numBins > 1$, the equilibrium length of the queue $\equQL$  is above $\binUB\1$, thus the first bin is ``full''. 
For a driver at any position $\queue \in [0, \binUB\1]$, the rate at which she receives dispatches to each location $i \in \loc$ is 
\begin{align*}
	\eta_i\1 = \mu_i / \binUB\1.
\end{align*}
Note that the rates $\{\eta_i\1 \}_{i \in \loc}$ %
are independent to both the strategies adopted by the rest of the drivers in the first bin, and also the strategies employed by all drivers later in the queue.

We first prove that if a driver does not leave the queue or move to the tail of the queue, then there is no useful deviation from $\alphast(\queue, \equQL, i) = \one{i \in \loc\1}$, i.e. accepting all trips in $\loc\1$. This is a stationary setting that we have analyzed in Lemma~\ref{lem:single_driver_best_response_stationary}. 
Given \eqref{equ:defn_of_binUB}, we know that $\binUB\1$ is of the form: $\binUB\1 =  \frac{1}{\cost} \left(\sum_{i \in \loc\1} (\gb_i - \min_{i' \in \loc^{(1)}}\{ \gb_{i'} \}) \mu_i \right)$.
The utility for a driver in the first bin from accepting only the top $\maxInBin\1$ trips (as defined in \eqref{equ:util_top_j}) can therefore be written as:
\begin{align*}
	\rho\1_{\maxInBin\1} = & \left(\sum_{i \in \loc\1} \gb_i \eta_i\1 - \cost \right) \bigg/ \sum_{i \in \loc\1} \eta_i\1 
	= \left(\sum_{i \in \loc\1	} \gb_i \mu_i - \cost \binUB\1 \right) \bigg/ \sum_{i \in \loc\1} \mu_i \1 \\ 
	=& \left(\sum_{i \in \loc\1} \gb_i  - \left(\sum_{i \in  \loc\1} \left( \gb_i- \min_{i' \in \loc\1}\{ \gb_{i'} \} \right)  \mu_i  \right) \right) \bigg/ \sum_{i \in \loc\1} \mu_i \1 \\
	= & \min_{i \in \loc\1}\{ \gb_i \}.
\end{align*}

This implies $\rho\1_{\maxInBin\1} \leq \gb_i$ for all $i \in \loc\1$, and $\rho\1_{\maxInBin\1} > \gb_i$ for all $i \notin \loc\1$ (recall that the partitions are ordered). 
Lemma~\ref{lem:single_driver_best_response_stationary} then implies %
that %
an optimal acceptance strategy is to accept all trips to locations $1$ through $\maxInBin\1$, and this is aligned with $\sigmast$.
Lemma~\ref{lem:single_driver_best_response_stationary} also implies that the continuation payoff of any driver in the first bin given strategy $\sigmast$ is:
\begin{align}
	\pi(\queue, \equQL, \sigmast, \sigmast) = \min_{i \in \loc\1}\{ \gb_i \}, ~\forall \queue \in [0, \binUB\1].  \label{equ:case_2_pist_expression_1}
\end{align}

Since $\min_{i \in \loc\1}\{ \gb_i \} \geq 0$, deviating from $\gammast(\queue, \equQL) = 0$ and leaving the %
is not a useful deviation. Moreover, by moving to the tail of the queue, a driver will not receive any trip with net earnings higher than $\min_{i \in \loc\1}\{ \gb_i \}$, if the driver does not move all the way back to the first bin again. Once a driver is back to the first bin (after incurring some non-negative waiting costs), the driver is in the exact same situation as before moving to the tail, receiving trips again at rates $\{\eta_i\1 \}_{i \in \loc}$. %
Deviating from $\betast(\queue, \equQL) = 0$ is therefore not a useful strategy either. %
This implies that $\sigmast$ is a best response for drivers in the first bin, and completes the proof of the base case with $k = 1$.

\bigskip

\noindent{}\textit{Step 2: induction step for $1 < k < \numBins$.} Assume that when the length of the queue is $\equQL$, and when every other driver adopts strategy $\sigmast$, it is a best response for a driver at any position $\queue \in [0, \binUB\supkmo]$ to adopt strategy $\sigmast$. We now prove in this induction step, that it is also a best response for any driver at positions $\queue \in (\binUB\supkmo, \binUB\supk]$ to adopt strategy $\sigmast$. 

We take the following steps in proving this result:
\begin{enumerate}[{Step 2.}1]
	\item %
	Under strategy $\sigmast$, the  equilibrium continuation payoff $\pi(\queue, \equQL, \sigmast, \sigmast)$ is linearly decreasing in $\queue$ when $\queue \in [\binUB\supkmo, \binLB\supk]$, %
	and constant for $\queue \in [\binLB\supk, \binUB\supk]$: %
	\begin{align}
		\pi(\queue, \equQL, \sigmast, \sigmast) = \pwfun{ \min\limits_{i \in \loc \supkmo} \{\gb_i\} - \cost \left(\queue -  \binUB\supkmo \right) \bigg/ \sum\limits_{i \in \cup_{k' = 1}^{k-1} \loc \supkprime } \mu_i, &\txtif \queue \in [\binUB\supkmo, \binLB\supk], \\ 
			\min\limits_{i \in \loc \supk} \{\gb_i\}, & \txtif \queue \in [\binLB\supk, \binUB\supk].} \label{equ:proof_rand_fifo_step21}
	\end{align}
	\item %
	Under any feasible strategy $\sigma = (\alpha, \beta, \gamma)$ such that the driver does not leave the queue or move to the tail of the queue (i.e. if $\beta(\queue, \equQL) = \gamma(\queue, \equQL) = 0$ for all $\queue \in [\binUB\supkmo, \binUB\supk]$), the continuation payoff cannot exceed that under $\sigmast$:
	\begin{align*}
		\pi(\queue, \equQL, \sigma, \sigmast) \leq \pi(\queue, \equQL, \sigmast, \sigmast), ~\forall \queue \in [\binUB\supkmo, \binUB\supk]. 
	\end{align*}
	\item $\sigmast$ is a best response for drivers at positions $\queue \in [\binUB\supkmo, \binUB\supk]$ in the queue.
\end{enumerate}

Step 2.1 implies that for any driver at some $\queue \in [\binLB\supk, \binUB\supk]$, it cannot be a useful deviation from $\sigmast$ to accept any trip in later bins $\cup_{k' > k} \loc\supkprime$ since the driver gets $\pi(\queue, \equQL, \sigmast, \sigmast) \geq \min_{i \in \loc\supk} \gb_i > \max_{i \in \cup_{k' > k} \loc\supkprime } \gb_i$ from following strategy $\sigmast$. Moreover, all best-response strategies must have $\gamma(\queue, \equQL) = 0$ for all $\queue \in [\binUB\supkmo, \binUB\supk]$, because $\min_{i \in \loc \supk} \{ \gb_i\} > 0$ thus leaving the queue and getting zero cannot be a useful deviation.

With Step 2.2, we know that %
across all feasible strategies where the driver does not move to the tail of the queue, %
$\sigmast$ is a best strategy for drivers at any $\queue \in [\binUB\supkmo, \binUB\supk]$.
With 2.1 and 2.2, we prove the final step, that even when we consider all feasible strategies where people may move to the tail of the queue, there is still no strategy that results in a higher payoff than $\sigmast$.

We start from Step 2.1. 
 
\medskip

\noindent{}\textit{Step 2.1.} We prove \eqref{equ:proof_rand_fifo_step21} in this step. First, we show that under $\sigmast$, the continuation payoff of drivers satisfy $\pi(\binUB\supkmo, \equQL, \sigmast, \sigmast) = \min_{i \in \loc\supkmo} \gb_i$, and $\pi(\binLB\supk, \equQL, \sigmast, \sigmast) = \pi(\binUB\supk, \equQL, \sigmast, \sigmast) = \min_{i \in \loc\supk} \gb_i$.
For simplicity of notation, denote the continuation payoff under strategy $\sigmast$ given the equilibrium queue length $\equQL$ as:
\begin{align}
	\pist(\queue ) \triangleq \pi(\queue, \equQL, \sigmast, \sigmast). \label{equ:eq_cont_payoff_simplification}
\end{align}
Moreover, denote the total trip volume and average net earnings for a given subset of partitions as:
\begin{align}
	\tpRand_{k_1:k_2}  & \triangleq  \sum_{i \in \cup_{k' = k_1}^{k_2}  \loc\supkprime} \mu_i,  \label{equ:total_TP_top_k} \\
	\meangb _{k_1:k_2} & \triangleq \sum_{i \in \cup_{k' = k_1}^{k_2} \loc\supkprime}  \gb_i \mu_i \bigg/ \sum_{i \in \cup_{k' = k_1}^{k_2}  \loc\supkprime} \mu_i =  \sum_{i \in \cup_{k' = k_1}^{k_2} \loc\supkprime}  \gb_i \mu_i \bigg/  \tpRand _{k_1:k_2}.  \label{equ:mean_gb_top_k}
\end{align}

Consider now a driver who had just reached the position position $\binUB \supkmo$ in the queue. Under $\sigmast$, a driver at  $[0, \binUB\supkmo]$ in the queue only accept trips in the top $k-1$ partitions $\cup_{k' \leq k-1} \loc \supkprime$. When all drivers adopt the same strategy $\sigmast$, \emph{a priori} there is no difference in the waiting times or earnings from trips of any driver who are at $\binUB\supkmo$ in the queue. 
The average net earnings a driver at $\queue =  \binUB\supkmo$ will get from the trip she will accept in the future is therefore $\meangb \subokmo$. By Little's Law, the average amount of time the driver spends waiting in the queue is $\binUB\supkmo / \tpRand \subokmo$.  
Thus the average continuation payoff for the driver at $\queue = \binLB\supk$ is: 
\begin{align*}
	\pist(\binUB\supkmo) =  \meangb \subokmo - \cost \binUB\supkmo / \tpRand \subokmo. 
\end{align*}
Given $\binUB\supk$ %
as defined in \eqref{equ:defn_of_binUB},  we know:
\begin{align}
	& \pist(\binUB\supkmo) 
	= \meangb \subokmo - \left( \sum_{i \in \cup_{k' < k} \loc\supkprime }  \mu_i  \left( \gb_i-  \min_{i' \in \loc \supkmo} \{\gb_{i'}\} \right) \right) \bigg/ \tpRand \subokmo 
	= \min_{i \in \loc\supkmo} \{\gb_i\}. \label{equ:cont_earning_at_UB}
\end{align}

Similarly, by reasoning about the net earnings an average driver gets from an average trip, and the average waiting cost a driver incurs, we can show that $\pi(\binLB\supk, \equQL, \sigmast, \sigmast) = \pi(\binUB\supk, \equQL, \sigmast, \sigmast) = \min_{i \in \loc\supk} \gb_i$. 
Under $\sigmast$, drivers at some position $\queue \in (\binUB\supkmo, \binLB\supk)$ will wait for $(\queue - \binLB\supkmo)/ \tpRand  \subokmo$ units of time before reaching $\binUB\supkmo$ in the queue, therefore her continuation payoff is of the form:
\begin{align}
	\pist(\queue) = \min_{i \in \loc \supkmo} \{\gb_i\} - \cost \left(\queue -  \binUB\supkmo \right) \bigg/\tpRand \subokmo , &\txtif \queue \in [\binLB\supkmo, \binLB\supk). \label{equ:pist_induction_step_1}
\end{align}
It is straightforward to verify that $\pist$ is left continuous at $\binLB\supk$:
$$
	\lim_{\queue \rightarrow \binLB\supk-}\pist(\queue) = \min_{i \in \loc \supkmo} \{\gb_i\} - \cost \left(\binLB\supk-  \binUB\supkmo \right) \bigg/\tpRand \subokmo = \min_{i \in \loc\supk} \{\gb_i\}.
$$	

What is left to prove for Step 2.1 is that $\pist(\queue)$ remains constant where $\queue \in [\binLB\supk, \binUB\supk]$. This is trivial when $|\loc\supk| = 1$, in which case $\binUB\supk = \binLB\supk$. Therefore, we now consider the case where $|\loc\supk| > 1$ such that $\binUB\supk > \binLB\supk$. 
When all drivers adopt strategy $\sigmast$, all trips in the first $k-1$ partitions $\cup_{k' = 1}^{k-1} \loc\supkprime$ are accepted before reaching the $k\th$ bin. For a driver in the $k\th$ bin, the rate at which she receives trip dispatches to each location $i \in \loc$ is therefore:
\begin{align}
	\eta \supk _i = \pwfun{0, & \txtif i \in \cup_{k' = 1}^{k-1} \loc\supkprime \\
		\mu_i / (\binUB\supk - \binLB\supk), & \txtif i \in \cup_{k' \geq k} \loc\supkprime
		} \label{equ:bin_k_offer_rates}
\end{align}
Note that the rates $\{\eta\supk_i\}_{i \in \loc}$ are independent to the strategies taken by drivers in later bins of the queue.
With a slight abuse of notation, let
\begin{align*}
	\eta \supk \triangleq \sum_{i \in \loc \supk} \eta\supk_i
\end{align*}
be the total rate at which drivers in the $k\th$ bin receives trips in $\loc\supk$.

Fix an arbitrary point in time and call it time $t = 0$, and consider a driver who is at position $\binUB\supk$ at time $t = 0$. Let $\goft(t)$ be the position of the driver in the queue, if the driver has not yet accepted a trip and leave the queue. We know $\goft(0) = \binUB\supk$. %
Before the driver reaches position $\binLB\supk$ in the queue, we know that in the next $\dd t$ units of time, when every other driver adopts strategy $\sigmast$, there are $\tpRand \subokmo \dd t$ drivers who are dispatched from queue positions earlier than $\binLB\supk$, and there are $\dd t \tpRand \subkk (\goft(t) - \binLB\supk) / (\binUB\supk - \binLB\supk) $ drivers who are dispatched from the $k\th$ bin, ahead of this driver. As a result, the time derivative of the driver's queue position satisfies
\begin{align}
	\frac{\dd \goft(t)}{\dd t} = - \tpRand \subokmo -  \tpRand \subkk \frac{\goft(t) - \binLB\supk}{\binUB\supk - \binLB\supk}, %
	\label{equ:d_position_dt}
\end{align}
i.e. the driver moves forward in the queue at a rate of $ \tpRand \subokmo + \tpRand \subkk (\goft(t) - \binLB\supk)/(\binUB\supk - \binLB\supk)$ positions per unit of time. Since $\tpRand \subokmo > 0$, we know that the driver will reach $\binLB\supk$ within finite time.

$\pist(\goft(t))$ denotes continuation payoff of this driver as a function of time.  
For a driver at some position $\goft(t) \in (\binLB\supk, ~\binUB\supk]$ at time $t$, the probability for the driver to be dispatched a trip she will accept under $\sigmast$ in the next $\dd t$ units of time is $\eta\supk \dd t$. If the driver is not dispatched, she moves forward in the queue to position $\goft(t + \dd t)$ after incurring a cost of $\cost \dd t$. If the driver is dispatched, she takes a trip with an average net earnings of $\meangb \subkk$ after incurring a waiting cost in the order of $\cost O(\dd t)$. %
Therefore, the driver's continuation payoff as a function of time $t$ can be written as:
\begin{align}
	\pist(\goft(t)) = (1 - \eta \supk \dd t) (\pist(\goft(t + \dd t)) - \cost \dd t)  + \eta \supk \dd t  ( \meangb \subkk - \cost O(\dd t)) \label{equ:pi_kth_bin_recursion}
\end{align}
Reorganizing \eqref{equ:pi_kth_bin_recursion}, and taking the limit as $\dd t \rightarrow 0$, we have
\begin{align}
	\frac{\dd \pist(\goft(t))}{\dd t} = \cost +  \eta\supk \left( \pist(\goft(t))  - \meangb \subkk \right) =  \eta\supk  \left( \pist(\goft(t))  - \left( \meangb \subkk  - \cost / \eta \supk \right) \right),  \label{equ:dpi_dt}
\end{align}
and this implies
\begin{align}
	\pist(\goft(t)) = \meangb \subkk  - \cost / \eta \supk + C e^{ \eta\supk t}, \label{equ:pi_goft}
\end{align}
where $C$ is some constant. Given that $\goft(0) = \binUB\supk$ and $\pi(\binUB\supk) = \min_{i \in \loc\supk} \{\gb_i\}$, we have:
\begin{align*}
	 \pist(\goft(0)) = \min_{i \in \loc\supk} \{\gb_i\} = \meangb \subkk  - \cost / \eta \supk + C. 
\end{align*}

Given \eqref{equ:defn_of_binLB} and \eqref{equ:defn_of_binUB}, the size of the $k\th$ bin is:
\begin{align}
	\binUB\supk - \binLB\supk = \frac{1}{\cost} \sum_{i \in \loc \supk} (\gb_i - \min\limits_{i' \in \loc^{(k)}}\{ \gb_{i'} \})  \mu_i.  \label{equ:bin_size_lb}
\end{align}
$\meangb \subkk  - \cost / \eta \supk$ therefore satisfies
\begin{align*}
	\meangb \subkk  - \cost / \eta \supk = \left(\sum_{i  \in \loc \supk} \gb_i \mu_i - \cost (\binUB\supk - \binLB\supk) \right) \bigg/ \tpRand \subkk = \min_{i \in \loc \supk} \{\gb_{i}\}.
\end{align*}
As a result, $C = 0$ must hold, meaning that $\pist(\queue)$ remains constant with respect to $t$ for all $t$ such that $\goft(t) \leq \binUB\supk$ and $\goft(t) \geq  \binLB\supk$. This completes the proof that $\pist(\queue) = \meangb \subkk  - \cost / \eta \supk  =  \min_{i \in \loc \supk} \{\gb_{i}\}$ for all $\queue \in (\binLB\supk, \binUB\supk]$. %

\smallskip

This completes the proof of Step 2.1. What we know from this step and Lemma~\ref{lem:best_response_necessary_conditions} is that for any driver at some $\queue \in [\binLB\supk, \binUB\supk]$, it cannot be a useful deviation from $\sigmast$ to accept any trip in later bins $\cup_{k' > k} \loc\supkprime$ since the driver gets $\pi(\queue, \equQL, \sigmast, \sigmast) \geq \min_{i \in \loc\supk} \gb_i > \max_{i \in \cup_{k' > k} \loc\supkprime } \gb_i$ from following strategy $\sigmast$. 
Moreover, all best-response strategies must have $\gamma(\queue, \equQL) = 0$ for all $\queue \in [\binUB\supkmo, \binUB\supk]$, because $\min_{i \in \loc \supk } \{ \gb_i\} \geq 0$ thus leaving the queue and getting zero cannot be a useful deviation.

\medskip

\noindent{}\textit{Step 2.2.} 
We now prove that under any feasible strategy $\sigma = (\alpha, \beta, \gamma)$ such that the driver does not leave the queue or move to the tail of the queue (i.e. if $\beta(\queue, \equQL) = \gamma(\queue, \equQL) = 0$ for all $\queue \in [\binUB\supkmo, \binUB\supk]$), for any position $\queue \in  [\binUB\supkmo, \binUB\supk]$,
\begin{align}
	\pi(\queue, \equQL, \sigma, \sigmast) \leq \pi(\queue, \equQL, \sigmast, \sigmast). \label{equ:proof_rand_fifo_step22}
\end{align}

First, \eqref{equ:proof_rand_fifo_step22} is straightforward to establish for $\queue \in [\binUB\supkmo, \binLB\supk)$, since for a driver who %
does not leave or move to the tail of the queue, the driver will %
wait in line until she reaches position $\binUB\supkmo$ in the queue, and this is aligned with $\sigmast$. 
Once a driver is at $\binUB\supkmo$, the best strategy moving forward is $\sigmast$ (by induction assumption). As a result, it is impossible to achieve a better continuation payoff than that under $\sigmast$.

Now consider $\queue \in [\binLB\supk, \binUB\supk]$, and there are two cases depending on whether $|\loc \supk| = 1$. When $|\loc\supk| = 1$, Lemma~\ref{lem:bin_properties} implies that $\binUB\supk = \binLB\supk$, thus all trips in $\cup_{k' \geq k} \loc \supkprime$ are dispatched to the driver at position $\binLB\supk$ in the queue. Under $\sigmast$, a driver at $\queue = \binLB\supk$ gets a continuation payoff of $\gb_i$ where $i \in \loc\supk$ is the only trip in the $k\th$ partition, regardless of whether the driver accepts a trip or moved forward in the queue. An argument very similar to the proof of the induction step of Lemma~\ref{lem:strict_fifo_spe_formal} shows no alternative strategy may achieve a higher continuation payoff.

What is left to study in the case where $|\loc\supk| > 1$ and $\binUB\supk - \binLB\supk > 0$. 
Assume towards a contradiction that there exists some $\queue \in (\binLB\supk, \binUB\supk]$ such that $\pi(\queue, \equQL, \sigma, \sigmast) > \pi(\queue, \equQL, \sigmast, \sigmast)  = \min_{i \in \loc\supk}\gb_i$. 
We introduce the following notation:
\begin{enumerate}[$\bullet$]
	\item If the driver did not accept any trip dispatches under $\sigma$ before reaching position $\binLB\supk$ in the queue, denote the time it takes for the driver to move from $\queue$ to $\binLB\supk$ as $\kappa(\queue)$. 
	\item Denote the probability for the driver to be dispatched a trip she is willing to accept under $\sigma$ before the driver reaches $\binLB\supk$ (i.e. within the next $\kappa(\queue)$ units of time, given strategy $\sigma$) as $\xi(\kappa(\queue))$. 
	\item %
	Conditioning on a driver's receiving a trip within the $\kappa(\queue)$ units of time while following strategy $\sigma$, let $\omega(\kappa(\queue))$ be the driver's expected payoff, which includes both the net earnings from the trip the driver accepts and the waiting cost the driver incurs.
\end{enumerate}
The driver's continuation payoff at position $\queue$ under strategy $\sigma$ can therefore be written as:
\begin{align}
	\pi(\queue, \equQL, \sigma, \sigmast) = & \xi(\kappa(\queue))  \omega(\kappa(\queue))  + \left(1- \xi(\kappa(\queue)) \right) \left( \pi(\binLB\supk, \equQL,\sigma, \sigmast) %
	- \cost \kappa(\queue) \right). \label{equ:cont_earning_from_q_rand}
\end{align}
$\pi(\binLB\supk, \equQL,\sigma, \sigmast)$ shows up in the second term because if a driver did not accept a dispatch before time $\kappa(\queue)$ had passed (starting from the time of her being at position $\queue$), the driver has now reached $\binLB\supk$. 
We have just argued that this continuation payoff is bounded by $\pi(\binLB\supk, \equQL,\sigma, \sigmast) \leq \pist(\binLB\supk) = \min_{i \in \loc\supk}\gb_i$. When $\xi(\kappa(\queue)) = 0$, $\pi(\queue, \equQL, \sigma, \sigmast)  \leq \min_{i \in \loc\supk}\gb_i$ trivially holds. When $\xi(\kappa(\queue))  > 0$, 
combining \eqref{equ:cont_earning_from_q_rand} and the assumption that $\pi(\queue, \equQL, \sigma, \sigmast) > \min_{i \in \loc\supk}\gb_i$, we get
\begin{align}
	 \omega(\kappa(\queue)) >
	 \min_{i \in \loc\supk}\{\gb_i\} + 
	 (1 - \xi(\kappa(\queue)) ) \cdot \cost \kappa(\queue)/\xi(\kappa(\queue)).  \label{equ:expected_utility_if_dispatched}
\end{align}

Observe that in the first $\kappa(\queue)$ units of time, the driver receives trip dispatches at rates $\{ \eta_i \supk \}_{i \in \loc}$. Now consider a stationary setting that we analyzed in Lemma~\ref{lem:single_driver_best_response_stationary}, where a driver always receives trip dispatches at rates $\{ \eta_i \supk \}_{i \in \loc}$.\footnote{In other words, we allow the driver to remain in the $k\th$ bin forever, instead of forcing her to move past $\binLB\supk$.}
If the driver employs strategy $\sigma$ (restricted to the first $\kappa(\queue)$ units of time starting from position $\queue$ in the queue) in this stationary setting, the driver's expected utility, which we denote as $\pihat(\sigma)$, can be written as:
\begin{align*}
	\pihat(\sigma) = & \xi(\kappa(\queue))  \omega(\kappa(\queue))  + (1 - \xi(\kappa(\queue))) (\pihat(\sigma) - \cost \kappa(\queue)).
\end{align*}
Intuitively, if the driver gets dispatched in the first $\kappa(\queue)$ units of time given strategy $\sigma$, she gets an expected payoff of $\omega(\kappa(\queue))$, and this happens with probability $\xi(\kappa(\queue))$. 
If the driver is not dispatched in the first $\kappa(\queue)$ units of time, the driver's continuation payoff starting from that point of time onward is again $\pihat(\sigma)$. Reorganizing this expression, and applying \eqref{equ:expected_utility_if_dispatched}, we get:
\begin{align*}
 \xi(\kappa(\queue)) \pihat(\sigma) = & \xi(\kappa(\queue))  \omega(\kappa(\queue))  - (1 - \xi(\kappa(\queue)))  \cost \kappa(\queue) \\
	> & \xi(\kappa(\queue)) \min_{i \in \loc\supk}\{\gb_i\} +  (1 - \xi(\kappa(\queue)) ) \cdot \cost \kappa(\queue) - (1 - \xi(\kappa(\queue)))  \cost \kappa(\queue) \\
	=& \xi(\kappa(\queue)) \min_{i \in \loc\supk}\{\gb_i\}. 
\end{align*}

This implies $\pihat(\sigma) > \min_{i \in \loc\supk} \{\gb_i\}$, meaning that there exists a strategy for a driver to get a continuation payoff strictly above $ \min_{i \in \loc\supk} \{\gb_i\}$ in the stationary setting where the driver always receives trip dispatches at rate $\{ \eta_i \supk \}_{i \in \loc}$ given by \eqref{equ:bin_k_offer_rates}. We now prove that this is not possible, and as a result we have a contradiction.

Given \eqref{equ:bin_k_offer_rates} and \eqref{equ:bin_size_lb}, for a driver who receives trip dispatches at rates $\{ \eta_i \supk \}_{i \in \loc}$, the expected utility from accepting top $j$ trips (as defined in \eqref{equ:util_top_j}) for some $j \geq \minInBin\supk$ can be rewritten as: %
\begin{align}
	\rho_j\supk =  & \left(\sum_{i \leq j} \gb_i \eta_i \supk - \cost \right) \bigg/ \sum_{i \leq j} \eta_i \supk 
	= \left(\sum_{i = \minInBin\supk}^j \gb_i \mu_i - \cost (	\binUB\supk - \binLB\supk) \right) \bigg/ \sum_{i = \minInBin\supk}^j \mu_i  \label{equ:util_top_j_kth_bin} \\
	= & \left(\sum_{i = \minInBin\supk}^j \gb_i \mu_i - \left( \sum_{i \in \loc \supk} (\gb_i - \min\limits_{i' \in \loc\supk}\{ \gb_{i'} \})  \mu_i  \right) \right) \bigg/ \sum_{i = \minInBin\supk}^j \mu_i. \label{equ:util_top_j_kth_bin_ub}
\end{align}
Recall that $\minInBin\supk$ and $\maxInBin\supk$ are the lower-index and highest-index trips in $\loc\supk$, respectively. When $j \leq \maxInBin\supk$, 
\begin{align*}
	\rho_j\supk = \min\limits_{i' \in \loc\supk}\{ \gb_{i'} \} - \left(\sum_{i = j+1}^{\maxInBin\supk} ( \gb_i - \min\limits_{i' \in \loc\supk}\{ \gb_{i'} \})  \mu_i  \right) \bigg/ \sum_{i = \minInBin\supk}^j \mu_i \leq \min\limits_{i' \in \loc\supk}\{ \gb_{i'} \}.
\end{align*}
When $j \geq \maxInBin\supk$, we also have:
\begin{align*}
	\rho_j\supk =  \left(\sum_{i = \minInBin\supkpo}^j \gb_i \mu_i +  \min\limits_{i' \in \loc\supk}\{ \gb_{i'} \}   \sum_{i \in \loc \supk}  \mu_i  \right) \bigg/ \sum_{i = \minInBin\supk}^j \mu_i \leq \min\limits_{i' \in \loc\supk}\{ \gb_{i'} \}.
\end{align*}
As a result, $\rho_j\supk$ is optimized at $j = \maxInBin\supk$, taking value $\min_{i \in \loc \supk }\{\gb_{i}\}$. Lemma~\ref{lem:single_driver_best_response_stationary} then implies that a driver in such a stationary setting cannot achieve a utility strictly higher than $\min_{i \in \loc \supk }\{\gb_{i}\}$.
This completes the proof of Step 2.2. 

\medskip

\newcommand{\qhat}{\hat{\queue}}

\noindent{}\emph{Step 2.3.} 
We now complete the induction step by proving that $\sigmast$ is a best response for drivers at any position $\queue \in [\binUB\supkmo, \binUB\supk]$ in the queue. %
We prove this by contradiction. Assume %
that there exists a strategy $\sigma$ such that $\pi(\queue, \equQL, \sigma, \sigmast) > \pist(\queue)$ for some $\qhat \in [\binUB\supkmo, \binUB\supk]$. 
We show a contradiction with the following steps.
\begin{enumerate}[(i)]
    \item We first argue that it is %
    without loss of generality to restrict our analysis to strategies %
    such that the driver does not leave the queue, i.e. $\sigma = (\alpha, \beta, \gamma)$ for which  $\gamma(\queue, \equQL) = 0$, $\forall \queue \in [\binUB\supkmo, \binUB\supk]$. 
    This is because if we have a strategy $\sigma$ where 
    the driver leaves with a non-zero probability at some $\queue \in [\binUB\supkmo, \binUB\supk]$, we may construct an alternative strategy where instead of leaving the queue at $\queue$, the driver stays in the queue and follows $\sigmast$ from then on. Step 2.1 implies that this is an improvement, since the driver gets a continuation payoff of $\pist(\queue) \geq \min_{i \in \loc\supk} \{\gb_i\} > 0$ instead of $0$. Thus we get an alternative strategy that improves over $\sigmast$, and also satisfies $\gamma(\queue, \equQL) = 0$ for all $\queue \in [\binUB\supkmo, \binUB\supk]$. This contradicts Step 2.2. 
    \item We now prove that the continuation payoff at the tail of the queue under $\sigma$ must satisfy $\pi(\equQL, \equQL, \sigma, \sigmast) > \min_{i \in \loc\supk}\{\gb_i\}$. First, there must exist some $\qtilde \in [\binUB\supkmo, \binUB\supk]$ such that $\beta(\qtilde, \equQL) > 0$. Otherwise, given (i), $\sigma$ is a strategy such that $\gamma(\queue, \equQL) = \beta(\queue,\equQL) = 0$ for all $\queue \in [\binUB\supkmo, \binUB\supk]$, and we have proved in Step 2.2 that among all such strategies, $\sigmast$ is a best response.
    Assuming now towards a contradiction, that $\pi(\equQL, \equQL, \sigma, \sigmast) \leq \min_{i \in \loc\supk}\{\gb_i\}$. $\pist(\queue) \geq \min_{i \in \loc\supk}\{\gb_i\}$ (from Step 2.1) implies that it is a (weak) improvement if instead of moving to the tail of the queue, the driver remains in the queue and adopts $\sigmast$ from then on. This, again, is a strategy with $\gamma(\queue, \equQL) = \beta(\queue,\equQL) = 0$ for all $\queue \in [\binUB\supkmo, \binUB\supk]$, thereby contradicting Step 2.2.
        \item With $\pi(\equQL, \equQL, \sigma, \sigmast) > \min_{i \in \loc\supk}\{\gb_i\}$, we claim that
    \begin{align}
    	\pi(\binUB\supk, \equQL, \sigma, \sigmast) > \pi(\equQL, \equQL, \sigma, \sigmast) > \min_{i \in \loc\supk}\{\gb_i\}. \label{equ:proof_rand_fifo_binub_better_than_tail}
    \end{align}
    First, observe that a driver at the tail of the queue $\queue = \equQL$ will not receive any trip with net earnings weakly above $\min_{i \in \loc \supk} \{\gb_i\}$ until the driver reaches position $\binUB\supk$ in the queue. In the scenarios where the driver is dispatched under $\sigma$ before reaching $\binUB\supk$, the driver's payoff is strictly below $\min_{i \in \loc \supk} \{\gb_i\}$.
    $\pi(\equQL, \equQL, \sigma, \sigmast)$ is a weighted average of 
        (I) the payoff the driver gets from being dispatched before reaching $\binUB\supk$, and 
        (II) the continuation payoff after reaching $\binUB\supk$ $\pi(\binUB\supk, \equQL, \sigma, \sigmast)$, minus the waiting cost a driver incurs before reaching $\binUB\supk$.
    Therefore, we must have $\pi(\binUB\supk, \equQL, \sigma, \sigmast) > \pi(\equQL, \equQL, \sigma, \sigmast)$ in order for $\pi(\equQL, \equQL, \sigma, \sigmast) > \min_{i \in \loc\supk}\{\gb_i\}$ to hold. 
    \item It is without loss of generality to assume that there exists $\qtilde \in [\binUB\supkmo, \binUB\supk]$ such that $\beta(\qtilde, \equQL) = 1$, i.e. the driver always moves back to the tail of the queue at $\qtilde$. First, observe that $\pi(\qtilde, \equQL, \sigma, \sigmast) \leq \pi(\equQL, \equQL, \sigma, \sigmast)$ must hold for some $\qtilde \in [\binUB\supkmo, \binUB\supk]$--- otherwise, reducing $\beta(\queue, \equQL)$ to zero for all $\queue \in [\binUB\supkmo, \binUB\supk]$ is a weak improvement, again contradicting Step 2.2. Now, increasing $\beta(\qtilde, \equQL)$ to $1$ for one such $\qtilde \in [\binUB\supkmo, \binUB\supk]$ will be a weak improvement over $\sigma$, thus in this way, we've constructed a strategy that achieves a better continuation payoff than $\sigmast$ at some point, with $\qtilde \in [\binUB\supkmo, \binUB\supk]$ for some $\beta(\qtilde, \equQL) = 1$.
    \item Now consider an alternative setting, where the driver follow strategy $\sigma$, except that the driver always moves back to $\binUB\supk$ instead of the tail of the queue, whenever the driver is prescribed to move to the tail of the queue under $\sigma$.\footnote{This is not allowed under randomized FIFO--- we construct this scenario for the purpose of this proof only.} 
    Denote this new strategy as $\sigma'$.
    \eqref{equ:proof_rand_fifo_binub_better_than_tail} implies that this will be an improvement, such that the continuation payoff under this new setting, which we denote as $\pihat$, must also satisfy $\pihat(\binUB\supk, \equQL, \sigma', \sigmast) > \min_{i \in \loc\supk}\{\gb_i\}$.

    This is, however, not possible. Observe that in this alternative setting, under $\sigma'$, the driver at $\binUB\supk$ will either accept a trip and leave the queue before reaching $\qtilde \in [\binUB\supkmo, \binLB\supk]$, or move back to $\binUB\supk$ before reaching $\qtilde$ or at $\qtilde$. As a result, the driver will either be receiving trip dispatches at rates $\{ \eta_i \supk \}_{i \in \loc}$ defined in \eqref{equ:bin_k_offer_rates} (when the driver is at some position in $[\binLB\supk, \binUB\supk]$), or not receive any trip dispatches at all (when the driver is in $[\binUB\supkmo, \binLB\supk)$). The driver's payoff is therefore upper bounded by the scenario where she is in a stationary setting, always receiving trip dispatches at rates $\{ \eta_i \supk \}_{i \in \loc}$, but we have proved in Step 2.2 that the highest achievable expected payoff in this setting is $\min_{i \in \loc\supk}\{\gb_i\}$. 
    
    This is a contradiction, and concludes the proof of the induction step, that $\sigmast$ is a best response for a driver at any position $[\binUB\supkmo, \binUB\supk]$ in the queue. 
\end{enumerate}

\bigskip

\noindent{}\emph{Step 3: The last bin $k = \numBins$ and beyond.} 
Given Steps 1 and 2, we know that $\sigmast$ is a best response for a driver at any position $\queue \leq \binUB\supKmo$ in the queue. What is left to prove that $\sigmast$ is also a best response for any driver at $\queue \in (\binUB\supKmo, \equQL]$ in the queue. 

First, with the same arguments as in Step 2.1, we can show that %
\begin{align}
    \pi(\queue, \equQL, \sigmast, \sigmast) = 
	      \min_{i \in \loc \supKmo} \{\gb_i\}  - \cost (\queue - \binUB\supKmo) / \tpRand_{1:\numBins - 1}, &\txtif \queue \in [\binUB\supKmo, \binLB\supK], 
\end{align}
and that
\begin{align}
    \pi(\binLB\supK, \equQL, \sigmast, \sigmast)  = \pi(\binUB\supK, \equQL, \sigmast, \sigmast) = \min_{i \in \loc\supK} \{\gb_i\} = \gb_\maxJ.
\end{align}

In the case where $|\loc \supK| = 1$, i.e. when $\loc \supK = \{ \maxJ \}$, Lemma~\ref{lem:bin_properties} implies that $\binUB\supK = \binLB\supK = \numSkip_\maxJ$. 
Under $\sigmast$, all trips to locations $j < \maxJ$ are accepted by drivers in the top $\numBins - 1$ bins, and all trips to locations $ j \geq \maxJ$ are dispatched (for the last time) to drivers at position $\numSkip_\maxJ$ in the queue, where drivers accept only trips to location $\maxJ$. The equilibrium queue length is $\equQL = \numSkip_\maxJ$ when $\lambda \leq \sum_{i \in \loc} \mu_i$. When $\lambda > \sum_{i \in \loc} \mu_i$, the equilibrium queue length is $\equQL = \maxQL$, and $\pist(\queue)$ decreases linearly in $\queue$ when $\queue \geq \numSkip_\maxJ = \numSkip_\ell$, with $\pist(\maxQL) = 0$ at the tail of the queue. Using the same arguments as those in the proof of Lemma~\ref{lem:strict_fifo_spe_formal}, we can show that when the length of the queue is $\equQL$ and when the rest of the drivers adopt $\sigmast$, it is also a best response for a driver at any $\queue \in [\binUB\supKmo, \equQL]$ to adopt $\sigmast$. We do not repeat the same reasoning here.

\medskip

What is left to analyze is the case where  $|\loc \supK| > 1$, in which case $\binUB\supK - \binLB\supK  > 0$. 
When all drivers adopt strategy $\sigmast$, for any driver in the last bin $[\binLB\supK, \binUB\supK]$, the rates at which the driver receives trip dispatches to each location are given by
\begin{align}
	\eta \supK_i = \pwfun{
		0, & \txtif i \in \cup_{k' = 1}^{\numBins - 1} \loc\supkprime, \\
		\mu_i / (\binUB\supK - \binLB\supK), & \txtif i \in \loc\supK.
		} \label{equ:bin_K_offer_rates}
\end{align}
Applying Lemma~\ref{lem:single_driver_best_response_stationary} in the same way as we did in Step 2.2 above, we can show that for a driver in a stationary setting, where the driver always receives trips to all locations at rates $\{\eta_i \supK\}_{i \in \loc}$, the highest expected payoff a driver may get is $\min_{i \in \loc\supK} \{ \gb_i \} = \gb_\maxJ$.

We prove that under $\sigmast$, $\pist(\queue) = \gb_\maxJ$ holds for all $\queue \in [\binLB\supK, \binUB\supK]$. 
Denote $\mutilde_\maxJ \triangleq \min\{\mu_\maxJ, ~\lambda - \sum_{j < \maxJ} \mu_j\}$.
Under $\sigmast$, a driver in the last bin accepts trip dispatches to location $\maxJ$ with probability $\mutilde_\maxJ / \mu_\maxJ$ (see \eqref{equ:rand_fifo_case_2_alphast}). 
As a result, for a driver at position $\queue \in [\binLB\supK, \binUB\supK]$ in the queue, under $\sigmast$, the total rate at which the driver receives and accepts trip dispatches is 
\begin{align*}
    \etatilde \supK \triangleq \sum_{i \in \loc \supK, ~ i < \maxJ} \eta \supK_i + \mutilde_\maxJ /(\binUB\supK - \binLB\supK).
\end{align*}

Consider now a driver who is at position $\binUB\supK$ at time $t = 0$, and denote the driver's position as a function of time $t$ as $\goft(t)$. 
The same argument as in the proof of Step 2.1 implies that for all $t$ such that $\goft(t) \in [\binLB\supK, \binUB\supK]$
the derivative of the continuation payoff $\pist(\goft(t))$ %
with respect to $t$ is of the form:
\begin{align*}
    \frac{\dd \pist(\goft(t))}{ \dd t} 
    = & \etatilde \supK \left( \pist(\goft(t))  - \left( \left(\sum_{i \in \loc \supK, i< \maxJ} \gb_i \mu_i + \gb_\maxJ \mutilde_\maxJ \right) \bigg/ \left(\sum_{i \in \loc \supK, i< \maxJ} \mu_i + \mutilde_\maxJ \right)   - \cost / \etatilde \supk \right) \right).
\end{align*}
It is straightforward to verify that 
\begin{align}
     \left( \left(\sum_{i \in \loc \supK, i< \maxJ} \gb_i \mu_i + \gb_\maxJ \mutilde_\maxJ \right) \bigg/ \left(\sum_{i \in \loc \supK, i< \maxJ} \mu_i + \mutilde_\maxJ \right)   - \cost / \etatilde \supK \right) = \gb_\maxJ,
\end{align}
thus solving $\frac{\dd \pist(\goft(t))}{ \dd \queue} = \etatilde\supK \left( \pist(\goft(t)) - \gb_\maxJ \right)$ with boundary condition $ \pist(\goft(t)) = \gb_\maxJ$, we know that $\pist(\goft(t)) = \gb_\maxJ$ holds for all $t$ such that $\goft(t) \in [\binLB\supK, \binUB\supK]$.  %

\smallskip

Regarding the tail of the queue: in the under-supplied scenario where $\lambda \leq \sum_{i \in \loc} \mu_i$, the equilibrium queue length is $\equQL = \binUB\supK = \numSkip_\maxJ$, thus there is no more driver in the queue beyond $\binUB\supK$.
In the over-supplied scenario with $\lambda > \sum_{i \in \loc} \mu_i$, $\maxJ = \ell$ and $\binUB\supK = \numSkip_\ell$, and we have:
\begin{align}
    \pist(\queue) = \gb_\ell - \cost (\queue - \numSkip_\ell) \bigg/ \sum_{i\in \loc} \mu_i, ~\forall \queue \in [\numSkip_\ell, \maxQL].  
\end{align}

\medskip

Assume that the queue length is $\equQL$ and the rest of the drivers adopt strategy $\sigmast$.
To prove that $\sigmast$ is a best-response for drivers in $[\binUB\supKmo, \binUB\supK]$ i.e. $\pist(\queue) \geq \pi(\queue,\equQL,\sigma, \sigmast)$ for all $\queue \in [\binUB\supKmo, \binUB\supK]$ and any feasible strategies $\sigma$, we use arguments very similar to those in Steps 2.2 and 2.3, and therefore do not repeat the details here. Intuitively, if $\sigmast$ is not a best response, we 
are able to construct a strategy under which a driver gets a payoff strictly higher than  $\min_{i \in \loc\supK} \{ \gb_i \} = \gb_\maxJ$ in the stationary setting where the river always receives trips to all locations at rates $\{\eta_i \supK\}_{i \in \loc}$. This is not possible, as we have discussed above.

This establishes that the highest continuation payoff a driver at $\queue = \binUB\supK$ may get under any strategy is $\pist(\binUB\supK)$. To show that $\sigmast$ is also a best response for any driver at $\queue \in [\binUB\supK, \maxQL]$ in the setting with $\lambda > \sum_{i \in \loc} \mu_i$, the same arguments %
used in the proof of Lemma~\ref{lem:strict_fifo_spe_formal} applies, thus we again refer the readers to Appendix~\ref{appx:proof_strict_FIFO}.

\medskip

This completes the proof for Case 2, $\numBins > 1$, and establishes that when the length of the queue is $\equQL$, strategy $\sigmast$ forms a Nash equilibrium in steady state among the drivers.
As we have discussed earlier, this equilibrium outcome is optimal for trip throughput and the platform's net revenue, since it has the same queue length and completes the same set of trips as the equilibrium outcome under direct FIFO (which is optimal - see Theorem~\ref{thm:fifo_skip_second_best}).  
Combining the three steps of Case 2, we also see that the continuation payoff $\pist(\queue)$ %
is non-negative and monotonically non-increasing in $\queue$. %
As a result, the equilibrium outcome of Case 2 is also individually rational and envy-free.

\medskip

This completes the proof of this theorem. 
\end{proof}

\section{Equilibrium Outcome Under Various Mechanisms} \label{appx:var_of_earnings_derivation}

In this section, we derive the steady state equilibrium outcome under various benchmarks and mechanisms that we discussed in this paper. For each mechanism, (the first best, strict FIFO, direct FIFO, random dispatching, and randomized FIFO), 
we compute the equilibrium trip throughput, net revenue, average driver payoff, length of the queue, the minimum and maximum waiting times in the queue, and the variance in drivers' total payoffs.

\smallskip

Recall that $\maxJ$ as defined in \eqref{equ:maxJ} is the lowest earning trip that is (partially) completed under the first best outcome, and that $\mutilde_\maxJ \triangleq \min\{\mu_\maxJ, ~\lambda - \sum_{i = 1}^{\maxJ - 1} \mu_j\}$ denotes the amount of type $\maxJ$ jobs fulfilled per unit of time in steady state. 
Moreover, for any $i \in \loc$, $\waitGap_{i, i+1}$ denotes the amount of time a driver is willing to wait for a trip to location~$i$, assuming that the driver has the option to immediately accept a trip to location $i+1$: $\waitGap_{i, i+1} = (\gb_i - \gb_{i+1})/\cost$. 

\subsection{The First Best}

We first derive the steady state first best outcome as discussed in Section~\ref{sec:preliminaries},
which dispatches available drivers (upon arrival) to destinations in decreasing order of $\gb_i$, until either all drivers are dispatched, or all riders are picked up. 

\begin{enumerate}[$\bullet$]
	\item Trip throughput: $\tp\fb = \min\{ \lambda, ~ \sum_{i \in \loc} \mu_i\}$. 
	\item Net revenue: $\rev\fb = \sum_{i =1}^{ \maxJ - 1} \gb_i \mu_i + \gb_\maxJ \mutilde_\maxJ $.  %
	\item The average payoff of each driver who arrived at the queue: $\util \fb = \rev\fb / \lambda$, since no driver incurs any waiting cost, and $\rev\fb$ is equal to the total payoff achieved by all drivers per unit of time. 
	\item The first best outcome maintains an empty driver queue, therefore the queue length, the minimum and maximum waiting time of any driver, and the average driver waiting time are all zero. 
	\item Since drivers are either dispatched a random trip or asked to leave the queue without waiting any time in the queue, the variance in drivers' payoffs can be computed as follows:
	\begin{enumerate}[-]
		\item When the platform is under-supplied, i.e. $\lambda \leq \sum_{i \in \loc}\mu_i$, every driver gets dispatched a trip to some location $i \leq \maxJ$. The variance in drivers' payoffs is therefore the variance of the net earnings of the completed trips: 
		\begin{align*}
			\var{\Util\fb} = \frac{1}{\lambda} \left( \sum_{i = 1}^{\maxJ - 1} \mu_i (\gb_i - \util  \fb)^2 + \mutilde_\maxJ (\gb_\maxJ - \util  \fb)^2 \right).
		\end{align*}
		\item When the platform is over-supplied, there are $\mu_j$ drivers per unit of time each getting a payoff of $\gb_j$, and the rest of the drivers all get zero since they leave the airport a rider trip. Therefore, 
		\begin{align*}
			\var{\Util\fb} = \frac{1}{\lambda} \left(\sum_{i \in \loc} \mu_i (\gb_i - \util  \fb)^2  + (\lambda - \sum_{i \in \loc} \mu_i) (0 - \util\fb)^2 \right).
		\end{align*}
	\end{enumerate}
\end{enumerate}

\subsection{Equilibrium Outcome under Direct FIFO}

As we have proved in the paper, the direct FIFO mechanism has the same trip throughput as the first best throughput, and zero variance in drivers' earnings. Moreover:
\begin{enumerate}[$\bullet$]
	\item As we have shown in Theorem~\ref{thm:fifo_skip_second_best},  the steady state equilibrium queue length is $\equQL\direct = \maxQL$ %
	if $\lambda > \sum_{i \in \loc} \mu_i$, and $\equQL\direct = \numSkip_\maxJ$, otherwise. %
	\item Net revenue: 
	\begin{enumerate}[-]
		\item When $\lambda > \sum_{i \in \loc} \mu_i$, all trips are completed, thus $\rev\direct = \sum_{i \in \loc}\mu_i \gb_i - \equQL\direct \mechCost$.
		\item When $\lambda \leq \sum_{i \in \loc} \mu_i$, we have $\rev\direct = \sum_{i =1}^{\maxJ - 1} \mu_i \gb_i + \mutilde_\maxJ \gb_\maxJ - \equQL\direct \mechCost$.    
	\end{enumerate}		
	\item The average driver payoff in equilibrium is $\equtil \direct = 0$ if $\lambda > \sum_{i \in \loc} \mu_i$, and $\equtil \direct  = \gb_\maxJ$ otherwise.
	\item Regarding the maximum, minimum, and average waiting times in queue:
	\begin{enumerate}[-]
		\item When $\lambda \leq \sum_{i \in \loc} \mu_i$, the minimum waiting time for a trip (in this case, a trip to location $\maxJ$) would be zero. The maximum waiting time (which would be for a trip to location $1$) is $\sum_{i =1}^{\maxJ - 1} \waitGap_{i, i+1} = (\gb_1 - \gb-\maxJ) / \cost$. The average waiting time for a driver who arrived at the queue is $\equQL\direct  / \lambda$, and the average waiting time for a driver who joined the queue is the same. 
		\item When the queue is $\lambda > \sum_{i \in \loc} \mu_i$, %
		the minimum amount of time the driver needs to wait in the queue for a trip (which would be for a trip to location $\numLoc$) is $\gb_\ell / \cost$.  The maximum waiting time is $\gb_1 / \cost$. %
		The average waiting time for a driver who arrived at the airport is $\equQL\direct  / \lambda$, and the average waiting time for a driver who joined the queue $\equQL\direct / \sum_{i \in \loc} \mu_i$. 
	\end{enumerate}
\end{enumerate}

\subsection{Equilibrium Outcome under Strict FIFO Dispatching}

Under strict FIFO, when $\patience \geq \numSkip_\maxJ$, all trips that are completed under direct FIFO will be able to reach a driver who is willing to accept them under strict FIFO. As a result, the equilibrium outcome will be identical to that under direct FIFO. 

When $\patience < \numSkip_\maxJ$, some trips that are completed under direct FIFO will not be completed under strict FIFO, thus there exists excess drivers who need to leave the queue without a rider trip. Let $\maxJ(\patience)$ be the lowest earning trip with $\numSkip_i \leq \patience$, the equilibrium outcome is as follows:
\begin{enumerate}[$\bullet$]
	\item Trip throughput: $\tp\strict = \sum_{i = 1}^{\maxJ(\patience)} \mu_i$. 
	\item The average payoff of drivers is thereby also $\equtil\strict = 0$, since drivers will join the queue until when they are indifferent between joining the queue and leaving without a rider.
	\item Drivers will be willing to wait $\gb_{\maxJ(\patience)}/\cost$ units of time for a trip to location $\maxJ(\patience)$, thus the total length of the queue would be $\equQL\strict = \numSkip_{\maxJ(\patience)} + \tp\strict \gb_{\maxJ(\patience)}/\cost $, which is equal to $\sum_{i = 1}^{\maxJ(\patience)} \mu_i \gb_i / \cost$.
	\item Net revenue: $\rev\strict=  \sum_{i = 1}^{\maxJ(\patience)} \mu_i \gb_i  - \equQL\strict \mechCost$. %
	\item $\gb_{\maxJ(\patience)}/\cost$ is the minimum amount of time a driver has to wait for any trip, and  the maximum waiting time (which would be for a trip to location~1) is $\gb_1 / \cost$. %
	\item On average, the total amount of time spent by all drivers on waiting is $\equQL$ units of time, per unit of time. Therefore, the average waiting time for a driver who joined the virtual queue is $\equQL/\tp\strict$, and the average waiting time for a driver who arrived at the origin is $\equQL / \lambda$. 
	\item Every driver gets a zero net payoff, thus the variance in drivers' earnings is also zero. 
\end{enumerate}

\subsection{Equilibrium Outcome under Random Dispatching} \label{appx:eq_under_pure_random}

As we have proved in Proposition~\ref{prop:pure_random}, random dispatching achieves the same equilibrium trip throughput, net revenue, and queue length as those under the direct FIFO mechanism. As a result, drivers also have the same average payoff and average waiting time. Given the fact that dispatching is random, theoretically drivers might not have to wait any time for a trip dispatch, and there is also no upper bound on a driver's waiting time in the queue.

What is left to compute is the variance in drivers' total payoff. We discuss the over-supplied and the under-supplied settings separately.

\paragraph{Over-supplied.} With $\lambda > \sum_{i \in \loc} \mu_i$, %
the average net earnings from a completed trip is  $\meangb \triangleq \sum_{i \in \loc} \gb_i \mu_i  \bigg/ \sum_{i \in \loc} \mu_i.$ 
The average waiting time of a driver who joined the queue is $\meangb /\cost$, since in equilibrium drivers are indifferent towards whether to join the queue.

Note that (i) a driver's waiting time in the queue is independent to the driver's net earnings from the trip she accepts, and (ii) whether a driver gets dispatched in memoryless, thus a driver's waiting time is exponentially distributed, with mean $\meangb /\cost$. The variance in drivers' waiting times is therefore $(\meangb /\cost)^2$, thus the variance in drivers waiting costs is $\meangb^2$. 

In steady state, $ \sum_{i \in \loc} \mu_i $ drivers join the queue per unit of time. The rest of the drivers do not join the queue thus get $0$, which is equal to the average payoff of all drivers. 
The total variance in the payoff of a driver who arrived at the airport is therefore:
\begin{align*}
	\left( \meangb^2 + \sum_{i \in \loc} (\gb_i - \meangb)^2 \mu_i  \bigg/ \sum_{i \in \loc} \mu_i\right)   \sum_{i \in \loc} \mu_i / \lambda. 
\end{align*}

\paragraph{Under-supplied.} 
Consider now the case when the queue is not over-supplied, and the lowest-earning trips that's completed is $\maxJ$. %
In equilibrium, $\mutilde_\maxJ$ units of trips to location $\maxJ$ are completed in each unit of time. 
The average net earnings of the trip completed by each driver
\begin{align*}
	\meangb = \left(\sum_{i = 1}^\maxJ \gb_i \mu_i  +  \mutilde_\maxJ \gb_\maxJ \right) \bigg/  \lambda,
\end{align*}
and the variance in drivers' net earnings from trips is
\begin{align*}
	 \left( \sum_{i = 1}^{\maxJ-1} (\gb_i - \meangb)^2 \mu_i + (\gb_\maxJ - \gb_\maxJ)^2 \mutilde_\maxJ  \right) \bigg/ \lambda. 
\end{align*}

The average waiting time for a driver is $\equQL / \tp = (\meangb - \gb_\maxJ)/\cost$, and the distribution of waiting times is exponential. Therefore, the variance in drivers waiting costs is $(\cost \equQL / \tp)^2 = (\meangb - \gb_\maxJ)^2$,  and the total variance in drivers' payoff is:
\begin{align*}
	(\meangb - \gb_\maxJ)^2 +  \sum_{i = 1}^{\maxJ-1} (\gb_i - \meangb)^2 \mu_i  \bigg/ \lambda. 
\end{align*}

\subsection{Equilibrium Outcome under Randomized FIFO} \label{appx:eq_under_randomized_FIFO}

As we have proved in Theorem~\ref{thm:randFIFO_second_best}, the randomized FIFO mechanisms achieve the same equilibrium trip throughput, net revenue, and queue length as those under the direct FIFO mechanism. As a result, drivers also have the same average payoff and average waiting time in the queue. With randomization in dispatching, it is generally possible for drivers to wait zero or infinite units of time for a dispatch, although there exist special cases where the minimum and maximum waiting times in the queue are non-zero or finite.

We now derive the minimum and maximum waiting times, and the variance in drivers' payoffs. We discuss the same set of cases as analyzed in the proof of Theorem~\ref{thm:randFIFO_second_best} in Appendix~\ref{appx:proof_thm_randFIFO_second_best}. 

\paragraph{Case 1.1: $\numBins = \maxJ = 1$.} As we have proved in Appendix~\ref{appx:proof_thm_randFIFO_second_best}, the outcome under randomized FIFO in this case is identical to that under the direct FIFO mechanism. %

\paragraph{Case 1.2: $\maxJ > 1$, $\numBins = \patience = 1$.} As we've discussed in Appendix~\ref{appx:proof_thm_randFIFO_second_best}, the outcome under randomized FIFO in this case will be identical to that under random dispatching, when the queue is not over supplied, i.e. when $\lambda \leq \sum_{i \in \loc}\mu_i$. %

What is left to discuss in the over-supplied case with $\lambda > \sum_{i \in \loc}\mu_i$. In this case, every trip is randomly dispatched to drivers at positions $[0, \numSkip_\ell]$ in the queue, and no driver declines any dispatches in equilibrium. With this randomization, the maximum waiting time for a driver in the queue can be infinite, but the minimum time a driver has to wait for a trip would be $(\maxQL - \numSkip_\ell)/\sum_{i \in \loc}\mu_i = \gb_\ell / \cost $, since a driver does not receive any dispatch until she has moved from the tail of the queue $\equQL = \maxQL$ to position $\numSkip_\ell$ in the queue. 

For a driver who joined the queue, the variance in her earnings from the trip she completes is $\sum_{i \in \loc} (\gb_i - \meangb)^2 \mu_i  \bigg/ \sum_{i \in \loc} \mu_i$, where $\meangb = \sum_{i \in \loc} \gb_i \mu_i  \bigg/ \sum_{i \in \loc} \mu_i$ is the average net earnings from a rider trip. Once a driver had reached $\numSkip_\ell$ in the queue, the additional time she has to wait for a trip is exponentially distributed with mean $(\meangb - \gb_\ell) / \cost$. As a result, for a driver who joined the queue, the variance in her waiting cost is $(\meangb - \gb_\ell)^2$, and the overall variance of all drivers who arrived at the queue is:
\begin{align*}
	\left( (\meangb - \gb_\ell)^2 + \sum_{i \in \loc} (\gb_i - \meangb)^2 \mu_i  \bigg/ \sum_{i \in \loc} \mu_i\right)   \sum_{i \in \loc} \mu_i / \lambda. 
\end{align*}

\newcommand{\0}{^{(0)}}

\paragraph{Case 2: $\numBins > 1$.} We now consider the case where there are multiple bins under randomized FIFO. 

No driver receives any dispatch until the driver reaches $\binUB\supK$. 
Since $\binUB\supK = \numSkip_\maxJ$, the minimum amount of time any driver has to wait for a trip is equal to $(\equQL\rand - \numSkip_\maxJ)/\tp\rand$, which is equal to $0$ if the queue is not over-supplied, and is equal to $(\maxQL - \numSkip_\ell)/\sum_{i \in \loc} \mu_i$ otherwise. What is left to compute is drivers' maximum waiting time in queue, and the variance in drivers' total payoffs. 

\medskip

Recall that $\eqUtil = \Util(\equQL, \equQL, \sigmast, \sigmast)$ is the random variable representing the payoff of all drivers who arrived at the queue, and that $\equtil \triangleq \E{\eqUtil} = \pist(\equQL)$. From Theorem~\ref{thm:randFIFO_second_best}, the average payoff of all drivers who arrived at the queue is the same as that under direct FIFO, i.e. $\equtil = \gb_\maxJ$ when the queue is not oversupplied, and $\equtil = 0$, otherwise. 

Let $\Util \supk$ represent the (equilibrium, steady state) total payoff of a driver who is dispatched from the $k\th$ bin, and let $\Util\0 = 0$ denote the payoff of drivers who did not join the queue (if any). We know that
\begin{enumerate}[$\bullet$]
	\item $\eqUtil$ takes value $\Util\0$ with probably $\psi\0 \triangleq \max\{\lambda - \sum_{i \in \loc}\mu_i, ~0\} / \lambda$.
	\item $\eqUtil$ takes value $\Util\supk$ with probability $\psi\supk \triangleq \tpRand\subkk / \lambda = \sum_{i \in \loc\supk} \mu_i /\lambda$ for each $k = 1, \dots, \numBins - 1$, and
	\item $\eqUtil$ takes value $\Util\supK$ with probability $\psi\supK \triangleq \min\{\lambda -  \tpRand_{1:m-1},~ \tpRand_{m:m} \} / \lambda$.
\end{enumerate}
The overall variance in drivers' total payoff $\var{\eqUtil} $ can be written as:
\begin{align*}
	 &  %
	\E{(\eqUtil - \equtil)^2} 
	= \sum_{k = 0}^\numBins \psi\supk \E{ \left(\Util \supk - \equtil \right)^2 } 
	= \sum_{k = 0}^\numBins \psi\supk \left( \var{\Util \supk} +  \left(\E{\Util \supk} - \equtil \right)^2\right).
\end{align*}
$\E{\Util\0} = \var{\Util\0} = 0$. We now compute  $\E{\Util\supk}$ and $\var{\Util\supk}$ for each $k\geq 1$.

\medskip

\noindent{}\emph{The last bin: $k = \numBins$.} 
When $|\loc\supK| = 1$, $\loc\supK = \{ \maxJ \}$ and $\binLB\supK = \binUB\supK = \numSkip_\maxJ$.  In this case, there is no variance in the payoffs of all drivers who are dispatched from the last bin: $\var{\Util\supK} = 0$. Moreover, drivers dispatched from the last bin gets the average payoff of all drivers: $\E{\Util\supK} = \equtil$.

Now consider the setting where $|\loc\supK| > 1$. Recall that $\mutilde_\maxJ \triangleq \min\{\mu_\maxJ, ~\lambda - \sum_{j < \maxJ} \mu_j\}$ is the unit of location $\maxJ$ trip completed per unit of time in steady state. The rate at which drivers are dispatched from the last bin is:
\begin{align*}
	\tilde{\tpRand}_{m:m} \triangleq \sum_{i \in \loc\supK, ~i < \maxJ}  \mu_i + \mutilde_\maxJ
\end{align*}
and the average net earnings from  a trip accepted by a driver dispatched from the last bin is:
\begin{align*}
	\meangb\supK \triangleq \left( \sum_{i \in \loc\supK, ~i < \maxJ} \gb_i \mu_i + \gb_\maxJ \mutilde_\maxJ \right) \bigg/ \tilde{\tpRand}_{m:m}. 
\end{align*}

$\Util\supK$ is equal to the earning from the trip the driver (who is dispatched from the last bin) completes, minus the total waiting costs the driver incurred. The expected net earnings from trip is $\meangb\supK$.
For the expected waiting cost, %
note that once a driver reached $\binUB\supK$, the driver's additional waiting time for a dispatch should be exponentially distributed, truncated at the time the driver reaches $\binLB\supK$. The parameter of this exponential distribution is $\etatilde \supK \triangleq \tilde{\tpRand}_{m:m} / (\binUB\supK - \binLB\supK)$, and by the time the driver reaches $\binLB\supK$, $\tilde{\tpRand}_{m:m}$ out of the $\tp\rand = \min\{\lambda, \sum_{i \in \loc} \mu_i \}$ drivers who reached position $\binUB\supK$  in the queue are dispatched. Denote
\begin{align*}
	\zeta\supK \triangleq \tilde{\tpRand}_{m:m} / \tp\rand
\end{align*}
as the fraction of drivers who are dispatched in the last bin (out of all of the drivers who joined the queue), and 
denote $\Delta\supK$ as the time it takes for a driver to move from $\binUB\supK$ to $\binLB\supK$ in the queue if the driver is not dispatched before reaching $\binLB\supK$, we know:
\begin{align}
	1 - e^{- \Delta\supK \etatilde \supK  } = \zeta\supK  \Rightarrow \Delta\supK = -  \log(1 -  \zeta\supK ) / \etatilde \supK . \label{equ:time_to_move_from_ub_to_lb}
\end{align}
Denote $\nu \supK \triangleq \Delta\supK + (\equQL - \binUB\supK) / \tp\rand$, we know that $\nu \supK$ is the time it takes for a driver to move from the tail of the queue to the lower bound of the last bin $\binLB\supK$, if the driver is not dispatched before reaching $\binLB\supK$.

For a trip for a driver who is dispatched from the last bin, the average waiting time the driver spends in the last bin is therefore:
\begin{align}
	\int_{0}^{\Delta\supK}  t \cdot \etatilde \supK  e^{- \etatilde \supK t } \dd t = \frac{1}{\etatilde \supK   } ( 1 + (1/ \zeta\supK -  1)  \log( (1 -  \zeta\supK )). \label{equ:expected_waiting_time_in_bin}
\end{align}
Let $\kappa\supK$ be the random variable representing the total waiting time of a driver who is dispatched from the last bin, we know that
\begin{align*}
	\E{\kappa\supK} = (\equQL - \binUB\supK) / \tp\rand + \frac{1}{\etatilde \supK   } ( 1 + (1/ \zeta\supK -  1)  \log( (1 -  \zeta\supK )),
\end{align*}
and with this we can compute $\E{\Util\supK} = \meangb\supK - \cost \E{\kappa\supK}$.  

\smallskip

What is left to compute is $\var{\Util\supK}$.
For a driver who is dispatched from the last bin, the earning from the trip the driver receives is independent to the time at which the driver receives the trip.  As a result, the  variance $\var{\Util\supK}$ should be the sum of the variance in trip earnings and the variance in the waiting costs. The former is equal to:
\begin{align*}
	\left(\sum_{i \in \loc\supK, ~i < \maxJ} (\gb_i - \meangb\supK)^2 \mu_i + (\gb_\maxJ - \meangb\supK) \mutilde_\maxJ \right) \bigg/ \tilde{\tpRand}_{m:m},
\end{align*} 
and the latter is of the form:
\begin{align*}
	& \cost^2 \int_{0}^{\Delta\supK}  \left(t - \frac{1}{\etatilde \supK   } ( 1 + (1/ \zeta\supK -  1)  \log( (1 -  \zeta\supK )) \right)^2  \etatilde \supK  e^{- \etatilde \supK t } \dd t \\
	 = & \left( \frac{\cost} {\etatilde \supK \zeta\supK}\right)^2 
	 \left(( \zeta\supK)^2 + (-1 +  \zeta\supK) \left(\log(1 -  \zeta\supK)\right)^2 \right).
\end{align*}

\medskip

\noindent{}\emph{A middle bin.} Now consider each $k = \numBins -1, \numBins - 2, \dots, 2$.  %
When $|\loc\supk| = 1$, there is no variance in the payoffs of drivers who are dispatched from the $k\th$ bin: $\var{\Util\supk} = 0$. 
It takes the driver a total of $\nu\supkpo + (\binLB\supkpo - \binUB\supk)/\tpRand\subok$ units of time to move from the tail of the queue to $\binUB\supk$, and once the driver gets to $\binUB\supk$, the driver receives $\meangb\subkk$, which is equal to the net earnings from the only trip in $\loc\supk$. Therefore, $\E{\Util\supk} = \meangb\subkk - \cost \left(\nu \supkpo + (\binLB\supkpo - \binUB\supk)/\tpRand\subok \right)$.

For the case where $|\loc\supk| > 1$, recall that $\eta\supk \triangleq \tpRand\subkk / (\binUB\supk - \binLB\supk)$, and denote:
\begin{align*}
	\zeta \supk \triangleq \tpRand \subkk / \tpRand \subok.
\end{align*} 
With the same argument as those for the last bin,
the time it takes a driver to move from $\binUB\supk$ to $\binLB\supk$ (if the driver is not dispatched before reaching $\binLB\supk$) is
\begin{align*}
	\Delta \supk \triangleq  -  \log(1 -  \zeta\supk ) / \eta \supk.
\end{align*}
This implies that the total waiting time for a driver to reach $\binLB\supk$ (if the driver is not dispatched before then) is $\nu\supk \triangleq \nu \supkpo + (\binLB\supkpo - \binUB\supk)/\tpRand\subok + \Delta \supk$. 

The expected time a driver waits in the $k\th$ bin, if the driver is dispatched from the $k\th$ bin, is of the form:
\begin{align*}
	\int_{0}^{\Delta\supk}  t \cdot \eta \supk  e^{- \eta \supk t } \dd t = \frac{1}{\eta \supk } ( 1 + (1/ \zeta\supk -  1)  \log( (1 -  \zeta\supk)).	
\end{align*}
thus the expected payoff of a driver who is dispatched in the $k\th$ bin is:
\begin{align*}
	\E{\Util\supk} = \meangb\subkk - \cost \left(  \nu \supkpo + (\binLB\supkpo - \binUB\supk)/\tpRand\subok +  \frac{1}{\eta \supk } ( 1 + (1/ \zeta\supk -  1)  \log( (1 -  \zeta\supk)) \right).
\end{align*}

The variance $\var{\Util\supk}$ is similarly consisted of two parts. The variance from the net earnings from a trip a driver accepts in the $k\th$ bin is
\begin{align*}
	\sum_{i \in \loc\supk} (\gb_i - \meangb\supk)^2 \mu_i / \tpRand\subkk, 
\end{align*}
and the variance of drivers' waiting costs is:
\begin{align*}
	& \cost^2 \int_{0}^{\Delta\supk}  \left(t - \frac{1}{\eta \supk   } ( 1 + (1/ \zeta\supk -  1)  \log( (1 -  \zeta\supk )) \right)^2  \eta \supk  e^{- \eta \supk t } \dd t \\
	 = & \left( \frac{\cost} {\eta \supk \zeta\supk}\right)^2 
	 \left(( \zeta\supk)^2 + (-1 +  \zeta\supk) \left(\log(1 -  \zeta\supk)\right)^2 \right).
\end{align*}

\medskip

\noindent{}\emph{The first bin.}  When $|\loc\1| = 1$,  $\var{\Util\1} = 0$. The total waiting time for a driver to reach $\binUB\1 = 0$ is $\nu\2 + \binLB\2 / \mu_1$, thus the expected payoff of a driver dispatched from the first bin is $\E{\Util\1} = \gb_1 - \cost \left( \nu\2 + \binLB\2 / \mu_1 \right)$. Moreover, in this case, the maximum waiting time for any driver who have joined the queue is $\nu\2 + \binLB\2 / \mu_1$.

When $|\loc\1| > 1$, the waiting time a driver may spend waiting in the queue is unbounded. Once a driver reached $\binUB\1$, the driver's waiting time for a trip is exponentially distributed with parameter $\eta\1 \triangleq \tpRand_{1:1} / (\binUB\1 - \binLB\1)$. The expected waiting time in the first bin is $1/\eta\1$, thus the expected payoff of a driver who is dispatched from the first bin is
\begin{align*}
	\E{\Util\1} = \meangb_{1:1} - \cost\left(1 / \eta\1  + \nu\2 + (\binLB\2 - \binUB\1) / \tpRand_{1:1} \right),
\end{align*}
and $\var{\Util\1}$,  the variance of the earnings of a driver who is dispatched from the first bin is:
\begin{align*}
	\var{\Util\1} = \sum_{i \in \loc\1} (\gb_i - \meangb_{1:1})^2 \mu_i / \tpRand_{1:1} + (\cost / \eta\1)^2. 
\end{align*}

\section{Additional Discussion and Examples} \label{appx:additional_discussion}

\subsection{Net Earnings from Prices and Distances} \label{appx:net_earnings}

For each location $i \in \loc$, the \emph{net earnings} $\gb_i$ from a trip to location $i$ represents the total payoff of a driver who completed a trip to location $i$, minus the total payoff to a driver who left the queue without a rider (in this way, the net earnings of a driver who left the queue without a rider is normalized to be zero). The net earnings incorporate payments from the immediate trip, as well as
drivers' continuation earnings after arriving at different destinations (which are affected by market conditions at the destinations). %

\smallskip

Assuming that drivers get the same continuation earnings from every destination onward, we now illustrate %
how net earnings of trips can be derived from the prices
and distances of trips to different destinations.
For each destination $i \in \loc$, let $\dist_i > 0$ denote the amount of time (e.g. minutes) it takes for a driver to complete a trip to destination $i$, which includes time it takes for a driver to pick up the rider. $\rate_i > 0$ denotes the effective earnings rate from a trip to location $i \in \loc$, meaning that the total payment to a driver for a trip to location $i$ is $ \rate_i\dist_i$. The earnings rates are induced by the time and distance rates the platform pays the drivers, and may vary across destinations due to differences in trip lengths and traffic conditions, etc.
For drivers who decides to relocate without a rider and drive elsewhere, $\minDist > 0$ %
is the minimum relocation distance, i.e. the amount of time a driver needs to spend driving from the airport in order to start making an average earnings rate of $\cost$.

A driver who accepts a trip from the virtual queue to some location $i \in \loc$ will make $\rate_i$ per unit of time for $\dist_i$ periods, followed by making $\cost$ per period after arriving at location $i$.  
A driver who relocates back to the city without a rider makes $0$ for the first $\minDist$ periods, and then starts to earn $\cost$ per period. See Figure~\ref{fig:trip_deadhead_timeline}. The additional earnings from a trip to location $i$, relative to that from relocating without a rider (and then driving in the city), is the \emph{net earnings} from this trip.  %
For each location $i \in \loc$, the %
net earnings $\gb_i$ is of the form:
\begin{align}
	\gb_i = \dist_i \rate_i - \left( \minDist \cdot 0 +  (\dist_i - \minDist) \cost \right) %
	= \dist_i (\rate_i - \cost) + \minDist \cost. \label{equ:net_gb}
\end{align}

\renewcommand{\lamdmarkHeight}{0.4}
\renewcommand{\patienceLocation}{7}
\newcommand{\verticalGap}{3}

\begin{figure}[t!]
\centering
\begin{tikzpicture}[scale = 0.82][font =\small]

\draw[->, line width=0.25mm] (-0,0) -- (12,0) node[anchor=north] {time};

\draw(-2.8, -0.3) node[anchor=south] {A trip to location $i$};

\draw[-, line width=0.5mm] 	(0,0.1) -- (0,-0.1);
\draw[-, line width=0.5mm] 	(6,0.1) -- (6,-0.1);

\draw [decorate, decoration={brace, mirror, amplitude=8 pt}, xshift=0.4pt,yshift=-0.4pt] (0,0) -- (6,0); %
\node[text width=2.3cm] at (3.4, -0.8) {$\dist_i$ periods};

\draw(3, \lamdmarkHeight) node[anchor=south] {$\rate_i$ per period};
\draw(8.8, \lamdmarkHeight) node[anchor=south] {$\cost$ per period};

\draw[->, line width=0.25mm] (-0,-\verticalGap) -- (12,-\verticalGap) node[anchor=north] {time};

\draw(-2.8, -\verticalGap-0.3) node[anchor=south] {Relocation w/o a rider};

\draw[-, line width=0.5mm] 	(0,-\verticalGap+0.1) -- (0,-\verticalGap-0.1);
\draw[-, line width=0.5mm] 	(4,-\verticalGap+0.1) -- (4,-\verticalGap-0.1);

\draw [decorate, decoration={brace, mirror, amplitude=8 pt}, xshift=0.4pt,yshift=-0.4pt] (0,-\verticalGap) -- (4,-\verticalGap); %
\node[text width=2.3cm] at (2.4, -\verticalGap-0.8) {$\minDist$ periods};

\draw(2, \lamdmarkHeight-\verticalGap) node[anchor=south] {$0$ per period};
\draw(7.85, \lamdmarkHeight-\verticalGap) node[anchor=south] {$\cost$ per period};

\end{tikzpicture}
\caption{The timeline of a trip to location $i$ (above) and relocation without a rider (below). 
\label{fig:trip_deadhead_timeline}} 
\end{figure}
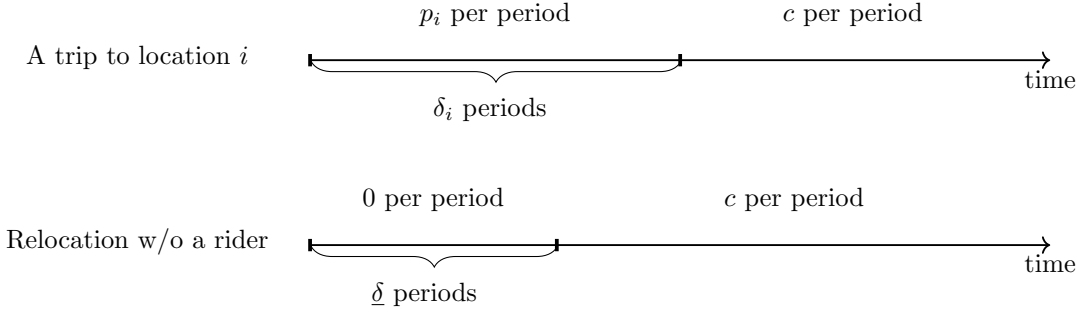

\subsection{Additional Examples}

Example~\ref{exmp:last_bin_with_trip_worse_than_maxJ} shows that a randomized FIFO mechanism may not achieve the second best outcome, if some trips with lower earnings than $\gb_\maxJ$ are included in the ordered partition of destinations.

\begin{example}[Last bin w/ trips to location $i \geq \maxJ$] \label{exmp:last_bin_with_trip_worse_than_maxJ}

Consider an economy with three destinations, where $\mu_1 = 1$, $\gb_1 = 100$,  $\mu_2 = 2$, $\gb_2 = 40$, and $\mu_3 = 5$, $\gb_3 = 10$. The arrival rate of drivers is $\lambda = 2$, and that the opportunity cost for drivers' time is $\cost = 1$. Under the first best outcome, one unit of trips to location 1 and one unit of trips to location 2 are completed per unit of time. With $\maxJ = 2$, the average net payoff of drivers under the second best outcome would be $\gb_\maxJ = \gb_2 = 40$, and the equilibrium, steady state queue length is $\equQL = \numSkip_\maxJ = \mu_1 (\gb_1 - \gb_2)/\cost  = 60$.

Assume that riders have a patience level of $\patience = 2$. The appropriate construction of randomized FIFO corresponds to the ordered partition $\loc\1 = \{1\}$ and $\loc\2 = \{2\}$. Now consider a randomized FIFO mechanism associated with the ordered partition $\loc\1 = \{1\}$ and $\loc\2 = \{2, ~3\}$. Constructing the bins according to \eqref{equ:defn_of_binLB} and \eqref{equ:defn_of_binUB}, we have  $\binLB\1 = \binUB\1 = 0$ and
\begin{align*}
	\binLB\2 = \frac{1}{\cost } \mu_1 (\gb_1 - \gb_3) = 90. 
\end{align*}
Note that $\binLB\2$ is higher than the equilibrium queue length $\equQL$ under the second best outcome. We now show that the randomized FIFO mechanism constructed in this way will not achieve the second best as long as $\mechCost > 0$. For a driver in the first bin, i.e. at the head of the queue, the driver is only willing to accept a trip to location~$1$.  When the queue is shorter than $\binLB\2$, trips to location $2$ or $3$ will not be dispatched again by the randomized FIFO mechanism after being rejected by the driver at the head of the queue, thus all but location~$1$ trips become unfulfilled. 
But when the queue is longer than $\binLB\2$, we will not be achieving our second best outcome either, since the total waiting costs incurred by the drivers will be higher than that under the second best outcome, even when we are completing the same set of trips.  
\qed
\end{example}

\section{Additional Simulations} \label{appx:additional_simulations} 

We include in this section descriptions of the dataset made public by the City of Chicago, additional simulation results for O'Hare that are omitted from Section~\ref{sec:simulations} of the paper, as well as simulation results for the Chicago Midway International airport.

\subsection{Chicago O'Hare International Airport} \label{appx:additional_simulations_ohare}

\subsubsection{Trip Volume by Day and Hour-of-Week}

We first provide the volume of trips originating from Chicago O'Hare, and the average duration and earning rates by destination. Figure~\ref{fig:ohare_daily_trips_from_airport} shows the number of trips that originate from the O'Hare airport on each day, from November 1, 2018 to mid March, 2020. We can see strong weekly patterns, seasonality patterns (e.g. low trip volume during Christmas through New Year), and also the sharp decline in trip volume after the onset of the COVID-19 pandemic.

\newcommand{\timeSeriesHeight}{2}

\begin{figure}[t!]
	\centering
    \includegraphics[height= \timeSeriesHeight in] {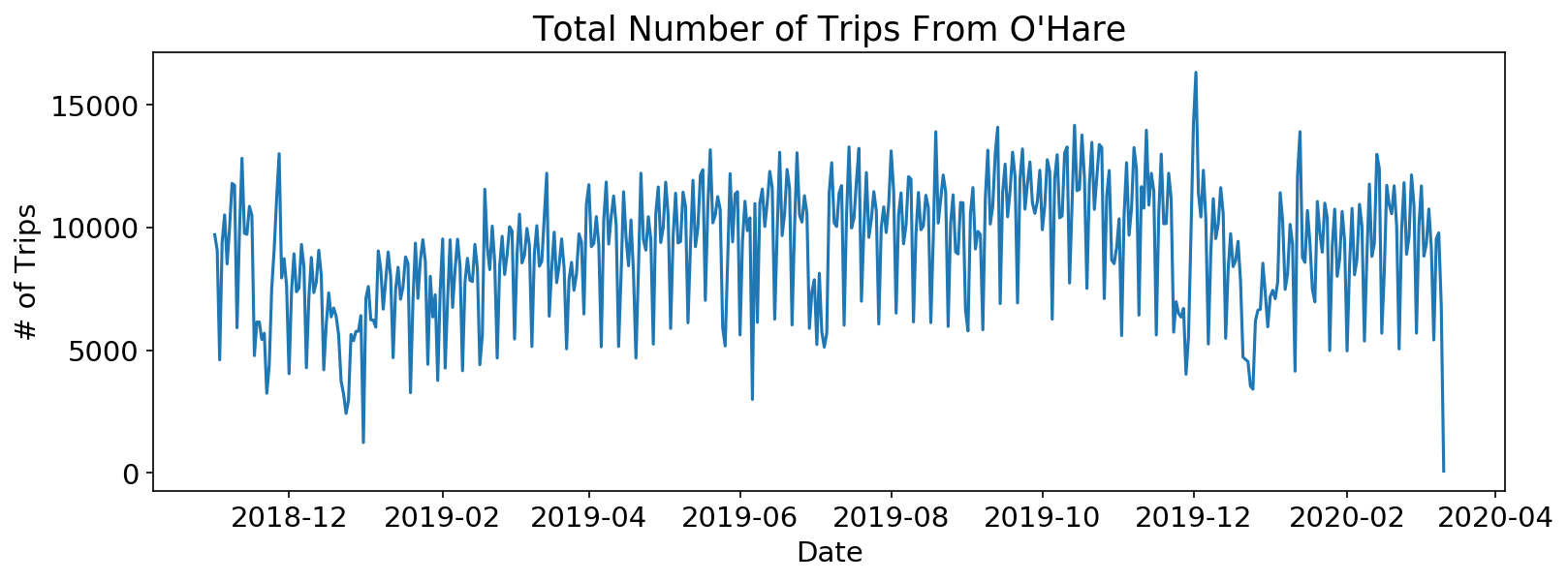}
	\caption{Total number of trips per day from the Chicago O'Hare International Airport.
	\label{fig:ohare_daily_trips_from_airport}}
\end{figure}

The average number of trips originating from O'Hare during each \emph{hour-of-week} is as shown in Figure~\ref{fig:ohare_how_trips_from_airport}. Here, the $0\th$ hour-of-week corresponds to midnight - 1am on Mondays, and the $1^{\mathrm{st}}$ hour-of-week corresponds to 1am - 2am on Mondays, and so on.  We can see that the number of trips originating from the airport peaks during early evenings, averaging around $12$ trips per minute during the weekdays, and reaches a maximum of over $15$ trips per minute on Thursday. Note that these are completed trips, thus the rider request rates are strictly higher.

\begin{figure}[t!]
	\centering
    \includegraphics[height= \timeSeriesHeight in] {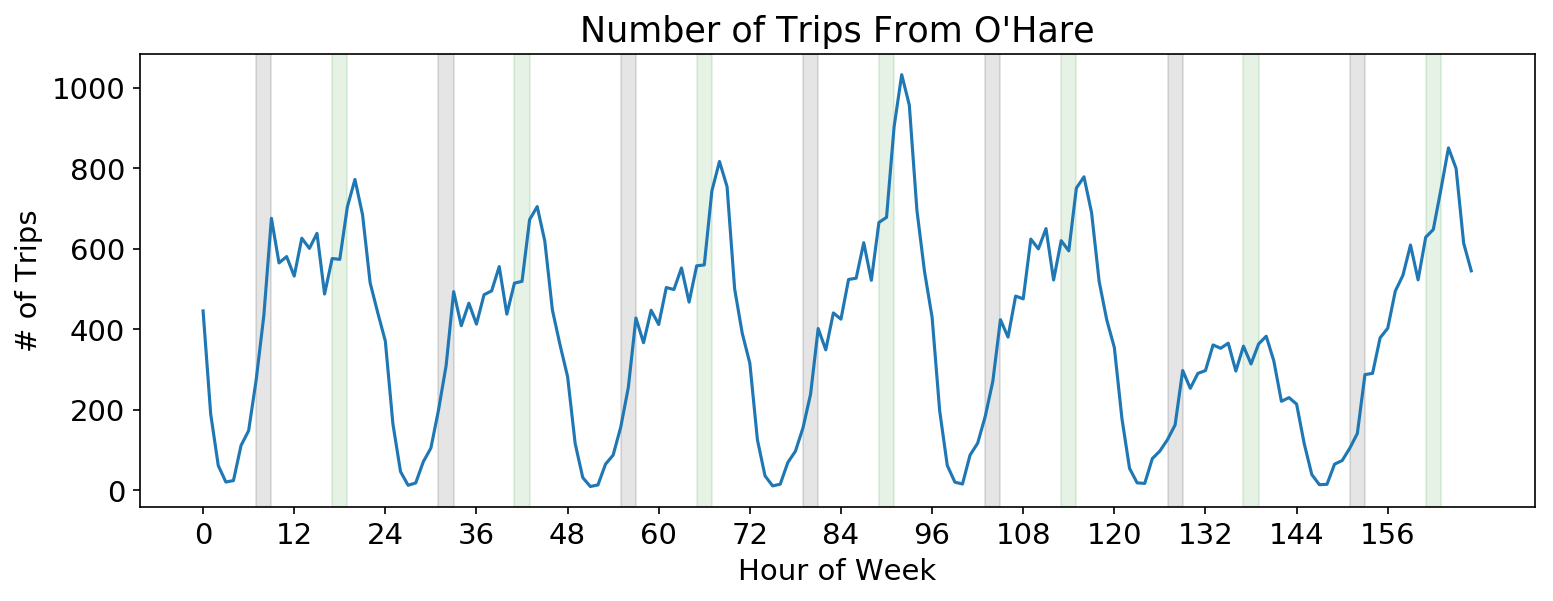}
	\caption{Average number of trips from the O'Hare International Airport, by hour-of-week. The gray stripes indicate the morning rush hours (7am - 9am) and the green stripes indicate the evening rush hours (5pm - 7pm).
	\label{fig:ohare_how_trips_from_airport}}
\end{figure}

\medskip

Figure~\ref{fig:heatmaps_from_ohare_2} illustrates the average duration and the average earning rates (trip fare divided by trip duration) for trips ending in each census tract. We can see that longer trips take more time on average, and trips ending closer to major highways have better earnings rates.

\begin{figure}[t!]
\centering
\begin{subfigure}[t]{\subfigWidthTwo \textwidth}
	\centering
    \includegraphics[height=\heatMapHeight in]{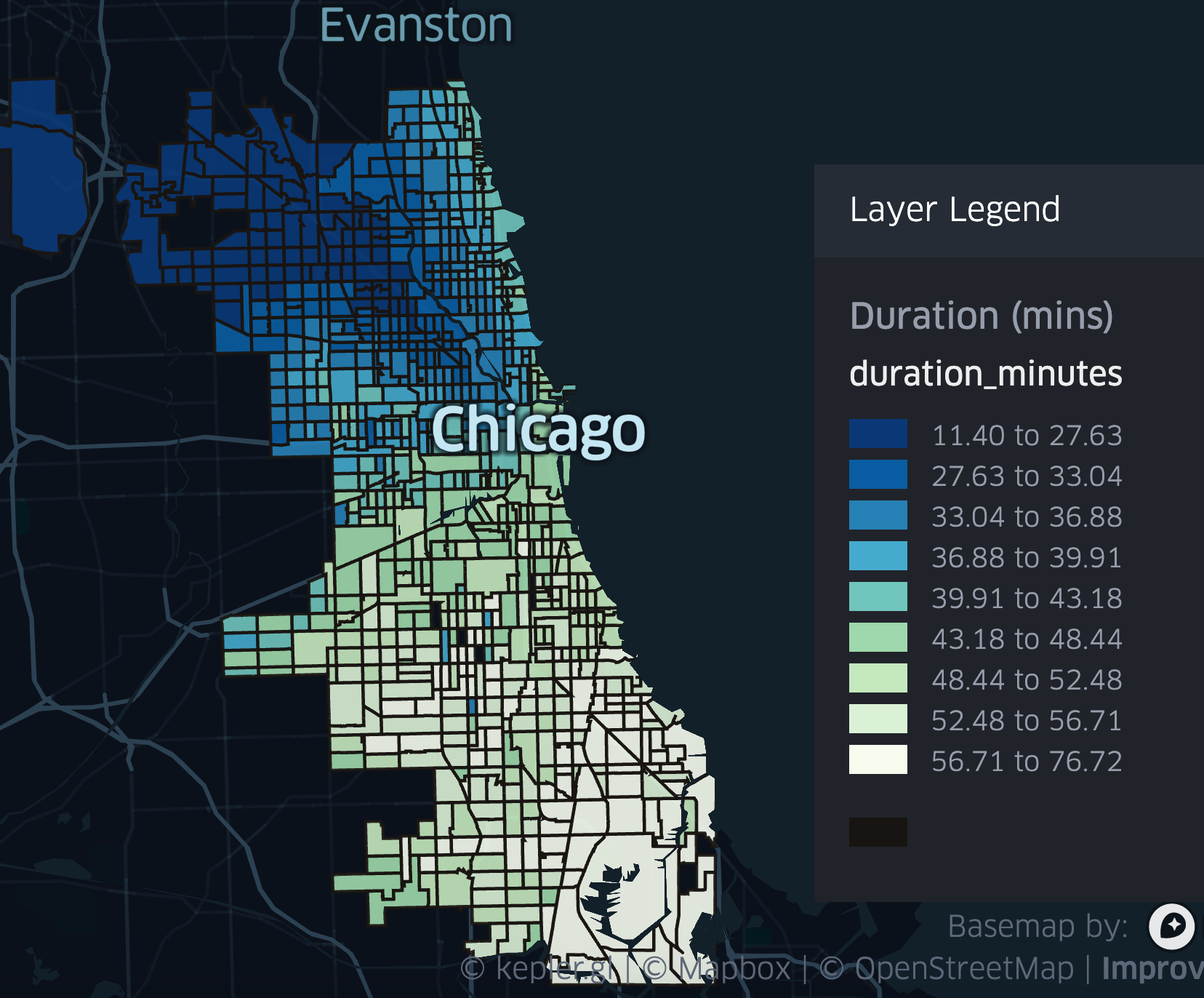}
    \caption{Average trip duration. \label{fig:heatmap_ohare_duration}}
\end{subfigure}%
\begin{subfigure}[t]{\subfigWidthTwo \textwidth}
	\centering
    \includegraphics[height=\heatMapHeight in]{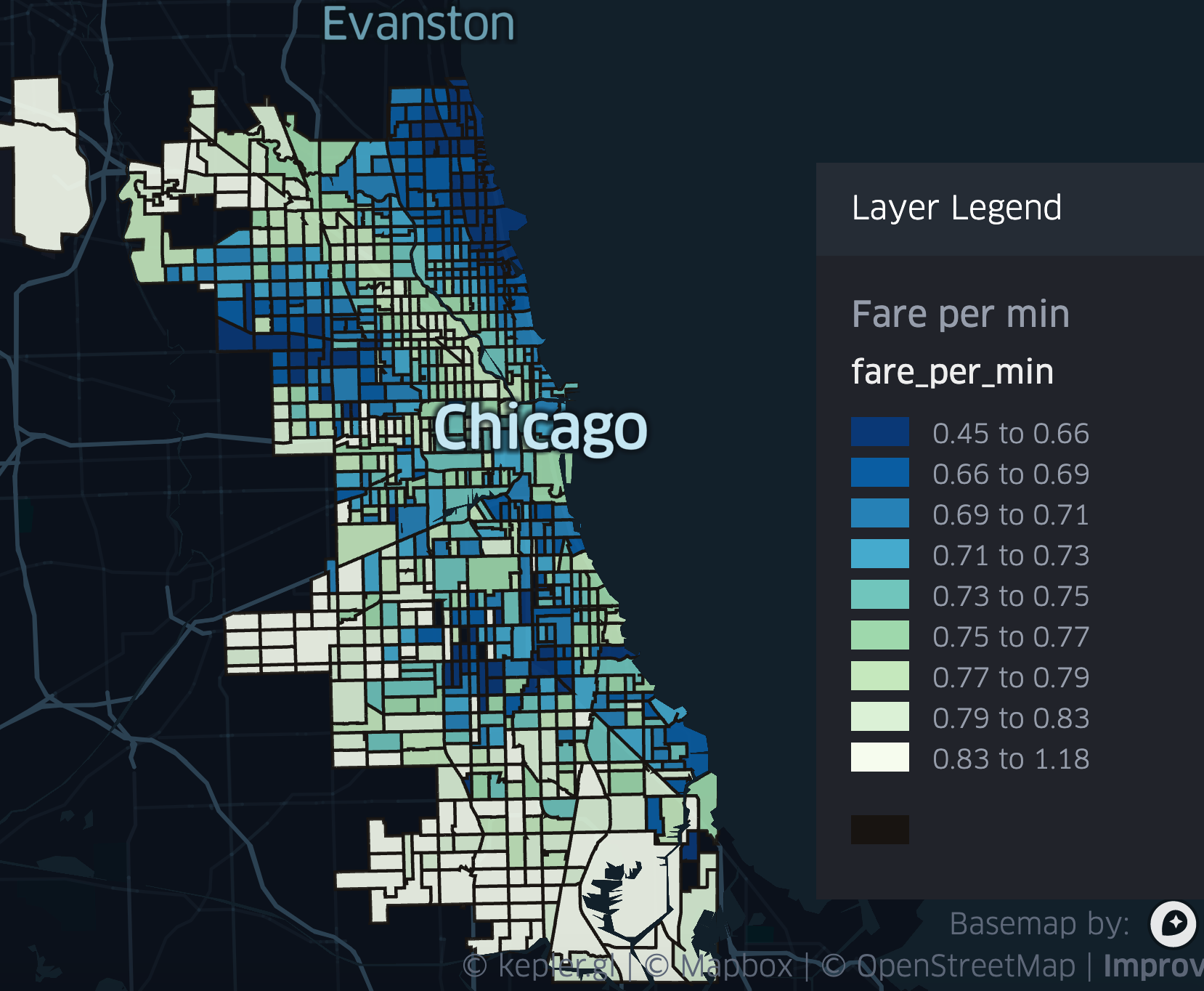}
    \caption{Average fare per minute. \label{fig:heatmap_ohare_fare_per_min}}
\end{subfigure}%
\caption{Average trip duration (in minutes) and the average fare per minute by destination Census Tract in Chicago, for trips originating from the Chicago O'Hare International Airport.
\label{fig:heatmaps_from_ohare_2}}   
\end{figure}

\subsubsection{Counterfactual Simulations}

We now provide additional results for O'Hare that are omitted from %
the body of the paper.

\paragraph{Varying Driver Supply}

As the arrival rate of driver varies, Figure~\ref{fig:varying_lambda_ohare_3} presents the minimum and maximum waiting times for drivers who joined the queue in equilibrium in steady state. For strict FIFO, and direct FIFO, the minimum waiting time is the time a driver needs to wait in the queue for the lowest earning trip that is completed in equilibrium. Under randomized FIFO, the minimum waiting time is the time it takes for a driver to move from the tail of the queue to the %
last bin (i.e. position $\binLB\supK$ in the queue). When the queue is not over-supplied, the minimum waiting times under direct FIFO and randomized FIFO are both zero. 
Under strict FIFO and direct FIFO, the maximum waiting time is the time a driver needs to wait in the queue for a trip to location 1, the highest earning trip.  
Under random dispatching or randomized FIFO with $|\loc\1|>1$, %
there is no upper bound on how long a driver may need to wait in the queue.

\newcommand{\subfigWidth}{0.43}

\begin{figure}[t!]
\centering
\begin{subfigure}[t]{\subfigWidth \textwidth}
	\centering
    \includegraphics[height=\figHeight in]{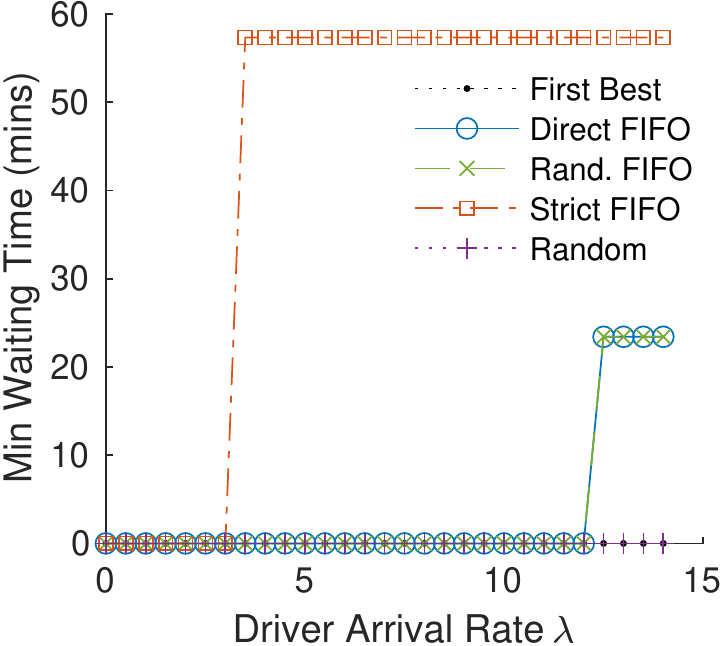}
    \caption{Minimum waiting time. \label{fig:varying_m_min_wait_ohare}}
\end{subfigure}%
\begin{subfigure}[t]{\subfigWidth \textwidth}
	\centering
    \includegraphics[height=\figHeight in]{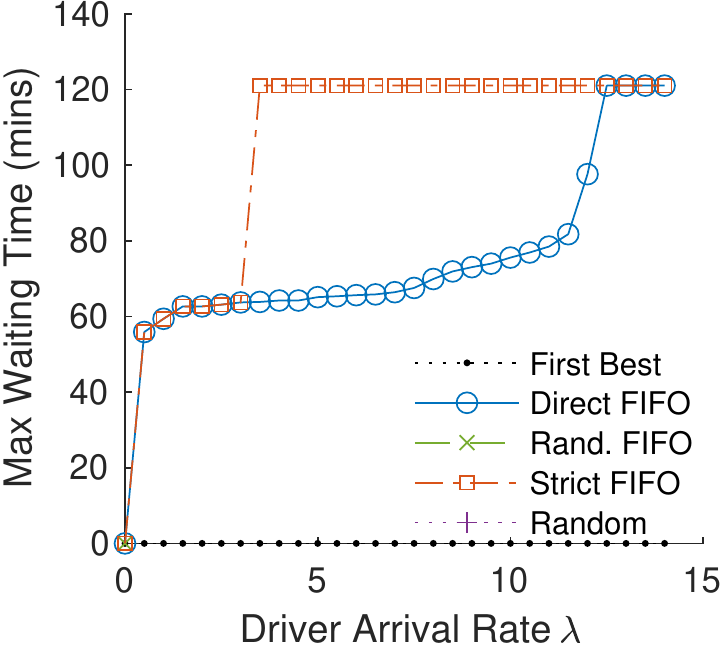}
    \caption{Maximum waiting time. \label{fig:varying_m_max_wait_ohare}}
\end{subfigure}%
\caption{The minimum and maximum waiting time for drivers who join the queue, in equilibrium in steady state, as the arrival rate of drivers varies. Chicago O'Hare. \label{fig:varying_lambda_ohare_3}}   
\end{figure}

\paragraph{Varying Rider Patience}
Figure~\ref{fig:varying_patience_ohare_3} presents the minimum and maximum waiting times for drivers who joined the queue as we vary the patience level of riders. %
Under strict FIFO, the minimum waiting time %
decreases very slowly as riders' patience level increases, despite the fact that %
the minimum waiting time for a trip under every other mechanism is zero.

\begin{figure}[t!]
\centering
\begin{subfigure}[t]{\subfigWidth \textwidth}
	\centering
    \includegraphics[height=\figHeight in]{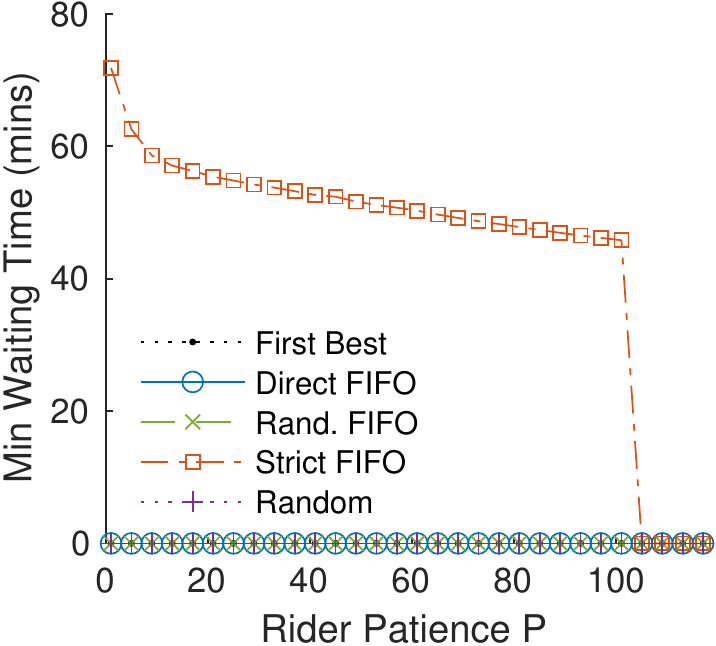}
    \caption{Minimum waiting time. \label{fig:varying_P_min_wait_ohare}}
\end{subfigure}%
\begin{subfigure}[t]{\subfigWidth \textwidth}
	\centering
    \includegraphics[height=\figHeight in]{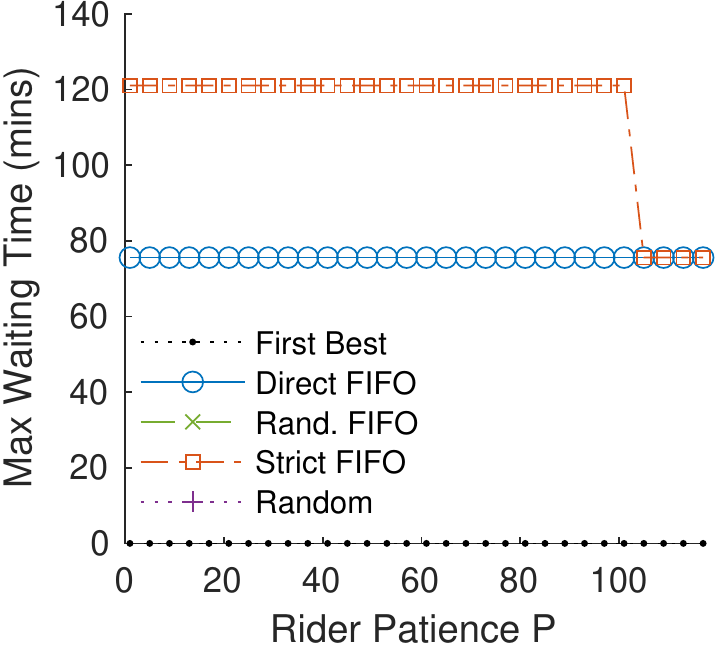}
    \caption{Maximum waiting time. \label{fig:varying_P_max_wait_ohare}}
\end{subfigure}%
\caption{The minimum and maximum waiting time for drivers who join the queue, in equilibrium in steady state, as the patience level of the riders varies. Chicago O'Hare. \label{fig:varying_patience_ohare_3}}   
\end{figure}

\subsection{Chicago Midway International Airport}

In this section, we present simulation results for %
the Chicago Midway International airport. %
Figures~\ref{fig:midway_daily_trips_from_airport} and~\ref{fig:midway_how_trips_from_airport} plot the daily number of trips originating from Midway and the average number of of trips by hour-of-week. The weekly and seasonality patterns are similar to what we observed for O'Hare.
Figure~\ref{fig:heatmaps_from_midway} shows the total trip count by destination census tract, and the estimated net earnings by destination assuming that drivers' opportunity cost is $\cost = 1/3$ per minute.

\begin{figure}[t!]
	\centering
    \includegraphics[height= \timeSeriesHeight in] {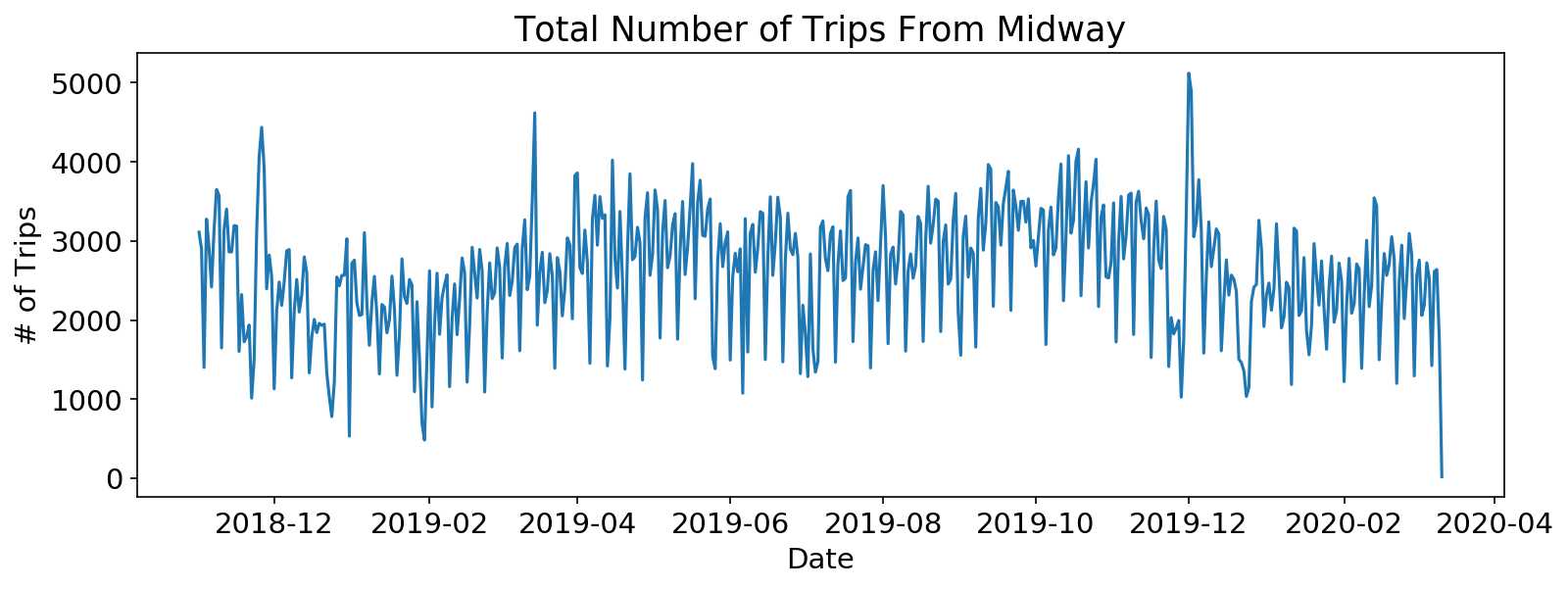}
	\caption{Total number of trips per day from the Chicago Midway International Airport.
	\label{fig:midway_daily_trips_from_airport}}
\end{figure}

\begin{figure}[t!]
	\centering
    \includegraphics[height= \timeSeriesHeight in] {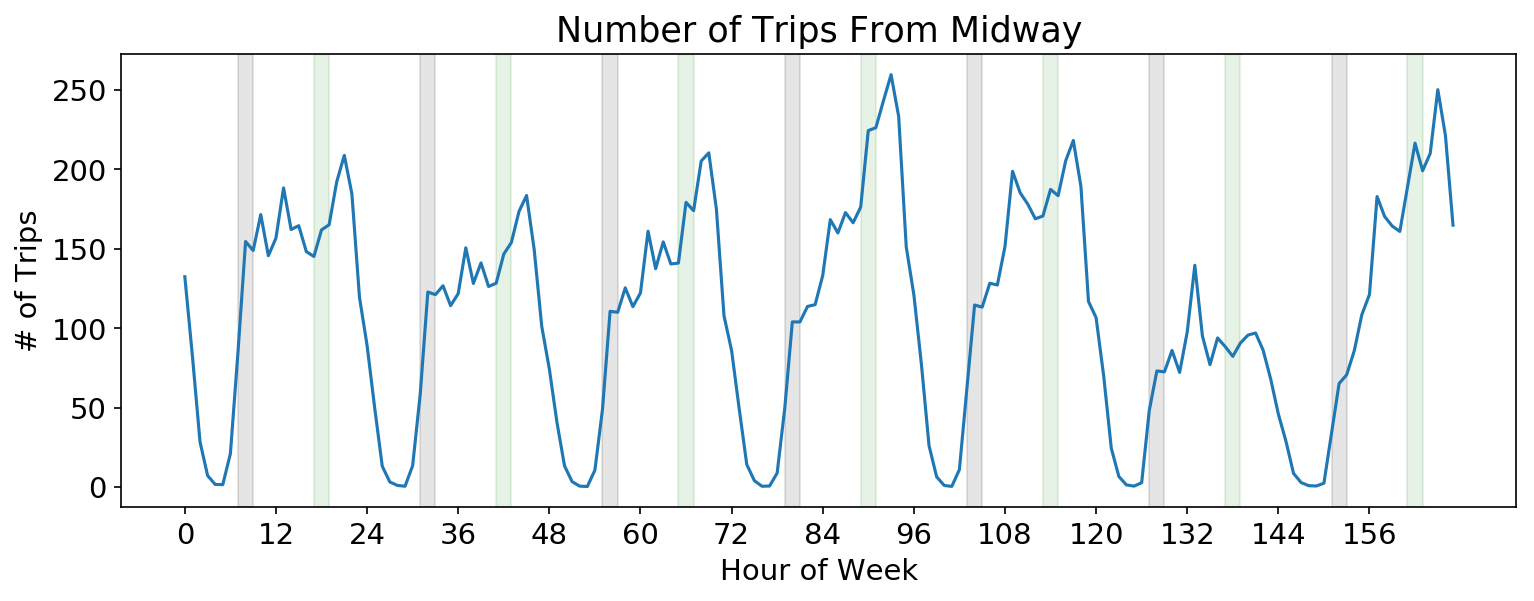}
	\caption{Average number of trips from the Midway International Airport, by hour-of-week.
	\label{fig:midway_how_trips_from_airport}}
\end{figure}

\renewcommand{\subfigWidth}{0.49}

\begin{figure}[t!]
\centering
\begin{subfigure}[t]{\subfigWidth \textwidth}
	\centering
    \includegraphics[height=\heatMapHeight in]{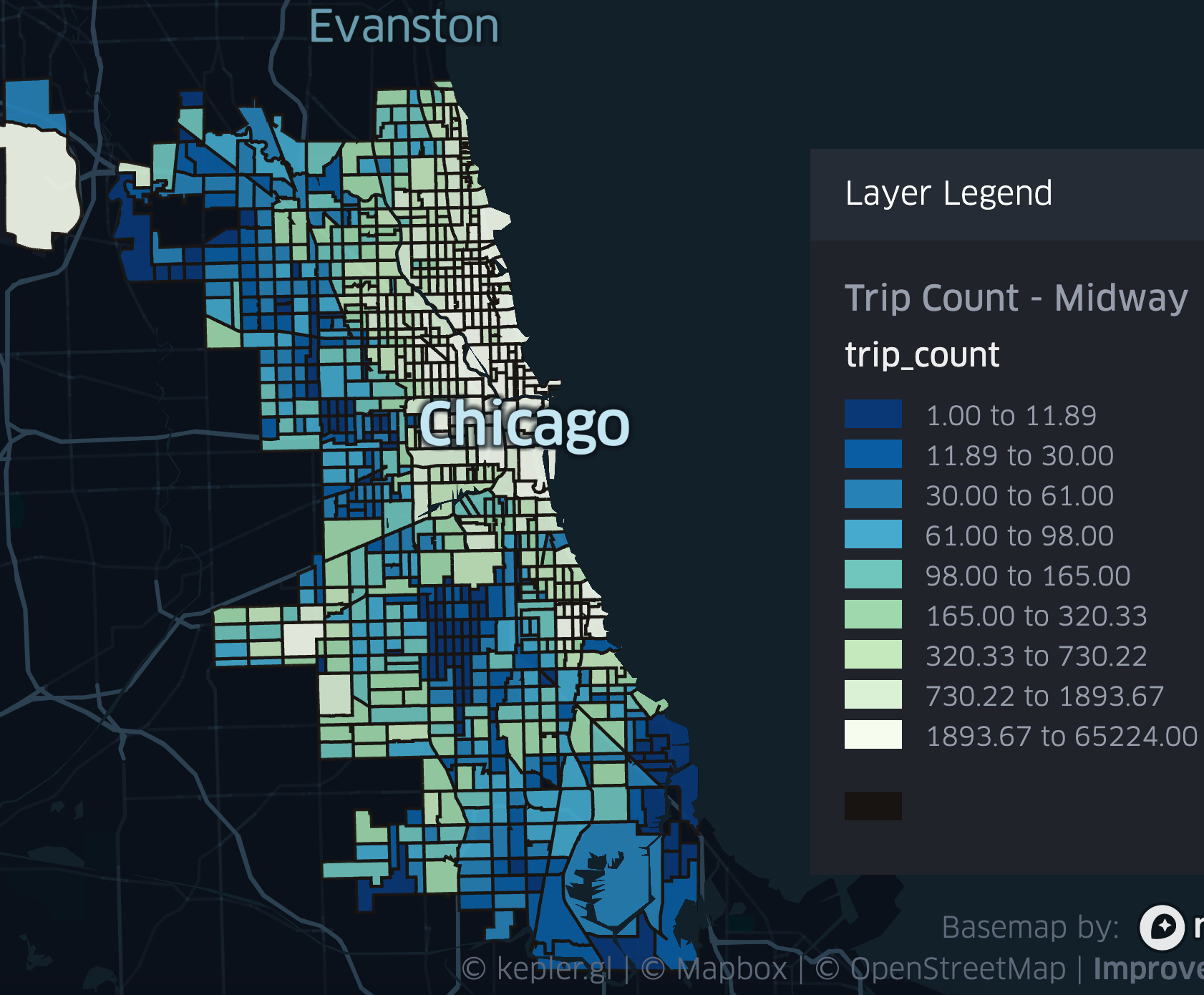}
    \caption{Trip count by destination. \label{fig:heatmap_midway_trip_count}}
\end{subfigure}%
\begin{subfigure}[t]{\subfigWidth \textwidth}
	\centering
    \includegraphics[height=\heatMapHeight in]{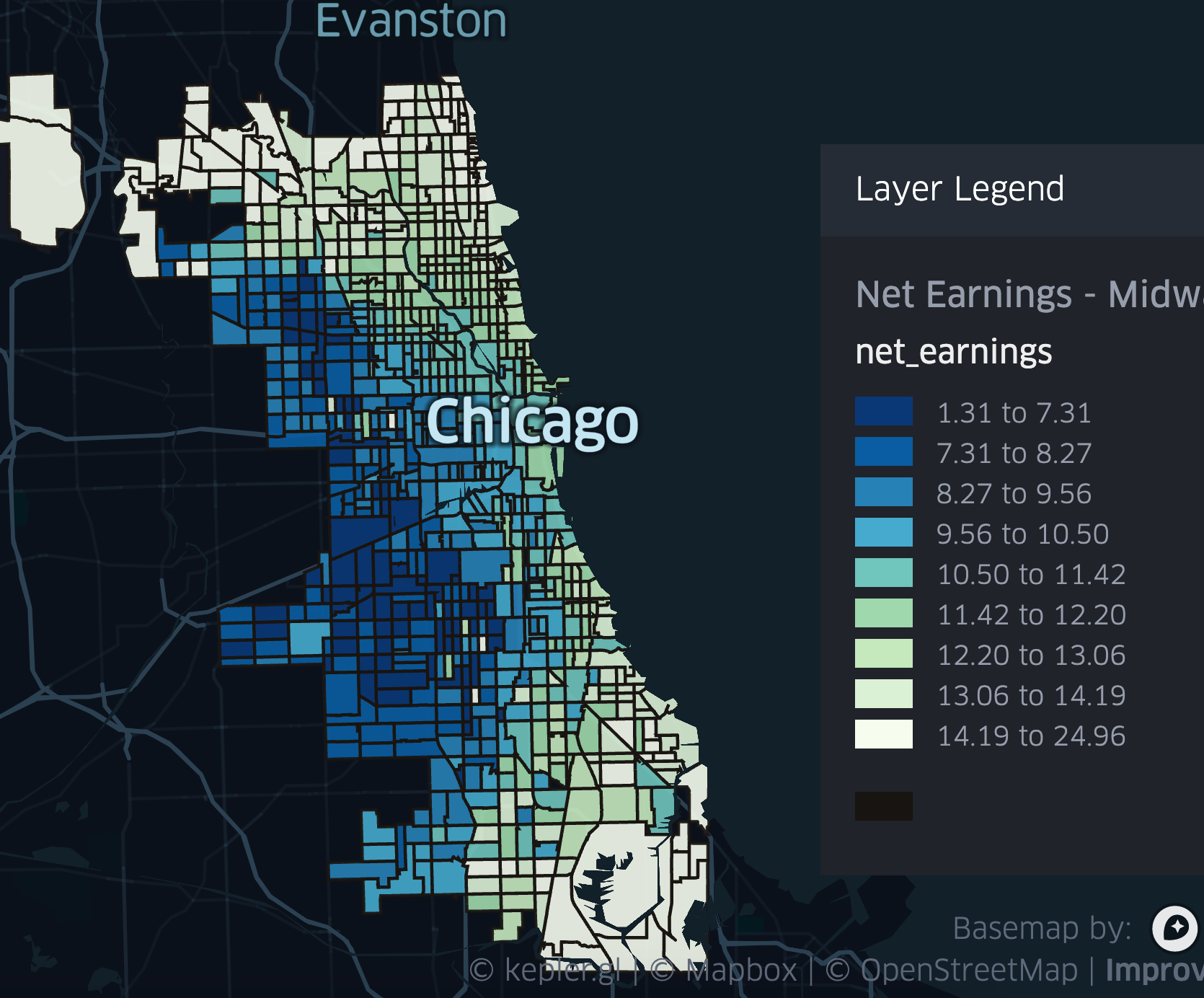}
    \caption{Net earnings by destination. \label{fig:heatmap_midway_net_earnings}}
\end{subfigure}%
\caption{Trip volume and net earnings (assuming $\cost = 1/3$) by destination Census Tract in Chicago, for trips originating from the Chicago Midway International Airport.
\label{fig:heatmaps_from_midway}}   
\end{figure}

\clearpage

\paragraph{Varying Driver Supply}

We now compare the equilibrium, steady state outcome under various mechanisms and benchmarks, as we vary the arrival rate of drivers from $0$ to $6$ drivers per minute. We fix the total arrival rate of riders at $\sum_{i \in \loc}\mu_i = 5$, and the rider patience level at $\patience = 12$. See Figures~\ref{fig:varying_lambda_midway_1}, \ref{fig:varying_lambda_midway_2} and \ref{fig:varying_lambda_midway_3}.  
The observations here are aligned with those for the O'Hare airport presented in Section~\ref{sec:simulations}.

\begin{figure}[t!]
\centering
\begin{subfigure}[t]{\subfigWidthThree \textwidth}
	\centering
    \includegraphics[height= \figHeight in]{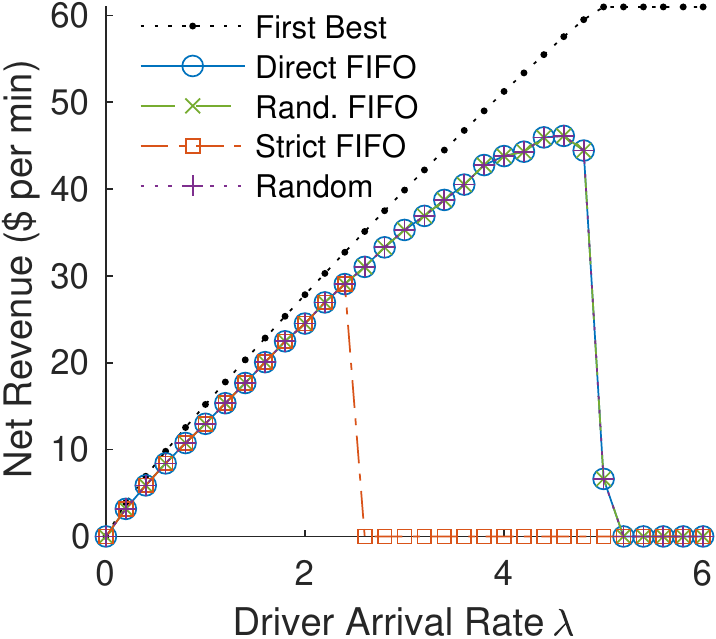}
    \caption{Net revenue.\label{fig:varying_m_gb_midway}}
\end{subfigure}%
\begin{subfigure}[t]{\subfigWidthThree \textwidth}
	\centering
    \includegraphics[height=\figHeight in]{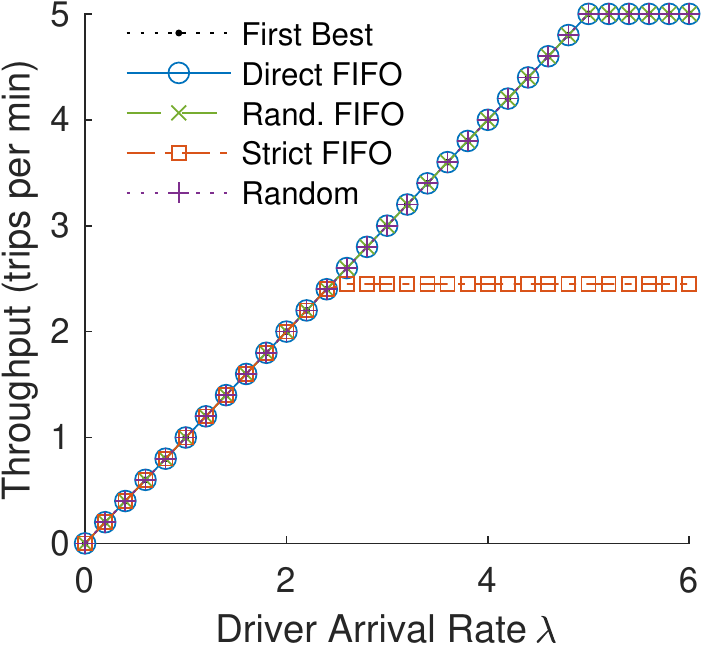}
    \caption{Trip throughput. \label{fig:varying_m_throughput_midway}}
\end{subfigure}%
\begin{subfigure}[t]{\subfigWidthThree \textwidth}
	\centering
    \includegraphics[height=\figHeight in]{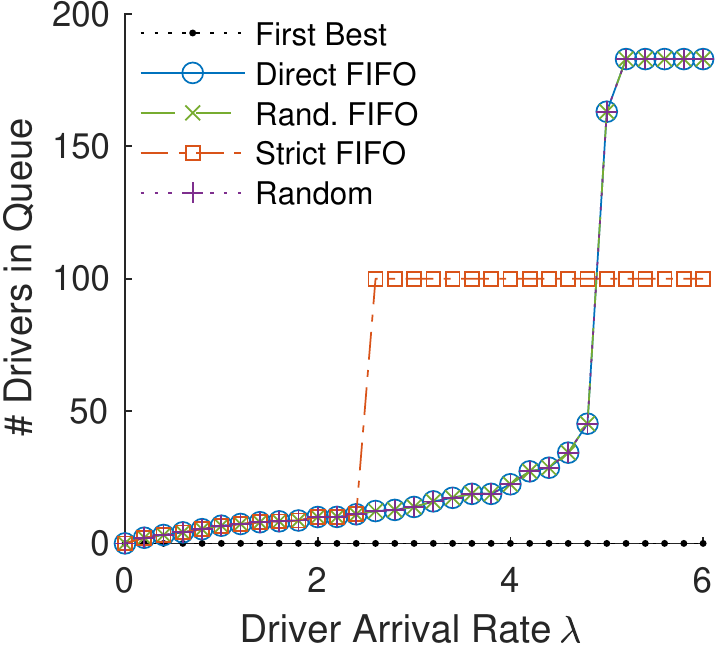}
    \caption{Equilibrium queue length. \label{fig:varying_m_max_q_length_midway}}
\end{subfigure}%
\caption{Equilibrium net revenue, trip throughput, and length of the queue in steady state, as the arrival rate of drivers varies. Chicago Midway. \label{fig:varying_lambda_midway_1}} 
\end{figure}

\begin{figure}[t!]
\centering
\begin{subfigure}[t]{\subfigWidthThree \textwidth}
	\centering
    \includegraphics[height=\figHeight in]{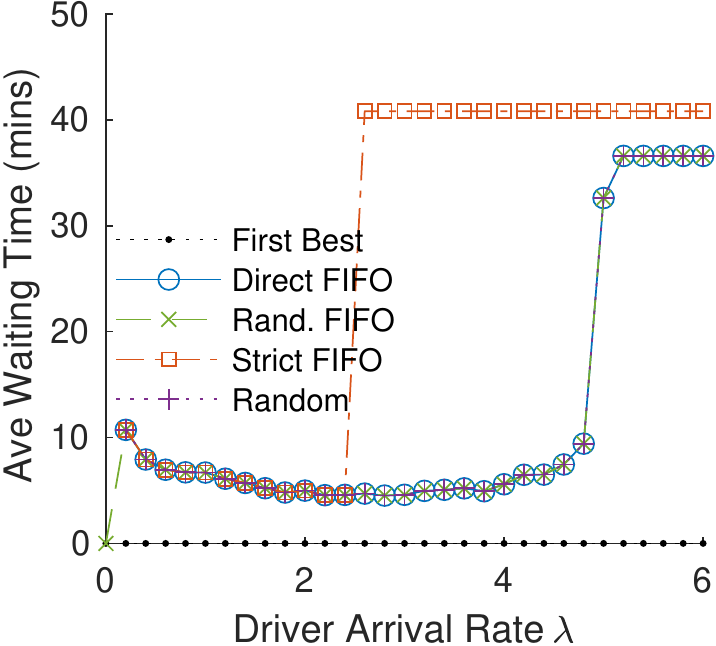}
    \caption{Average waiting time. \label{fig:varying_m_ave_wait_midway}}
\end{subfigure}%
\begin{subfigure}[t]{\subfigWidthThree \textwidth}
	\centering
    \includegraphics[height=\figHeight in]{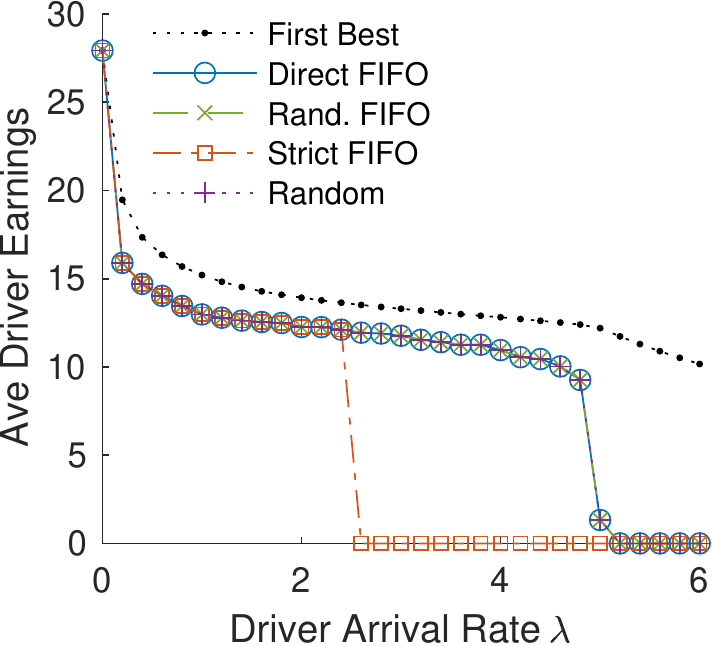}
    \caption{Average driver payoff. \label{fig:varying_m_ave_driver_earning_midway}}
\end{subfigure}%
\begin{subfigure}[t]{\subfigWidthThree \textwidth}
	\centering
    \includegraphics[height=\figHeight in]{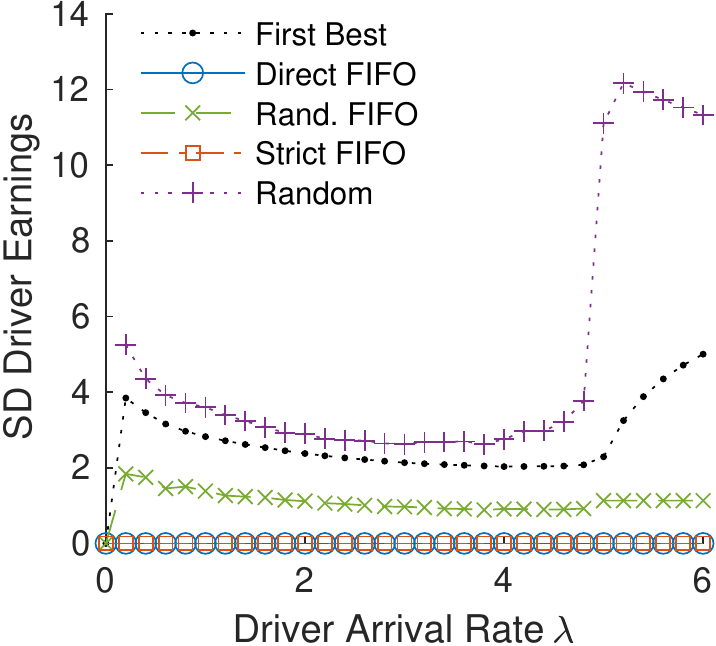}
    \caption{SD of driver payoffs. \label{fig:varying_m_sd_driver_earning_midway}}
\end{subfigure}%
\caption{Drivers' average waiting times, total payoff, and the standard deviation (SD) in drivers' total payoff in equilibrium in steady state, as the arrival rate of drivers varies. Chicago Midway. \label{fig:varying_lambda_midway_2}} 
\end{figure}

\begin{figure}[t!]
\centering
\begin{subfigure}[t]{\subfigWidth \textwidth}
	\centering
    \includegraphics[height=\figHeight in]{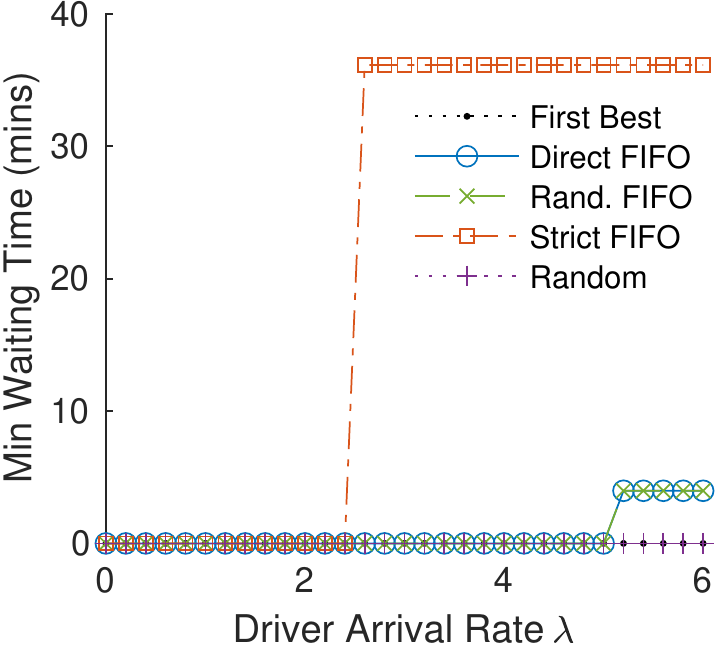}
    \caption{Minimum waiting time. \label{fig:varying_m_min_wait_midway}}
\end{subfigure}%
\begin{subfigure}[t]{\subfigWidth \textwidth}
	\centering
    \includegraphics[height=\figHeight in]{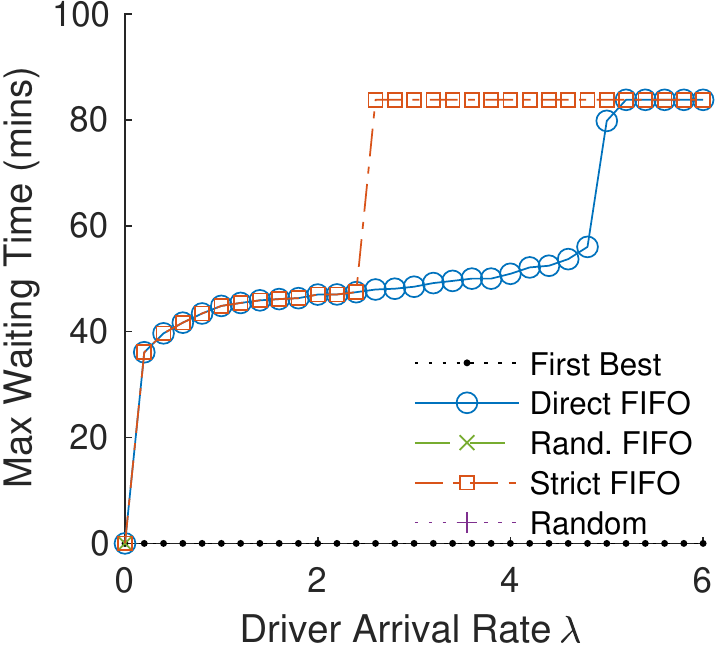}
    \caption{Maximum waiting time. \label{fig:varying_m_max_wait_midway}}
\end{subfigure}%
\caption{The minimum and maximum waiting time for drivers who join the queue, in equilibrium in steady state, as the arrival rate of drivers varies. Chicago Midway. \label{fig:varying_lambda_midway_3}}   
\end{figure}

\paragraph{Varying Rider Patience}

Fixing the arrival rate of drivers at $\lambda = 4$,  %
Figures~\ref{fig:varying_patience_midway_1}, \ref{fig:varying_patience_midway_2} and~\ref{fig:varying_patience_midway_3} compare the equilibrium outcome under various mechanisms and benchmarks as we vary the patience level of the riders.
The observations are, again, fully aligned with those for the O'Hare airport presented in Section~\ref{sec:simulations} of the paper.  

\begin{figure}[t!]
\centering
\begin{subfigure}[t]{\subfigWidthThree \textwidth}
	\centering
    \includegraphics[height= \figHeight in]{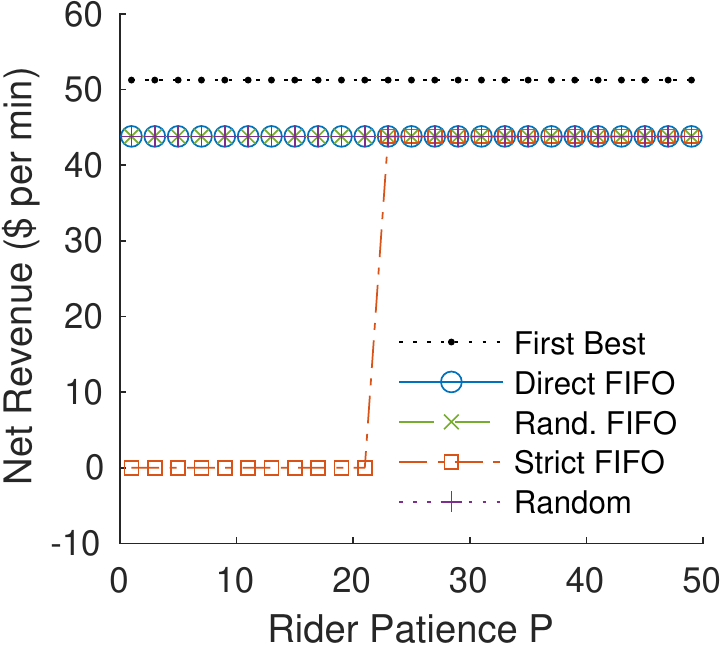}
    \caption{Net revenue.\label{fig:varying_P_gb_midway}}
\end{subfigure}%
\begin{subfigure}[t]{\subfigWidthThree \textwidth}
	\centering
    \includegraphics[height=\figHeight in]{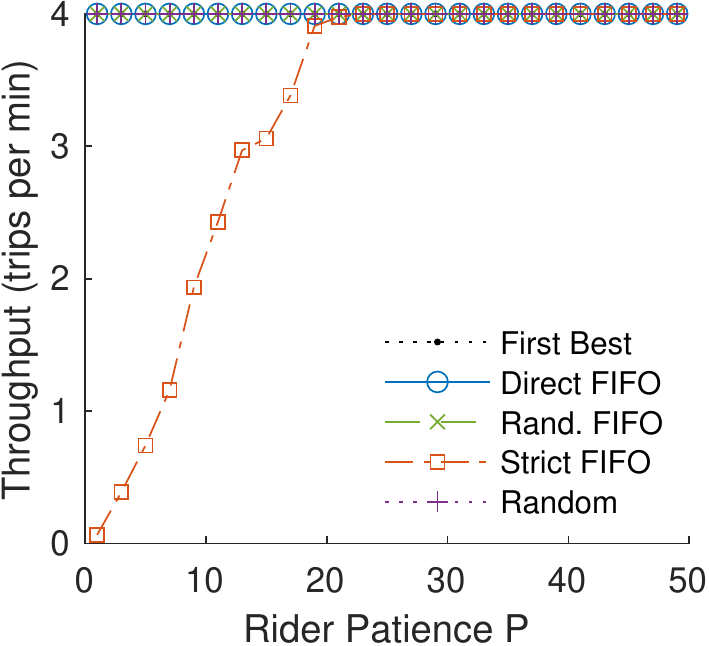}
    \caption{Trip throughput. \label{fig:varying_P_throughput_midway}}
\end{subfigure}%
\begin{subfigure}[t]{\subfigWidthThree \textwidth}
	\centering
    \includegraphics[height=\figHeight in]{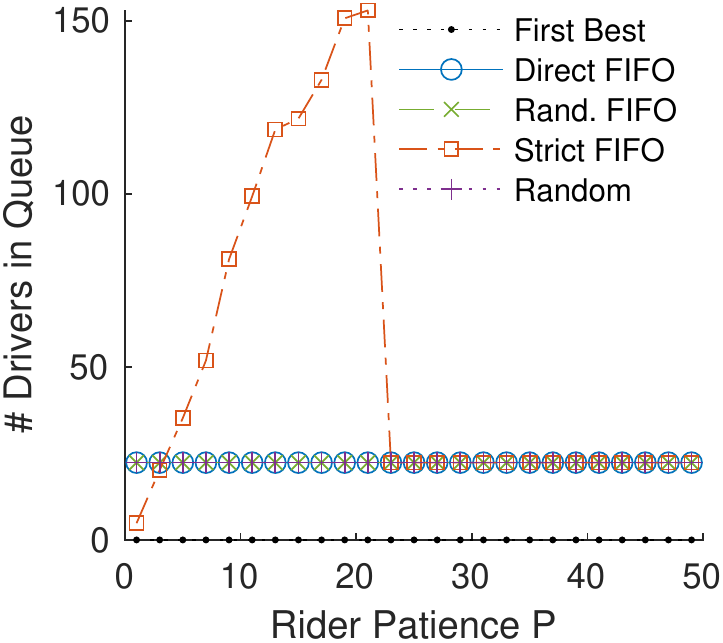}
    \caption{Equilibrium queue length. \label{fig:varying_P_max_q_length_midway}}
\end{subfigure}%
\caption{Equilibrium net revenue, trip throughput, and length of the queue in steady state, as the rider patience level varies. Chicago Midway. \label{fig:varying_patience_midway_1}} 
\end{figure}

\begin{figure}[t!]
\centering
\begin{subfigure}[t]{\subfigWidthThree \textwidth}
	\centering
    \includegraphics[height=\figHeight in]{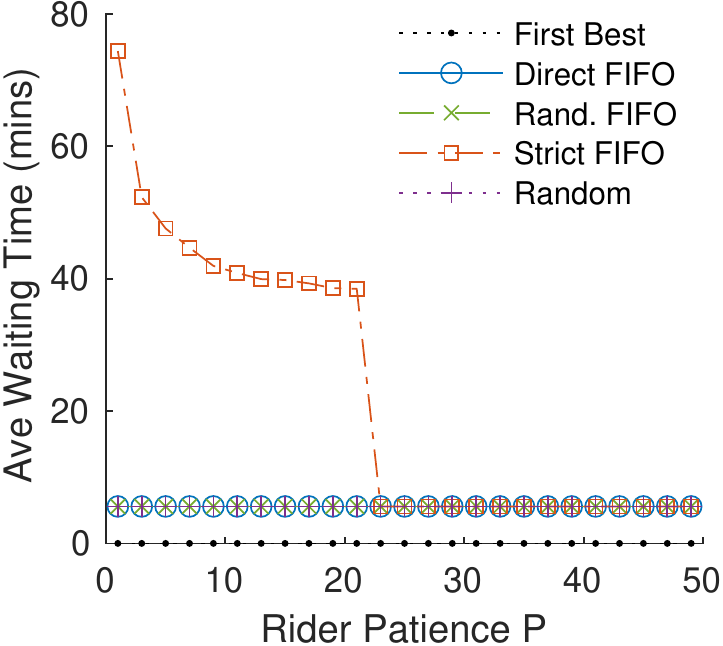}
    \caption{Average waiting time. \label{fig:varying_P_ave_wait_midway}}
\end{subfigure}%
\begin{subfigure}[t]{\subfigWidthThree \textwidth}
	\centering
    \includegraphics[height=\figHeight in]{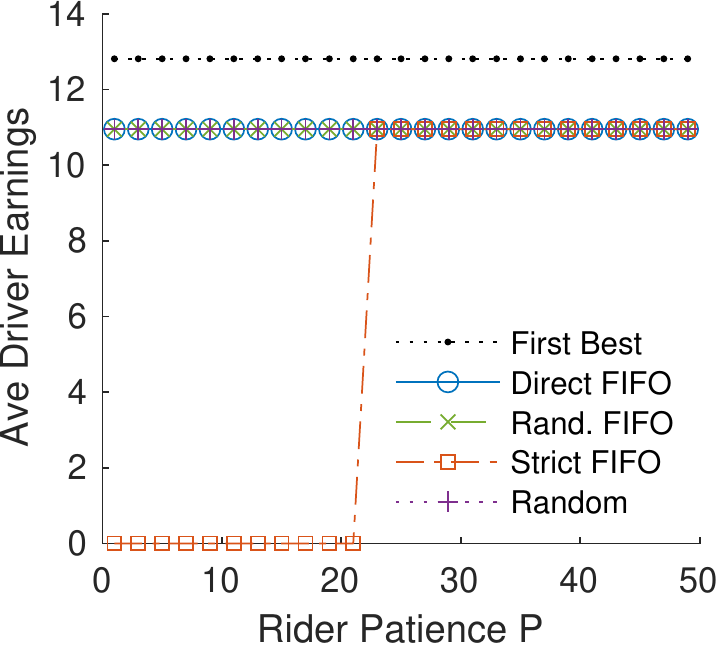}
    \caption{Average driver payoff. \label{fig:varying_P_ave_driver_earning_midway}}
\end{subfigure}%
\begin{subfigure}[t]{\subfigWidthThree \textwidth}
	\centering
    \includegraphics[height=\figHeight in]{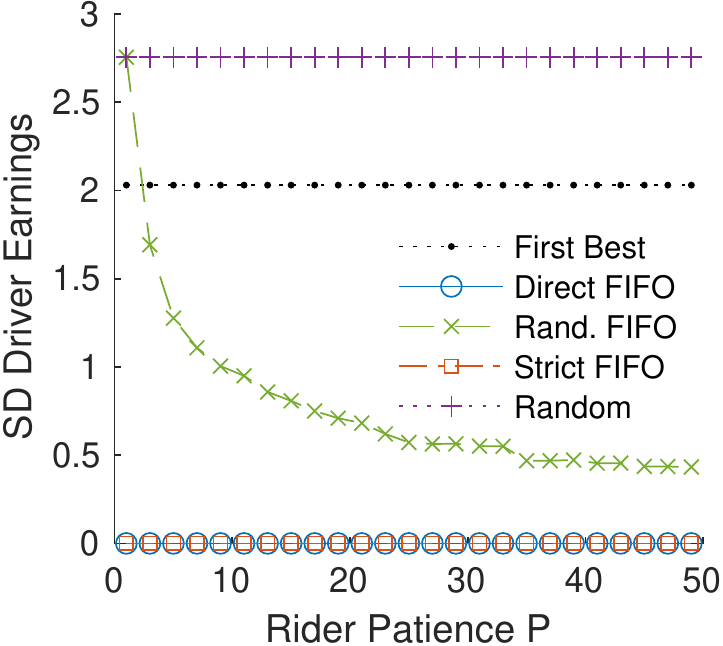}
    \caption{SD of driver payoffs. \label{fig:varying_P_sd_driver_earning_midway}}
\end{subfigure}%
\caption{Drivers' average waiting times, total payoff, and the standard deviation (SD) in drivers' total payoff in equilibrium in steady state, as the rider patience level varies. Chicago Midway. \label{fig:varying_patience_midway_2}}  %
\end{figure}

\begin{figure}[t!]
\centering
\begin{subfigure}[t]{\subfigWidth \textwidth}
	\centering
    \includegraphics[height=\figHeight in]{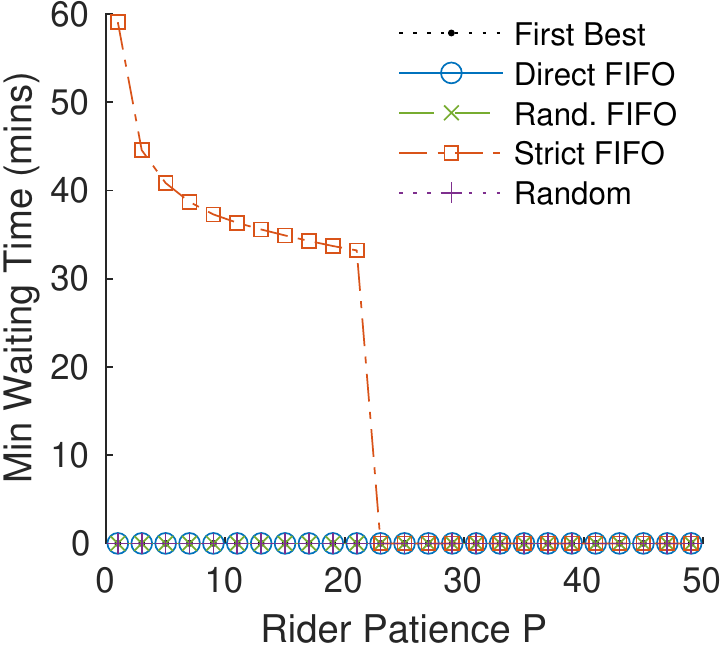}
    \caption{Minimum waiting time. \label{fig:varying_P_min_wait_midway}}
\end{subfigure}%
\begin{subfigure}[t]{\subfigWidth \textwidth}
	\centering
    \includegraphics[height=\figHeight in]{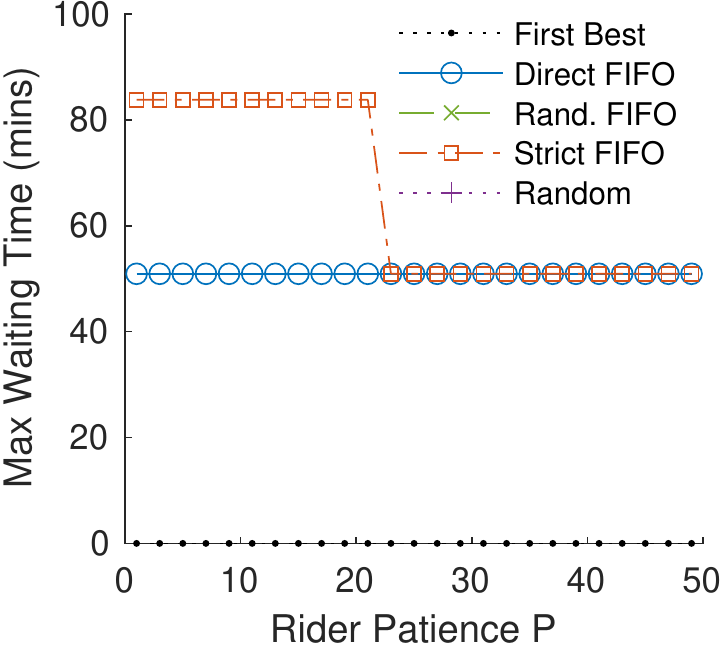}
    \caption{Maximum waiting time. \label{fig:varying_P_max_wait_midway}}
\end{subfigure}%
\caption{The minimum and maximum waiting time for drivers who join the queue, in equilibrium in steady state, as the patience level of the riders varies. Chicago Midway. \label{fig:varying_patience_midway_3}}   
\end{figure}

\end{document}